\date{March 9, 2016}
\numberwithin{equation}{section}
\newcommand{\be}{\begin}
\newcommand{\e}{\end}
\newcommand{\beq}{\begin{equation}}
\newcommand{\eeq}{\end{equation}}
\newcommand{\beqs}{\begin{equation*}}
\newcommand{\eeqs}{\end{equation*}}
\renewcommand{\l}{\left}
\renewcommand{\r}{\right}
\renewcommand{\d}{\mathrm{d}} %straight d for integrals
\newcommand{\vt}{\vartheta}
\newcommand{\vp}{\varphi}
\newcommand{\vr}{\varrho}
\newcommand{\Arg}{\textnormal{Arg}}
\newcommand{\p}{\mathbf{p}}
\newcommand{\x}{\mathbf{x}}
\newcommand{\y}{\mathbf{y}}
\newcommand{\set}[1]{\mathbb{#1}}
\newcommand{\rt}{\set{R}^3}
\newcommand{\curly}[1]{\mathcal{#1}}
\newcommand{\goth}[1]{\mathfrak{#1}}
\newcommand{\setof}[2]{\left\{ #1\; : \;#2 \right\}}
\newcommand{\om}{\omega}
\newcommand{\eps}{\varepsilon}
\newcommand{\lam}{\lambda}
\newcommand{\gam}{\gamma}
\newcommand{\Gam}{\Gamma}
\newcommand{\al}{\alpha}
\newcommand{\de}{\delta}
\newcommand{\Del}{\Delta}
\newcommand{\ttmatrix}[4]{\left(\be{array}{cc} #1&#2\\	#3&#4 \e{array}	\right)}
\newcommand{\scp}[2]{\langle#1,#2\rangle}
\newcommand{\spa}{\textnormal{span}}
\newcommand{\jap}[1]{\langle #1\rangle}
\renewcommand{\it}{\infty}
\newcommand{\del}{\partial}
\newcommand{\dpp}{\,\d \p} %integration measure
\newcommand{\dx}{\,\d \x}
\newcommand{\ol}{\overline}
\newcommand{\Tra}[2]{\textnormal{Tr}_{#1}\left[#2\right]}	%Trace
\newcommand{\Tr}[1]{\textnormal{Tr}\left[#1\right]}	%Trace
\renewcommand{\Tr}[1]{\int \Tra{\set{C}^2}{#1} \dpp}
\newcommand{\ft}[1]{\widehat{{#1}}}			%Fourier Transform
\newtheorem{thm}{Theorem}[section]
\newtheorem{lm}[thm]{Lemma}
\theoremstyle{definition}
\theoremstyle{remark}
\def\dotuline{\bgroup
  \ifdim\ULdepth=\maxdimen  % Set depth based on font, if not set already
   \settodepth\ULdepth{(j}\advance\ULdepth.4pt\fi
  \markoverwith{\begingroup
  \advance\ULdepth0.08ex
  \lower\ULdepth\hbox{\kern.15em .\kern.1em}%
  \endgroup}\ULon}
\def\dashuline{\bgroup
  \ifdim\ULdepth=\maxdimen  % Set depth based on font, if not set already
   \settodepth\ULdepth{(j}\advance\ULdepth.4pt\fi
  \markoverwith{\kern.15em
  \vtop{\kern\ULdepth \hrule width .3em}%
  \kern.15em}\ULon}
\begin{document}

\title{Multi-component Ginzburg-Landau theory: microscopic derivation and examples}
\author{Rupert L. Frank\thanks{rlfrank@caltech.edu} }
\author{Marius Lemm\thanks{mlemm@caltech.edu} }
\affil{Department of Mathematics, Caltech}

\maketitle

\abstract{This paper consists of three parts. In part I, we microscopically derive Ginzburg--Landau (GL) theory from BCS theory for trans\-lation-invariant systems in which multiple types of superconductivity may coexist. Our motivation are unconventional superconductors. We allow the ground state of the effective gap operator $K_{T_c}+V$ to be $n$-fold degenerate and the resulting GL theory then couples $n$ order parameters. 

In part II, we study examples of multi-component GL theories which arise from an isotropic BCS theory. We study the cases of (a) pure $d$-wave order parameters and (b) mixed $(s+d)$-wave order parameters, in two and three dimensions.

In part III, we present explicit choices of spherically symmetric interactions $V$ which produce the examples in part II. In fact, we find interactions $V$ which produce ground state sectors of $K_{T_c}+V$ of \emph{arbitrary} angular momentum, for open sets of of parameter values. This is in stark contrast with Schr\"odinger operators $-\nabla^2+V$, for which the ground state is always non-degenerate. Along the way, we prove the following fact about Bessel functions: At its first maximum, a half-integer Bessel function is strictly larger than all other half-integer Bessel functions. 
}

\setcounter{tocdepth}{1}
\tableofcontents

\section{Introduction}
Since its advent in 1950 \cite{GL50}, Ginzburg--Landau (GL) theory has become ubiquitous in the description of superconductors and superfluids near their critical temperature $T_c$. GL theory is a phenomenological theory that describes the superconductor on a macroscopic scale.
Apart from being a very successful physical theory, it also has a rich mathematical structure which has been extensively studied, see e.g.\ \cite{CorreggiRougerie14, FournaisHelffer, GustafsonSigal, SandierSerfaty} and references therein. Microscopically, superconductivity arises due to an effective attraction betweeen electrons, causing them to condense into Cooper pairs. In 1957 Bardeen, Cooper and Schrieffer \cite{BCS57}, were the first to explain the origin of the attractive interaction in crystalline $s$-wave superconductors. By integrating out phonon modes, they arrived at their effective ``BCS theory'', in which one restricts to a certain class of trial states now known as BCS states. In 1959, Gor'kov \cite{Gorkov59} argued how the microscopic BCS theory with a rank-one interaction gives rise to the macroscopic GL theory near $T_c$. An alternative argument is due to de Gennes \cite{deGennes}.

The first mathematically rigorous proof that Ginzburg--Landau theory arises from BCS theory,
 on macroscopic length scales and for temperatures close to $T_c$, was given in \cite{FrankHainzlSeiringerSolovej12} under the non-degeneracy assumption that there is only one type of superconductivity present in the system. The derivation there allows for local interactions and external fields and hence applies to superfluid ultracold Fermi gases, a topic of considerable current interest.

In the present paper, we use the same formalism as in \cite{FrankHainzlSeiringerSolovej12} and study microscopically derived Ginzburg--Landau theories involving multiple types of superconductivity for systems without external fields.

We first discuss the main result of part I, which forms the basis for the applications in parts II and III. Afterwards, we discuss the physical motivation for studying multi-component GL theories and the extent to which our model applies to realistic systems. The introduction closes with a description of the main results of parts II and III.

\subsection{Main result of part I}	
As in \cite{FrankHainzlSeiringerSolovej12}, we employ a variational formulation of BCS theory \cite{BachLiebSolovej94, Leggett79} with an isotropic electronic dispersion relation.  We use previous rigorous results about this theory in the absence of external fields \cite{FrankHainzlNabokoSeiringer07, HainzlHamzaSeiringerSolovej08, HainzlSeiringer08PR, HainzlSeiringer08LMP}. Particularly important is the result of \cite{HainzlHamzaSeiringerSolovej08} that the critical temperature $T_c$ can be characterized by the following linear criterion. $T_c$ is the unique value of $T\geq 0$ for which the ``effective gap operator'' 
$$
\label{eq:KT}
K_T+V(\x)=\frac{-\nabla^2-\mu}{\tanh\l(\frac{-\nabla^2-\mu}{2T}\r)}+V(\x)$$
 has zero as its lowest eigenvalue. Here $V$ is the electron-electron interaction potential. Throughout the  microscopic derivation of GL theory in \cite{FrankHainzlSeiringerSolovej12}, it is \emph{assumed that zero is a non-degenerate eigenvalue} of $K_{T_c}+V$. For radially symmetric $V$, this means that the order parameter is an $s$-wave, i.e.\ it is spherically symmetric. 
 
 The \textbf{main result of part I}, Theorem \ref{thm:mainTI}, is that for systems without external fields the microscopic derivation of GL theory also holds when the eigenvalue is \emph{degenerate} of arbitrary order $n>1$. (A general argument shows that always $n<\it$.) The arising GL theory now features precisely $n$ order parameters $\psi_1,\ldots, \psi_n$. It turns out that one can use the same general strategy as in \cite{FrankHainzlSeiringerSolovej12}. 
 
In fact, one can classify approximate minimizers of the BCS free energy via the GL theory. Given an orthonormal basis $\{a_1,\ldots, a_n\}$ of $\ker (K_{T_c}+V)$, Theorem \ref{thm:mainTI} (ii) says that, near the critical temperature, the Cooper pair wave function $\al$ of a BCS state of almost minimal free energy (i.e.\ the Cooper pair wave function realized by the physical system) is approximately given by a linear combination of the $\{a_1,\ldots, a_n\}$ of the form
\beq
\label{eq:content}
		\al \approx \sqrt{\frac{T_c-T}{T_c}} \sum_{j=1}^n \psi_j a_j,
\eeq
where the ``amplitudes'' $\psi_1,\ldots, \psi_n\in\set{C}$ almost minimize the corresponding GL function.

The results of \cite{FrankHainzlSeiringerSolovej12} allow for the presence of weak external fields which vary on the macroscopic scale. A key step is to establish semiclassical estimates under weak regularity assumptions. We emphasize that in our case the system has no external fields and is therefore translation-invariant. This simplifies several technical difficulties present in \cite{FrankHainzlSeiringerSolovej12}. In particular, the semiclassical analysis of \cite{FrankHainzlSeiringerSolovej12} reduces to an ordinary Taylor expansion. The result of the expansion is stated as Theorem \ref{thm:SC} and we give the simplified proof for the translation-invariant situation. We do this (a) to obtain optimal error bounds and (b) to hopefully make the emergence of GL theory more transparent in our technically simpler situation. 

\subsection{Physical motivation} 
\paragraph{Background.}
The degenerate case corresponds to systems which have \emph{multiple order parameters}, i.e.\ which can host multiple types of superconductivity. Physically, this situation occurs e.g.\ for \emph{unconventional superconductors}. By definition, these are materials in which an effective attractive interaction of electrons leads to the formation of Cooper pairs, but the effective attraction is not produced by the usual electron-phonon interactions. (Identifying the underlying mechanisms is a major open problem in condensed matter physics.)

Two important classes of unconventional superconductors are the layered cuprates and iron-based compounds, typically designed to have large values of $T_c$ (``high-temperature superconductors''). Many of these materials possess tetragonal lattice symmetry, though the prominent example of YBCO has orthorhombic symmetry. There is strong experimental evidence for the occurrence of $d$-wave order parameters in these materials, in contrast to the pure $s$-wave order parameter in conventional superconductors. More precisely, phase-sensitive experiments with Josephson junctions \cite{Kirtleyetal95,Kirtleyetal06,KirtleyTsuei00,Tsueietal97,Wollmanetal93} have evidenced the presence of a $d_{x^2-y^2}$-wave order parameter (for tetragonal symmetry) and of mixed $(s+d_{x^2-y^2})$-wave order parameters (for orthorhombic symmetry). 

There also exist proposals of $d$-wave super\emph{fluidity} for molecules in optical lattices \cite{Kunsetal}.

\paragraph{Multi-component Ginzburg--Landau theories.}
On the theoretical side, one of the most important tools for studying unconventional superconductors are multi-component Ginzburg-Landau theories \cite{Askerzade03, Berlinskyetal95,Joynt90,Leggett, XuRenTing95,  SoininenKallinBerlinsky94, Volovik93,  WangWangLi99,XuRenTing96, ZhuKimTingHu98}. Many of these papers study the symmetry properties near the vortex cores in two-component GL theories. A very common example is a GL theory with $(s+{d_{x^2-y^2}})$-wave order parameters; this case has also been studied mathematically in \cite{Du99, KimPhillips12}. The effect of an an anisotropic order parameter on the upper critical field was studied in \cite{Langmann92}. %In recent years, the conditions for the applicability of multi-component GL theories have also been the subject of closer investigation \cite{SilaevBabaev12}.

Another avenue where two-component GL theories have been successful is in the description of type I.5 superconductors \cite{BS05, CBS, SB11}. These are systems in which the magnetic field penetration depth lies in between the coherence lengths of the different order parameters (of course this effect only manifests itself in an external magnetic field).

\paragraph{Microscopically derived GL theories.}
In many of the papers cited above, the GL theories that are studied are first obtained microscopically by using Gor'kov's formal expansion of Green's functions. The advantage of having a microscopically derived GL theory is that it has some remaining ``microscopic content''. By this we mean: 

\be{enumerate}
\item One can directly associate each macroscopic order parameters with a certain symmetry type of the system's Cooper pair wave function. Therefore, if we can classify the minimizers of the microscopically derived GL theory, we understand exactly which Cooper pair wave functions $\al$ can occur in the physical system in configurations of almost minimal free energy. 
\item One has explicit formulae for computing the GL coefficients as integrals over microscopic quantities.
\e{enumerate}

The first point is expressed by \eqref{eq:content} above and is therefore a corollary of Theorem \ref{thm:mainTI}. The second point is represented by formulae \eqref{eq:cdefn},\eqref{eq:ddefn} in Theorem \ref{thm:mainTI}.

While the papers cited above provide important insight about the vortex structure in unconventional superconductors, they are restricted in that the GL theories are obtained using the formal Gorkov procedure and that almost exclusively two-component GL theories are studied. Our Theorem \ref{thm:mainTI} provides a rigorous microscopic derivation of $n$-component GL theories with
 $n$ arbitrary starting from a BCS theory with an isotropic electronic dispersion. 
 
 \paragraph{Physical assumptions of our model.}
We discuss the main physical assumptions of our model and the resulting limitations in its applicability to realistic systems.
\be{enumerate}
\item [(a)] \emph{Translation-invariance}. We view the degenerate translation-invariant systems as toy models for multi-component superconductivity. We believe that the examples of multi-component GL theories studied in part II are already rich enough to show that the translation-invariant case can be interesting. From a technical perspective, translation invariance yields major technical simplifications. In particular, the semiclassical analysis of \cite{FrankHainzlSeiringerSolovej12} reduces to a Taylor expansion. 

\item[(b)] \emph{BCS theory with a Fermi-Dirac normal state.} There are two assumptions here: First, we start from a BCS theory (meaning a theory in which electrons can form Cooper pairs and which restricts to BCS-type trial states). The question whether such a theory can be used to describe unconventional superconductors is unresolved \cite{Leggett}. Second, we work with a BCS theory for which the normal state is given by the usual Fermi-Dirac distribution. Most realistic unconventional superconductors are strongly interacting systems with a non-Fermi liquid normal state \cite{Leggett,Senthil08}. 
%We would like to point out, however, that we derive GL theory close to the critical temperature where also an unconventional superconductor could be in a state that is close to the normal state of a free Fermi gas. Put differently, the electron-electron attraction \emph{energy} will be small near $T_c$ because there are few electrons that attract.

\item[(c)] \emph{Isotropy.} 
We study a BCS theory in which the electrons live in the continuum and have an isotropic dispersion. Many of the known examples of unconventional superconductors are layered compounds which are effectively two-dimensional. When we say that their order parameter has $d_{x^2-y^2}$-wave symmetry, then this only means that it has a four-lobed shape similar to that of $k_x^2-k_y^2$ for $-\pi< k_x,k_y< \pi$, but its precise dependence on $k_x,k_y$ depends on the symmetry group of the two-dimensional lattice \cite{Leggett}. Order parameters of the form $k_x^2-k_y^2$ have been studied as a first approximation to unconventional superconductors, see e.g.\ \cite{XuRenTing95, WangWangLi99,XuRenTing96}. 

For the examples in part II, we consider a spherically symmetric interaction potential, resulting in a fully isotropic BCS theory. Consequently, the $d$-wave order parameters that we consider are the ``usual'' ones, known from atomic physics (see section \ref{sect:dwave}). By isotropy, all the $d$-waves (there are two in two dimensions and five in three dimensions) are energetically equal. The examples in part II show that even this isotropic microscopic theory can lead to rather rich coupling phenomena of anisotropic macroscopic order parameters, as we discuss next.

\item[(d)] \emph{Spin singlet order parameter.} We restrict to order parameters which are singlets in spin space. This is indeed the case for unconventional superconductors \cite{Leggett}, but it excludes systems with $p$-wave order parameters such as  superfluid Helium-3.
\e{enumerate}

\subsection{Main results of part II}
In part II, we compute the $n$-component GL theories that arises from the BCS theory according to Theorem \ref{thm:mainTI} for several exemplary cases. For each situation, we make some observations about the minimizers of the GL energy and their symmetries and give a physical interpretation.

\emph{Throughout part II, $V$ is assumed to be spherically symmetric}, so the BCS theory becomes fully isotropic. The order parameters can then be described by the decomposition into angular momentum sectors (see section \ref{sect:dwave}) and we consider the case of pure $d$-wave and mixed $(s+d)$-wave order parameter. Here and in the following, we write \emph{``GLn''} for ``$n$-component Ginzburg--Landau theory''. The dimension $D$ will be either two or three.
 
\be{enumerate}[label=(\roman*)]
 \item Let $D=3$. Assume the Cooper pair wave function is a linear combination of the five linearly independent $d$-waves with a given radial part. \textbf{Theorem \ref{thm:pureTI}} explicitly computes the microscopically derived GL5 energy and gives a full description of all its minimizers. Surprisingly, the GL5 energy in three dimensions exhibits the emergent symmetry group $O(5)$, see Corollary \ref{cor:symmgroup} (i), which is considerably larger
      than the original $O(3)$ symmetry group coming from the spherical symmetry and reflection symmetry of $V$. %A noteworthy consequence of the characterization of minimizers is that no single $d$-wave can minimize the GL energy, a result we refer to as ``non-trivial coupling of $d$-channels'', see Corollary \ref{cor:symmgroup} (iii). 
      
      \item Let $D=2$. Assume the Cooper pair wave function is a linear combination of the two linearly independent $d$-waves with a given radial part. \textbf{Theorem \ref{thm:pureTI2D}} explicitly computes the microscopically derived GL2 energy and gives a full description of all its minimizers. We find that the $(d_{x^2-y^2},d_{xy})$ order parameter must be of the form $(\psi,\pm i\psi)$ with $|\psi|$ minimizing an appropriate GL1. In particular, the minimizers of this GL2 form a double cover of the minimizer of a GL1. 

 \item Let $D=3$. Assume the Cooper pair wave function is a linear combination of the five linearly independent $d$-waves with a given radial part \emph{and} the $s$-wave with another given radial part. \textbf{Theorem \ref{thm:mixedTI}} explicitly computes the microscopically derived GL6 energy. It also gives a simple characterization of the parameter values for which the pure $d$-wave minimum is always unstable under $s$-wave perturbations and of the parameter values for which, vice-versa, the pure $s$-wave minimum is unstable under $d$-wave perturbations. As a consequence, we give parameter values for which $s$- and $d$-waves must \emph{couple non-trivially} to be energy-minimizing.
\e{enumerate}

We also consider the mixed $(s+d)$-wave case in $D=2$ dimensions. The result is presented in Remark \ref{rmk:mixed} (v) for brevity.

%We repeat that isotropy implies that all $d$-waves are energetically equal. Hence, the most realistic of our examples from above is the two-dimensional situation of Theorem \ref{thm:pureTI2D} which describes three-dimensional systems with cylindrical symmetry. Theorems \ref{thm:pureTI} and \ref{thm:mainTI}, while displaying some mathematically interesting behavior, are only pertinent to hypothetical three-dimensional, isotropic $d$-wave superconductors. 

\subsection{Main results of part III}
Recall from the discussion of part I above, that the candidate Cooper pair wave functions are the ground states of the effective gap operator $K_{T_c}+V$. A priori, it is not at all clear that the fully isotropic BCS theory can produce ground state sectors of $K_{T_c}+V$ which are not spherically symmetric. In particular, it is not clear that the examples considered in part II actually exist.

In fact, if $K_{T_c}$ is replaced by the Laplacian $-\nabla^2$ we have a Schr\"odinger operator and under very general conditions on the potential $V$, the Perron-Frobenius theorem implies that the ground state is in fact \emph{non-degenerate}, see e.g.\ Theorem 11.8 in \cite{LiebLoss}. For spherically symmetric $V$, this means the ground state is also spherically symmetric (``$s$-wave''). 

In part III, we remedy this by exhibiting examples of spherically symmetry potentials $V$ such that the ground state sector of $K_{T_c}+V$ can in fact have \emph{arbitrary} angular momentum. These potentials will be of the form
\beqs
V_{\lam,R}(\x)=-\lam \de(|\x|-R)
\eeqs
in three dimensions. Here $\lam$ and $R$ are positive parameters. The result holds for open intervals of the parameters values, so it is ``not un-generic''.

\section{Part I: Microscopic derivation of GL theory in the degenerate case}
\subsection{BCS theory}
We consider a gas of fermions in $\set{R}^D$ with $1\leq D\leq 3$ at temperature $T> 0$ and chemical potential $\mu\in\set{R}$, interacting via the two-body potential $V(\x)$. We assume that $V(\x)=V(-\x)$ is reflection symmetric. We do not consider external fields, so the system is translation-invariant. A BCS state $\Gam$ can then be conveniently represented as a $2\times 2$ matrix-valued Fourier multiplier on $L^2(\rt)\oplus L^2(\rt)$ of the form
\beq
\label{eq:BCSstatedefn}
 \ft \Gam(\p) = \ttmatrix{ \ft{\gam}(\p)}{ \ft{\al}(\p)}{\ol{\ft{\al}(\p)}}{1-\ft{\gam}(\p)},
\eeq
for all $\p\in \set{R}^D$. Here, $\ft{\gam}(\p)$ denotes the Fourier transform of the one particle-density matrix and $\ft{\al}(\p)$ the Fourier transform of the Cooper pair wave function. We require $\ft{\al}(\p)=\ft{\al}(-\p)$ and $0\leq \Gam(\p) \leq 1$ as a matrix, which is equivalent to $0\leq \ft{\gam}(\p) \leq 1$ and  $|\ft{\al}(\p)|^2 \leq \ft{\gam}(\p) (1-\ft{\gam}(\p))$. The \emph{BCS free energy per unit volume} reads, in suitable units
\beq
\label{eq:Fdefn}
		\curly{F}^{BCS}_{T}(\Gam)=\int_{\set{R}^D} (\p^2-\mu) \ft{\gam}(\p) \dpp - T S[\Gam] + \int_{\set{R}^D} V(\x) |\al(\x)|^2 \dx,
\eeq
where the \emph{entropy per unit volume} is given by
\beq
		S[\Gam] = -\int_{\set{R}^D} \Tra{\set{C}^2}{\ft \Gam(\p) \log\ft \Gam(\p)} \dpp.
\eeq

\be{rmk}[BCS states]
\label{rmk:BCS}
\be{enumerate}[label=(\roman*)]
\item In general \cite{BachLiebSolovej94, FrankHainzlSeiringerSolovej12}, $SU(2)$-invariant BCS sta\-tes are represented as $2\times 2$ block operators 
\beqs
\Gam=\ttmatrix{\gam}{\al}{\ol{\al}}{1-\ol{\gam}}
\eeqs
where $\gam,\al$ are operators on $L^2(\set{R}^D)$ with kernel functions $\gam(\x,\y)$ and $\al(\x,\y)$ in $L^2(\set{R}^D)\oplus L^2(\set{R}^D)$. Since $0\leq \Gam\leq 1$ is Hermitian, $\gam(\x,\y)=\ol{\gam(\y,\x)}$ and $\al(\x,\y)=\al(\y,\x)$. In the translation-invariant case considered here, these kernel functions are assumed to be of the form $\gam(\x-\y)$ and $\al(\x-\y)$. Since convolution by $\gam,\al$ becomes multiplication in Fourier space, we can equivalently describe the BCS state by its Fourier transform $\ft\Gam$ defined in \eqref{eq:BCSstatedefn} above. In the translation-invariant case, the symmetries of $\gamma, \al$ turn into the relations $\gam(\x)=\ol{\gam(-\x)}$ and $\al(\x)=\al(-\x)$ or equivalently $\ft{\gam}(\p)=\ol{\ft{\gam}(\p)}$ and $\ft \al(\p)=\ft \al(-\p)$. Finally, since we are interested in states with minimal free energy, we may also assume
\beq
\label{eq:symmass}
\ft{\gam}(\p)=\ft\gam(-\p)
\eeq
and this was already used on the bottom right element in \eqref{eq:BCSstatedefn}. To see this, let $\ft \Gam$ be a BCS state not satisfying \eqref{eq:symmass}, set $\ft\Gam_r(\p):=\ft\Gam(-\p)$ and observe that 
\beqs
\curly{F}^{BCS}_{T}\l(\frac{\Gam+\Gam_r}{2}\r)< \curly{F}^{BCS}_{T}(\Gam)
\eeqs
by strict concavity of the entropy and reflection symmetry of all terms in $\curly{F}^{BCS}_{T}$.
\item Note that $\al(\x,\y)=\al(\y,\x)$ means that the Cooper pair wave function is symmetric in its arguments. To obtain a fermionic wave function, we would eventually tensor $\al$ with an antisymmetric spin singlet. Since $\al$ is reflection-symmetric in the translation-invariant case, $\al$ must be of even angular momentum if $V$ is radial.

The restriction to symmetric $\al$ is a consequence of assuming $SU(2)$ invariance in the heuristic derivation of the BCS free energy \cite{HainzlHamzaSeiringerSolovej08, Leggett79}. This means the full Cooper pair wave function must be a spin singlet and so its spatial part $\al$ must be symmetric. Note that this excludes systems, e.g.\ superfluid Helium-3, which display a $p$-wave order parameter.
\item 
For more background on the BCS functional, in particular a heuristic derivation from the many-body quantum Hamiltonian in which one restricts to quasi-free states, assumes $SU(2)$ invariance and drops the direct and exchange terms, see \cite{Leggett79} or the appendix in \cite{HainzlHamzaSeiringerSolovej08}. Recently, \cite{BraunlichHainzlSeiringer13} justified the last step for translation-invariant systems by proving that dropping the direct and exchange terms only leads to a renormalization of the chemical potential $\mu$, for a class of short-ranged potentials.
\e{enumerate}
\e{rmk}

We make the following technical assumption on the interaction potential.
\be{ass}
\label{ass:V}
We either have $V\in L^{p_V}(\set{R}^D)$ with $p_V=1$ for $D=1$, $1<p_V<\it$ for $D=2$ and $p_V=3/2$ for $D=3$, or we have 
\beq
\label{eq:Vdeltadefns}
    V(\x)= V_{\lam,R}(|\x|):= -\lam \de(|\x|-R),
\eeq
when $D=1,2,3$ and $\lam,R>0$.
\e{ass}

We note 

\be{prop}
\label{prop:formbounded}
A potential $V$ satisfying Assumption \ref{ass:V} is infinitesimally form-bounded with respect to $-\nabla^2$.
\e{prop}

We quote a result of \cite{HainzlHamzaSeiringerSolovej08}, which provides the foundation for studying the variational problem associated with $\curly{F}^{BCS}_{T}$. Define 
\beqs
\curly{D}:=\l\{	\Gam \textnormal{ as in \eqref{eq:BCSstatedefn}}\,:\, 0\leq \ft\Gam \leq 1,\, \ft{\gam}\in L^1(\set{R}^D,(1+\p^2)\dpp),\, \al\in H^1_{sym}(\set{R}^D)\r\}
\eeqs
with $H^1_{sym}(\set{R}^D)=\setof{\al\in H^1(\set{R}^D)}{\al(\x)=\al(-\x)\, a.e.}$.

\be{prop}[Prop.\ 2 in \cite{HainzlHamzaSeiringerSolovej08}]
\label{prop:HHNS}
Under Assumption \ref{ass:V} on $V$, the BCS free energy \eqref{eq:Fdefn} is bounded below on $\curly{D}$ and attains its minimum. 
\e{prop}

The physical interpretation rests on the following
\be{defn}[Superconductivity]
\label{defn:SC}
 The system described by $\curly{F}^{BCS}_{T}$ is \emph{superconducting} (or \emph{superfluid}, depending on the context) iff any minimizer $\Gam$ of $\curly{F}^{BCS}_{T}$ has off-diagonal entry $\al \not\equiv 0$.
\e{defn}

It was shown in \cite{HainzlHamzaSeiringerSolovej08} that the question whether the system is superconducting can be reduced to the following linear criterion, which we will use heavily. (In \cite{HainzlHamzaSeiringerSolovej08}, the results are proved for $D=3$ and without the restriction to the reflection-symmetric subspace of $L^2(\set{R}^D)$, but it was already observed in \cite{FrankHainzlSeiringerSolovej12} that the statement holds as stated here.)

%Intuitively, the question whether $\al \equiv 0$ or not is decided by the competition between the attractive $V$ term and the entropy $S[\Gam]$ in \eqref{eq:Fdefn}. Since the entropy comes with a weight $T$, it is not surprising that there is a unique critical temperature $T_c$ such that the minimizer has $\al\equiv 0$ iff $T\geq T_c$. 

\be{prop}[Theorems 1 and 2 in \cite{HainzlHamzaSeiringerSolovej08}]
Define the operator
\beq
\label{eq:KTdefn}
K_T:=\frac{-\nabla^2-\mu}{\tanh\l(\frac{-\nabla^2-\mu}{2T}\r)}
\eeq
as a Fourier multiplier and consider $K_T+V$ in the Hilbert space
\beq
\label{eq:L2sym}
L^2_{sym}(\set{R}^D):=\{f\in L^2(\set{R}^D)\,:\, f(\x)=f(-\x) \text{ a.e.}\}.
\eeq
Then:
\be{enumerate}[label=(\roman*)]
\item the system is superconducting in the sense of Definition \ref{defn:SC} iff $K_T+V$ has at least one negative eigenvalue.
\item there exists a unique critical temperature $0\leq T_c<\it$ such that
\beq
\begin{aligned}
\label{eq:Tcdefn}
 K_{T_c}+V&\geq 0,\\
 \inf\textnormal{spec}(K_T+V)&<0, \quad \forall T<T_c.
\e{aligned}
\eeq
\e{enumerate}
\e{prop}

$T_c$ is unique because the quadratic form associated with $K_T$ is strictly monotone in $T$. In a nutshell, the reason why the operator $K_T+V$ appears, is that it is the Hessian of the map
\beqs
		\phi\mapsto \curly{F}^{BCS}_{T}\l(\Gam_0+\ttmatrix{0}{\phi}{\ol{\phi}}{0}\r)
\eeqs
at $\phi=0$ with $\Gam_0$ the normal state of the system, see \eqref{eq:normalstate}, and naturally, the positivity of the Hessian is related to minimality. For the details, we refer to \cite{HainzlHamzaSeiringerSolovej08}.  In the following, we make

\be{ass}
\label{ass:Tc}
$V$ is such that $T_c>0$.
\e{ass}

By Theorem 3 in \cite{HainzlHamzaSeiringerSolovej08}, $V\leq 0$ and $V\not \equiv 0$ implies $T_c>0$ in $D=3$ and this result is stable under addition of a small positive part. 

\be{defn}[Ground-state degeneracy]
We set
\beq
\label{eq:ndefn}
n:=\dim\ker (K_{T_c}+V).
\eeq
\e{defn}

\be{rmk}
\be{enumerate}[label=(\roman*)]
\item We always have $n<\it$. The reason is that, by Assumption \ref{ass:V} on $V$, the essential spectrum of $K_T+V$ is contained in $[2T,\it)$. Therefore, zero is an isolated eigenvalue of $K_{T_c}+V$ of finite multiplicity and so $n<\it$. 
\item A sufficient condition for $n=1$ is that $\ft{V}\leq 0$ and $\ft{V}\not\equiv 0$ \cite{FrankHainzlNabokoSeiringer07,HainzlSeiringer08PR}. 
\item For Schr\"odinger operators $-\nabla^2+V$, the ground state is non-degenerate by the Perron-Frobenius theorem. That is, one always has the analogue of $n=1$ in that case. One may therefore wonder if $n>1$ ever holds. In part III, we present a class of radial potentials such that for open intervals of parameter values, we have $n>1$. In fact, one can tune the parameters such that $\ker (K_{T_c}+V)$ lies in an \emph{arbitrary} angular momentum sector.
\e{enumerate}
\e{rmk}

\subsection{GL theory}
In GL theory, one aims to find ``order parameters'' that minimize the GL energy. The minimizers then describe the
macroscopic relative density of superconducting charge carriers, up to spontaneous symmetry breaking. Microscopically, they describe the center of mass coordinate of the Cooper pair wave function $\al$. In our case, translation-invariance implies that the order parameters are complex-valued constants, which are non-zero iff the system is superconducting. 

When $n=1$ (and the system is translation-invariant), there is a single order parameter $\psi\in\set{C}$ and for $T<T_c$ the GL energy is of the all-familiar ``mexican hat'' shape
\beq
\label{eq:mexican}
\curly{E}^{GL}(\psi)=c|\psi|^4-d|\psi|^2,\qquad c,d>0.
\eeq
Below, in Theorem \ref{thm:mainTI}, we show that for $n>1$, the GL energy is of the form
\beq
\curly{E}^{GL}(\goth{a})=\int f_4(p) |\goth{a}(\p)|^4\d\p -\int f_2(p) |\goth{a}(\p)|^2\d\p
\eeq
and $\goth{a}$ varies over the $n$-dimensional set $\ker(K_{T_c}+V)$. The functions $f_4$ and $f_2$ are explicit; they are radial ($p\equiv |\p|$) and positive for $T<T_c$. 

Thus, we see that \emph{the mexican hat shape is characteristic for the transla\-tion-invariant case, even in the presence of degeneracies}. However, there exists \emph{nontrivial coupling} (i.e.\ mixed terms) between the different basis elements of $\ker(K_{T_c}+V)$ in general. 

\subsection{Result}
We write $\Gam_0$ for the minimizer of the free energy  $\curly{F}^{BCS}_{T}$ as in \eqref{eq:Fdefn} but with $V\equiv 0$. That is, $\Gam_0$ describes a free Fermi gas at temperature $T$ and for this reason we call $\Gam_0$ the ``normal state'' of the system. From the Euler-Lagrange equation, one easily obtains
\beq
\label{eq:normalstate}
	\ft\Gam_0(\p) = \ttmatrix{\ft{\gam}_0(\p)}{0}{0}{1-\ft{\gam}_0(\p)},
\eeq
where
\beq
	\ft{\gam}_0(\p) = \frac{1}{1+\exp((\p^2-\mu)/T)}
\eeq
is the well-known Fermi-Dirac distribution. (Of course, $\Gam_0$ depends on $\mu$ and $T$, but for the following we implicitly assume that it has the same values of $\mu,T$ as the free energy under consideration.)

We now state our first main result. It says that an appropriate $n$-component GL theory arises from BCS theory on the macroscopic scale and for temperatures close to $T_c$. Recall that $p\equiv |\p|$.

\be{thm}
\label{thm:mainTI}
\be{enumerate}[label=(\roman*)]
Let $V$ satisfy Assumptions \ref{ass:V} and \ref{ass:Tc} and let $\mu\in\set{R}$, $T<T_c$. Recall that $n=\dim\ker(K_{T_c}+V)$. Then:
\item As $T\uparrow T_c$,
\beq
\begin{aligned}
\label{eq:thmmainTI}
 &\min_{\Gam}\curly{F}^{BCS}_{T}(\Gam)-\curly{F}^{BCS}_{T}(\Gam_0)\\ 
 &= \l(\frac{T_c-T}{T_c}\r)^2 \min_{\goth{a}\in \ker(K_{T_c}+V)} \curly{E}^{GL}(\goth{a}) + O\l((T_c-T)^3\r),
\end{aligned}
\eeq
where $\curly{E}^{GL}$ is defined by 
\beq
\label{eq:alternative}
\begin{aligned}
\curly{E}^{GL}(\goth{a})
=&\frac{1}{T_c} \int_{\set{R}^D} \frac{g_1((p^2-\mu)/T_c)}{(p^2-\mu)/T_c} \l|K_{T_c}(p)\r|^4 |\goth{a}(\p)|^4\dpp\\
&-
\frac{1}{2T_c} \int_{\set{R}^D}  \frac{1}{\cosh^2 \l(\frac{p^2-\mu}{2T_c}\r)} \l|K_{T_c}(p)\r|^2 |\goth{a}(\p)|^2\dpp.
\end{aligned}
\eeq
Here we used the auxiliary functions
\beq
\label{eq:g0defn}
\begin{aligned}
		g_0(z):=&\frac{\tanh(z/2)}{z}\\
		g_1(z):=&-g_0'(z) = z^{-1} g_0(z) -\frac{1}{2} z^{-1} \frac{1}{\cosh^2 (z/2)}\\		
		K_T(p): =& \frac{p^2-\mu}{\tanh\l(\frac{p^2-\mu}{2T}\r)}.
\end{aligned}
\eeq
\item
Moreover, if $\Gam$ is an approximate minimizer of $\curly{F}^{BCS}_{T}$ in the sense that
\beq
\label{eq:approximatemin0}
	\curly{F}^{BCS}_{T}(\Gam)-\curly{F}^{BCS}_{T}(\Gam_0) = \l(\frac{T_c-T}{T_c}\r)^2 \l(\min_{\goth{a}\in\ker(K_{T_c}+V)} \curly{E}^{GL}(\goth{a}) +\eps\r),
\eeq
for some $0<\eps\leq M$, then we can decompose its off-diagonal element $\ft{\al}$ as
\beq
\label{eq:aldecomposition}
	\ft{\al}(\p) = \sqrt{\frac{T_c-T}{T_c}} \goth{a}_0(\p) + \xi,
\eeq
where $\|\xi\|_2= O_M\l(T_c-T\r)$ and $\goth{a}_0\in\ker(K_{T_c}+V)$ is an approximate minimizer of the GL energy, i.e.
$$\curly{E}^{GL}(\goth{a}_0)\leq \min \curly{E}^{GL}+\eps+O_M\l(T_c-T\r).$$ (Here $O_M$ means that the implicit constant depends on $M$.)
\e{enumerate}
\e{thm}

The idea is that near $T_c$, where superconductivity is weak, the normal state $\Gam_0$ is the prime
  competitor for the development of a small off-diagonal component $\ft{\al}$ of the BCS minimizer.
   Theorem \ref{thm:mainTI} then says that the lowest-order deviation from the normal state is well-described by a GLn whose coefficients are given explicitly as integrals over microscopic quantities. 
   
   \be{rmk}
\label{rmk:mainTI}   
   \be{enumerate}[label=(\roman*)]
   \item We can equivalently rewrite the GL energy in terms of ``order parameters'' $\psi_1,\ldots,\psi_n$ as follows. We fix an orthonormal basis $\{a_j\}$ of $\ker(K_{T_c}+V)$ and decompose $\goth{a}\in\ker(K_{T_c}+V)$ as $\goth{a}(\p)=\sum_{j=1}^n\psi_j \ft a_j(\p)$. The basis coefficients $\psi_1,\ldots,\psi_n \in \set{C}$ are the $n$ order parameters, each one corresponds to a different ``type'' of superconductivity $\ft a_j$. The GL energy \eqref{eq:alternative} can then be rewritten in the equivalent form 
\beq
\label{eq:EGLdefn}
	\curly{E}^{GL}(\psi_1,\ldots,\psi_n) = \sum_{i,j,k,m} c_{ijkm} \ol{\psi_i} \ol{\psi_j} \psi_k \psi_m - \sum_{i,j} d_{ij} \ol{\psi_i} \psi_j.
\eeq
Here the ``GL coefficients'' $c_{ijkm}, d_{ij}$ are given by
\begin{align}
\label{eq:cdefn}
		c_{ijkm} &= \frac{1}{T_c^2} \int_{\set{R}^D} \frac{g_1((p^2-\mu)/T_c)}{p^2-\mu} \l|K_{T_c}(p)\r|^4 \ol{\ft{a}_i(\p)\ft{a}_j (\p)}\ft{a}_k(\p) \ft{a}_m(\p)\dpp\\
\label{eq:ddefn}
		d_{ij} &= \frac{1}{2T_c} \int_{\set{R}^D}  \frac{1}{\cosh^2\l(\frac{p^2-\mu}{2T_c}\r)} \l|K_{T_c}(p)\r|^2 \ol{\ft{a}_i(\p)}\ft{a}_j(\p)\dpp.
\end{align}
The minimum in \eqref{eq:thmmainTI} turns into the minimum over all $\psi_1,\ldots,\psi_n \in \set{C}$. In part II, we compute the integrals \eqref{eq:cdefn},\eqref{eq:ddefn} for special symmetry types and study the resulting minimization problem given by \eqref{eq:EGLdefn}.

\item If one assumes $n=1$, this result is a corollary of Theorem 1 in \cite{FrankHainzlSeiringerSolovej12}, which is obtained by restricting it to translation-invariant systems. (When comparing, note that \cite{FrankHainzlSeiringerSolovej12} rescale the BCS free energy to macroscopic units.) In this case, the microscopically derived GL theory is simply of the form \eqref{eq:mexican}.
\item Note that the error term in \eqref{eq:thmmainTI} is $O(T_c-T)$ higher than the order at which the GL energy enters. Such an error bound is probably optimal because the semiclassical expansion of Lemma \ref{lm:SC} will contribute terms at this order. It improves on the error term that one would obtain from Theorem 1 of \cite{FrankHainzlSeiringerSolovej12} in the case $n=1$. 
\e{enumerate}
   \e{rmk}
   
We note that writing $\min \curly{E}^{GL}$ in the above theorem is justified because

\be{prop}
\label{prop:GL}
The microscopically derived Ginzburg--Landau energy satisfies $\inf_{\set{C}^n} \curly{E}^{GL}>-\it$. Moreover, the infimum is attained. 
\e{prop}

When $T\geq T_c$, it was proved in \cite{HainzlHamzaSeiringerSolovej08} that the unique minimizer of $\Gam\mapsto\curly{F}^{BCS}_T(\Gam)$ is the normal state $\Gam_0$. In other words, the left-hand side in \eqref{eq:thmmainTI} vanishes identically for all $T\geq T_c$. Nonetheless, one can still ask if GL theory describes \emph{approximate} minimizers of the BCS free energy similarly to Theorem \ref{thm:mainTI} (ii) when $T-T_c$ is positive but small. Indeed, \emph{above} $T_c$ approximate minimizers must have \emph{small} GL order parameters (as one would expect):

\be{prop}
\label{prop:new}
Suppose $T>T_c$ and $\Gam$ satisfies 
\beqs
\curly{F}^{BCS}_{T}(\Gam)-\curly{F}^{BCS}_{T}(\Gam_0) =\eps \l(\frac{T-T_c}{T_c}\r)^2, 
\eeqs   
with $0<\eps\leq M$. Let $\{a_j\}$ be any choice of basis for $\ker(K_{T_c}+V)$. 

Then, there exist $\psi_1,\ldots,\psi_n\in\set{C}^n$ and $\xi\in L^2(\set{R}^D)$ such that $$\ft\Gam_{12}\equiv \ft{\al} = \sqrt{\frac{T-T_c}{T_c}} \sum_{j=1}^n \psi_j \ft{a}_j + \xi$$ with  $\|\xi\|_2= O_M\l(T-T_c\r)$ and
   \beq
   \label{eq:psiestimate}
\sum_{i=1}^n |\psi_i|^2\leq \frac{\eps}{\lam_{\mathrm{min}}} + O_M\l(T_c-T\r)
   \eeq
as $T\rightarrow T_c$. Here $\lam_{\mathrm{min}}>0$ is a system-dependent parameter. 
\e{prop}

%The GL coefficients in \eqref{eq:cdefn} and \eqref{eq:ddefn} can be related by Cauchy-Schwarz, but we will not use this in the following. %such as $c_{ijkm}\leq \sqrt{c_{ijij} c_{kmkm}}$.

\section{Part II: Examples with $d$-wave order parameters}

\subsection{Angular momentum sectors}
\label{sect:dwave}
In order to explicitly compute the GL coefficients given by formulae \eqref{eq:cdefn}, \eqref{eq:ddefn}, we make
 some assumptions on the potential $V$. First and foremost, we assume that \emph{$V$ is radially symmetric}. We can then decompose $L^2(\set{R}^3)$ into angular momentum sectors. We review here some basic facts about these and establish notation. For the \emph{spherical harmonics}, we use the definition
\beq
\label{eq:SH}
  Y_{l}^m(\vartheta,\vp) = \sqrt{\frac{(2l+1)}{4\pi}\frac{(l-m)!}{(l+m)!}}\ P_{l}^m(\cos\vartheta) e^{im\vp},
\eeq
where $P_l^m$ is the associated Legendre function, which we define with a factor of $(-1)^m$ relative to the Legendre polynomial $P_m$. While we will use the $Y_l^m$ in the proofs, it will be convenient to state the results in the basis of \emph{real-valued spherical harmonics} defined by
\beq
\label{eq:realSH}
 Y_{l,m}=
 \be{dcases}
    \frac{i}{\sqrt{2}} \l(Y_l^m-(-1)^{m}Y_l^{-m}\r),\quad &\text{ if } m<0\\
    Y_0^0,\quad &\text{ if } m=0\\
    \frac{1}{\sqrt{2}} \l(Y_l^m+(-1)^{m}Y_l^{-m}\r),\quad &\text{ if } m>0.
 \e{dcases}
\eeq
We let $\curly{S}_l=\spa\{Y_{l}^m\}_{m=-l,\ldots,l}=\spa\{Y_{l,m}\}_{m=-l,\ldots,l}$ and define
\beq
\label{eq:Hldefn}
\curly{H}_l=L^2(\set{R}_+;r^{2}\d r)\otimes \curly{S}_l,\qquad (r\equiv |\x|).
\eeq

We employ the usual physics terminology
\beq
\label{eq:wavedefn}
    \curly{H}_0 \equiv \{\text{$s$-waves}\},\quad 
    \curly{H}_1 \equiv \{\text{$p$-waves}\},\quad
    \curly{H}_2 \equiv \{\text{$d$-waves}\}.
\eeq
Note that $\curly{H}_0$ is just the set of spherically symmetric functions and $Y_{2,2}\propto \frac{x^2-y^2}{x^2+y^2}$ is the $d_{x^2-y^2}$-wave in this classification. In analogy to Fourier series, we have the orthogonal decomposition \cite{SteinWeiss}
\beq
\label{eq:L2decomposition}
    L^2(\set{R}^3)=\bigoplus_{l=0}^\it \curly{H}_l.
\eeq

Recall that $r\equiv |\x|$. The Laplacian in $3$-dimensional polar coordinates reads
\beq
\label{eq:radialLaplacian}
	\nabla^2 = \nabla^2_{\text{rad}} + \frac{\nabla^2_{S^{2}}}{r^2},
\eeq
where $\nabla^2_{\text{rad}}=r^{-2}\del_{r}(r^{2}\del_r)$ and $\nabla^2_{S^{2}}$ is the Laplace-Beltrami operator, which acts on spherical harmonics by
\beq
\label{eq:LBeltrami}
-\nabla^2_{\set{S}^2} Y_{l,m}= l(l+1) Y_{l,m}.
\eeq
Since $K_T$ commutes with the Laplacian and $V$ clearly leaves the decomposition \eqref{eq:L2decomposition} invariant, we observe that the eigenstates of $K_T+V$ can be labeled by $l$ (in physics terminology, $l$ is a ``good quantum number''). To make contact with unconventional superconductors, we will suppose we are in either of the two cases:
\be{itemize}
	\item  $\ker(K_{T_c}+V)=\spa\{\rho_2\}\otimes \curly{S}_2$,\quad \emph{``pure $d$-wave case"}
	\item $\ker(K_{T_c}+V)=\spa\{\rho_0\}\otimes\curly{S}_0+ \spa\{\rho_2\}\otimes\curly{S}_2$,\quad \emph{``mixed $(s+d)$-wave case"}.
\e{itemize}
Here $\rho_0,\rho_2\in L^2(\set{R}_+;r^2\d r)$ are radial functions. They are determined as the ground states of an appropriate $l$-dependent operator acting on radial functions. We assume that these radial ground states are non-degenerate for simplicity. This assumption is satisfied for the examples we give in part III, but may not be satisfied in general.

\subsection{Results}
\subsubsection{The pure $d$-wave case in three dimensions}

\be{thm}[Pure $d$-wave case, 3D]
\label{thm:pureTI}
Let $D=3$. Let $V$ be such that Theorem \ref{thm:mainTI} applies and such that $\ker(K_{T_c}+V)=\spa\{\rho_2\}\otimes \curly{S}_2$ for some $0\not\equiv\rho_2\in L^2(\set{R}_+;r^2\d r)$. Let $\{a_{2,m}\}_{m=-2,\ldots,2}$ be an orthonormal basis of the kernel such that 
\beq
\ft{a}_{2,m}(\p) = \vr(p) Y_{2,m}(\vartheta,\vp)
\eeq
 for an appropriate $\vr\in L^2(\set{R}_+;p^2\d p)$ (explicitly, $\vr$ is the Fourier-Bessel transform \eqref{eq:FB2} of $\vr$).
Let $\psi_{m}$ denote the GL order parameter corresponding to $\ft{a}_{2,m}$ for $-2\leq m\leq 2$. Then:

\be{enumerate}[label=(\roman*)]
\item The GL energy that arises from BCS theory as described in Theorem \ref{thm:mainTI} reads
\beq
\label{eq:pureGL}
\curly{E}_{\text{$d$-wave}}^{GL}(\psi_{-2},\ldots,\psi_2)
=\frac{5c}{14\pi}\l(\l(\sum_{m=-2}^2|\psi_{m}|^2-\tau\r)^2-\tau^2+ \frac{1}{2}	\l|\sum_{m=-2}^2 \psi_{m}^2 \r|^2	\r).
\eeq
where $\tau:= \frac{7\pi d}{5c}$ and
\beq
\begin{aligned}
\label{eq:cdefnn}
 c &= \int_0^\it 	f_4(p)\d p,\qquad\, d &= \int_0^\it  f_2(p)\d p.
\end{aligned}
\eeq
Here, we introduced the positive and radially symmetric functions 
\beq
\label{eq:f24defn}
\begin{aligned}
			f_4(p)=&\frac{p^2}{T_c^2} \frac{g_1\l(\frac{p^2-\mu}{T_c}\r)}{p^2-\mu} |K_{T_c}(p)\vr(p)|^4\\
			f_2(p)=&\frac{p^2}{2T_c} \frac{1}{\cosh^2\l(\frac{p^2-\mu}{2T_c}\r)} |K_{T_c}(p)\vr(p)|^2.
\end{aligned}
\eeq
See \eqref{eq:g0defn} for the definition of $g_1$ and $K_T(p)$.

\item We have $\min\curly{E}_{\text{$d$-wave}}^{GL}=-\frac{5c}{14\pi} \tau^2$. The set of minimizers is
 \beq
 \label{eq:minimizerconditionsTI}
    \curly{M}_{\text{$d$-wave}}=\l\{(\psi_{-2},\ldots,\psi_{2})\in\set{C}^5: \sum_{m=-2}^2 |\psi_{m}|^2 = \tau \text{ and } \sum_{m=-2}^2 \psi_{m}^2 =0\r\}.
 \eeq
 \e{enumerate}
\e{thm}

\be{rmk}
\be{enumerate}[label=(\roman*)]
\item The existence of $V$ such that the assumption on $\ker(K_{T_c}+V)$ holds for an open interval of parameter values follows from statement (i) of Theorem \ref{thm:appendix} by choosing $l_0=2$.
\item Observe that the minimization problem in \eqref{eq:pureGL} is trivial, i.e.\ (ii) is immediate.
\item Recall that we normalized the GL order parameters such that they are related to the Cooper pair wave function via \eqref{eq:aldecomposition}. For the special case \eqref{eq:minimizerconditionsTI}, we see that a minimizing vector will have absolute value $\sqrt{\tau}$. We can then reduce to the case where a minimizing vector lies on the \emph{unit} sphere by rescaling the order parameters. The advantage of this other normalization is that it allows to interpret the absolute value of the order parameters as \emph{relative} densities of superconducting charge carriers.
\e{enumerate}
\e{rmk}

We discuss what symmetry of $\curly{E}^{GL}$ one can expect. First of all, GL theory always has the global $U(1)$ gauge symmetry $\psi_j\mapsto e^{i\phi}\psi_j$ (this is due to the presence of the absolute value signs in \eqref{eq:EGLdefn}). Second, $SO(3)$ acts on spherical harmonics by pre-composition, i.e.\ for $g\in SO(3)$ and $\om\in \set{S}^2$,
\beqs
    g Y_{l,m}(\om):= Y_{l,m}(g^{-1}\om)=\sum_{m'} A^g_{m m'} Y_{l,m'}
\eeqs
where $A^g\in O(2l+1)$ is the analogue of the well-known Wigner $d$-matrix for real spherical harmonics \cite{Aubert13}. By changing the angular integration variable in \eqref{eq:cdefn} and \eqref{eq:ddefn} from $g\om$ to $\om$, it is easy to see that
  \beqs
    \curly{E}^{GL}((A^g)^{-1}\vec{\psi})=\curly{E}^{GL}(\vec{\psi}),
  \eeqs
  where we introduced $\vec{\psi}=(\psi_{-2},\ldots,\psi_2)$. Since $Y_{l,m}$ is reflection-symmetric for even $l$, we can extend the action to all of $O(3)$ and retain the invariance of $\curly{E}^{GL}$. This shows that we can expect $\curly{E}^{GL}$ to have symmetry groups $U(1)$ and $O(3)$. However:  

\be{cor}
    \label{cor:symmgroup}
    In the situation of Theorem \ref{thm:pureTI}:
\be{enumerate}[label=(\roman*)]
\item  For all $\phi\in[0,2\pi)$, $\curly{R}\in O(5)$ and $\vec{\psi}\in \set{C}^5$, 
  \beq
    \curly{E}^{GL}(e^{i\phi}\curly{R}\vec{\psi})=\curly{E}^{GL}(\vec{\psi})
  \eeq
Moreover, $O(5)$ acts transitively and faithfully on $\curly{M}_{\text{$d$-wave}}$.
\item $\curly{M}_{\text{$d$-wave}}$ is a $7$-dimensional manifold in $\set{R}^{10}$. 
\item Any minimizer of $\curly{E}^{GL}_{\text{$d$-wave}}$ has at least \emph{two} non-zero entries $\psi_j$.
    \e{enumerate}
\e{cor}

\be{rmk}
\be{enumerate}[label=(\roman*)]
\item Surprisingly, the emergent symmetry group $O(5)$ is considerably larger than the $O(3)$-symmetry discussed above. (Recall also that $A^g$ from above is in $O(5)$, so that the $O(3)$-symmetry is really contained in the $O(5)$-symmetry.) The particularly nice form of the $O(5)$ action is a consequence of choosing the real-valued spherical harmonics as a basis.

\item  We interpret faithfulness of the group action as saying that $\curly{M}_{\text{$d$-wave}}$ is ``truly'' invariant under the full $O(5)$. 
\item Transitivity means that the set of minimizers $\curly{M}_{\text{$d$-wave}}$ is a single orbit under the $O(5)$ symmetry. In other words, there exists a \emph{unique minimizer modulo symmetry}.
\item We interpret (iii) as a proof of non-trivial coupling between the \emph{real-valued} $d$-wave channels (it is of course a basis-dependent statement).
\e{enumerate}
\e{rmk} 

\be{proof} The invariance under multiplication by $e^{i\phi}$ is trivial. To see the $O(5)$ symmetry, we use real coordinates because they also provide an interesting change in perspective.
Writing $\vec{\psi}=\vec{x}+i\vec{y}$ with $\vec{x},\vec{y}\in\set{R}^5$, the GL energy becomes
\beq
\begin{aligned}
\curly{E}^{GL}(\vec{x}+i\vec{y})= \frac{5c}{14\pi}\l(\l(\vec{x}^2+\vec{y}^2-\tau\r)^2-\tau^2+ \frac{1}{2}	\l|\vec{x}^2-\vec{y}^2\r|^2+\l|\vec{x}\cdot \vec{y}\r|^2		\r).
\end{aligned}
\eeq
This is clearly invariant under the $O(5)$-action $\vec{x}+i\vec{y}\mapsto \curly{R}\vec{x}+i\curly{R}\vec{y}$. We can rewrite the set of minimizers as
\beq
\label{eq:linindep}
\curly{M}_{\text{$d$-wave}}=\l\{	(\vec{x},\vec{y})\in \set{R}^5\times \set{R}^5\, : \,	\vec{x}^2 =\vec{y}^2 = \frac{\tau}{2},\,	\vec{x}\cdot \vec{y} = 0	\r\}.
\eeq
Without loss of generality, we may set $\tau/2=1$, so that $\curly{M}_{\text{$d$-wave}}$ is just the set of pairs of orthonormal $\set{R}^5$-vectors. To see that the $O(5)$-action is transitive, consider the orbit of $(e_1,e_2)\in \curly{M}_{\text{$d$-wave}}$, namely $\setof{(\curly{R}e_1,\curly{R}e_2)}{\curly{R}\in O(5)}$. Since any two orthonormal vectors can appear as the first two columns of an orthogonal matrix, we have transitivity. To see that the action is faithful, note that for any two distinct $\curly{R},\tilde{\curly{R}}\in O(5)$, there exists $e_i$ such that $\curly{R}e_i\neq \tilde{\curly{R}}e_i$. 

For (ii), we employ the implicit function theorem and observe that the Jacobian associated with the functions $\vec{x}^2,\vec{y}^2,\vec{x}\cdot\vec{y}$ from \eqref{eq:linindep} has rank $3$. Finally, (iii) 
is immediate from \eqref{eq:minimizerconditionsTI}.
 \e{proof}

\subsubsection{The pure $d$-wave case in two dimensions}
Note that the two-dimen\-sional analogue of the space $\curly{S}_l$, namely the homogeneous polynomials of order $l$ on $\set{S}^1$, is spanned by $\cos(l\vp)$ and $\sin(l\vp)$. Thus assumption \eqref{eq:dwave2Dass} below is the two-dimensional analogue of the assumption $\ker(K_{T_c}+V)=\spa \{\rho_2\}\otimes\curly{S}_2$ in Theorem \ref{thm:mainTI} above.

\be{thm}[Pure $d$-wave case, 2D]
\label{thm:pureTI2D}
Let $D=2$. Let $V$ be such that Theorem \ref{thm:mainTI} applies and such that $\ker(K_{T_c}+V)=\spa\{a_{xy},a_{x^2-y^2}\}$ with
\beq
\label{eq:dwave2Dass}
\ft a_{x^2-y^2}(\p)=\vr(p)\frac{\cos(2\vp)}{\sqrt{\pi}},\qquad \ft a_{xy}(\p)=\vr(p)\frac{\sin(2\vp)}{\sqrt{\pi}},
\eeq
for an appropriate, normalized $0\not\equiv\vr\in L^2(\set{R}_+,p\d p)$.  
 Let $\psi_{x^2-y^2}$ and $\psi_{xy}$ denote the corresponding GL order parameters. Then:

\be{enumerate}[label=(\roman*)]
\item The GL energy that arises from BCS theory as described in Theorem \ref{thm:mainTI} reads
\beq
\begin{aligned}		
\label{eq:pureGL2D}
&\curly{E}_{\text{$d$-wave},2D}^{GL}(\psi_{x^2-y^2},\psi_{xy})\\
&=\frac{c}{2\pi} \l\{	 \l(|\psi_{x^2-y^2}|^2+|\psi_{xy}|^2-\frac{\pi d}{c}\r)^2-\frac{\pi^2 d^2}{c^2} + \frac12 \l|\psi_{x^2-y^2}^2+\psi_{xy}^2\r|^2	\r\}
\end{aligned}	
\eeq
where $c,d$ are defined in the same way as in Theorem \ref{thm:pureTI} with $f_2(p),f_4(p)$ replaced by $f_2(p)/p,f_4(p)/p$.

\item We have $\min\curly{E}_{\text{$d$-wave},2D}^{GL}=-\frac{\pi d^2}{2c}$. The set of minimizers is
 \beq
 \begin{aligned}
 \label{eq:minimizerconditionsTI2D}
    &\curly{M}_{\text{$d$-wave},2D}\\
    &=\l\{(\psi_{x^2-y^2},\psi_{xy})\in\set{C}^2: |\psi_{x^2-y^2}|^2+|\psi_{xy}|^2 
    =\frac{\pi d}{c},\; \psi_{x^2-y^2}^2+\psi_{xy}^2=0\r\}\\
    &=\l\{(\psi,\pm i \psi) \in\set{C}^2: |\psi|^2 =\frac{\pi d}{2c} \r\}
\end{aligned} 
 \eeq
 \e{enumerate}
\e{thm}

\be{rmk}
\be{enumerate}[label=(\roman*)]
\item 
Statement (i) directly implies the first equality in \eqref{eq:minimizerconditionsTI2D} and the second equality is elementary. Note that the result can be conveniently stated in terms of the complex-valued spherical harmonics as well.

\item  From the second equation in \eqref{eq:minimizerconditionsTI2D}, we see that the minimizers of the GL2 for a pure $d$-wave superconductor in two dimensions (in the cosine, sine basis) form a double cover of the minimizers of the usual ``mexican-hat'' GL1. 

\item A similar result holds for any pure angular momentum sector in two dimensions.
\e{enumerate}
\e{rmk}

%%%%%%%%%%%%%%%%%%%%%%%%%%%%%%%%%%%%%%%%%%%%%%%%%%%%%%%%%%%%

\subsubsection{The mixed $(s+d)$-wave case}
We write $\Re[z]$ for the real part of a complex number $z$.

\be{thm}[Mixed $(s+d)$-wave case, 3D]
\label{thm:mixedTI}
Let $D=3$. Let $V$ be such that Theorem \ref{thm:mainTI} applies and such that $\ker(K_{T_c}+V)=\spa\{\rho_0\}\otimes\curly{S}_0+\spa\{\vr_2\}\otimes \curly{S}_2$
for some $0\not\equiv\rho_0,\rho_2\in L^2(\set{R}_+;r^2\d r)$. As an orthonormal basis, take $a_{2,m}$ as in Theorem \ref{thm:pureTI} and $a_s$ with 
\beq
\ft{a}_{s}(\p) = \vr_s(p)\, Y_{0,0}(\vartheta,\vp).
\eeq
 Let $\psi_m,\, (m=-2,\ldots,2)$ and $\psi_s$ denote the GL order parameters corresponding to the respective basis functions. Then:

\be{enumerate}[label=(\roman*)]
\item The microscopically derived GL energy reads
\beq
\begin{aligned}
\label{eq:mixedGL}
    &\curly{E}_{\text{$(s+d)$-wave}}^{GL}(\psi_{s},\psi_{-2},\ldots,\psi_2)\\
    &=\curly{E}_{\text{$s$-wave}}^{GL}(\psi_{s})+\curly{E}^{GL}_{\text{$d$-wave}}(\psi_{-2},\ldots,\psi_{2})+\curly{E}^{GL}_{\text{coupling}}(\psi_{s},\psi_{-2},\ldots,\psi_2)\\
    \end{aligned}
\eeq
where $\curly{E}^{GL}_{\text{$d$-wave}}(\psi_{-2},\ldots,\psi_2)$ is given by \eqref{eq:pureGL}, 
\beq
\label{eq:pureswave}
\curly{E}^{GL}_{\text{$s$-wave}}(\psi_s)=\frac{c^{(4s)}}{4\pi}\l(\l(	|\psi_{s}|^2 -	\tau_s\right)^2-\tau_s^2\r),
\eeq
with $\tau_s=\frac{2 \pi d^{(2s)}}{c^{(4s)}}$, and
\beq
\label{eq:coupling}
\begin{aligned}
    &\curly{E}^{GL}_{\text{coupling}}(\psi_{s},\psi_{-2},\ldots,\psi_2)\\
    &=\frac{c^{(2s)}}{2\pi} \l( 2|\psi_s|^2\sum_{m=-2}^2 |\psi_{m}|^2+\Re\l[ \ol{\psi_s}^2\l(\sum_{m=-2}^2 \psi_{m}^2\r) \r] \r)\\
    &\quad+\frac{\sqrt{5}c^{(s)}}{7\pi} \l(   \Re\l[\ol{\psi_s} \l(2\psi_0|\psi_{0}|^2 + \sum_{m=\pm 1,2} |m|(-1)^{m+1} (2\psi_0|\psi_{m}|^2+\ol{\psi_0}\psi_{m}^2) \r)\r] \vphantom{\Re\l[\ol{\psi_s}    \l(\ol{\psi_{2}} \sum_{m=\pm 1} m \l(2|\psi_{m}|^2+\psi_{m}^2\r)\r) + 2\ol{\psi_{-2}}
    \l(\psi_{1}\psi_{-1}+2\Re[\ol{\psi_{1}}\psi_{-1}] \r)\r]} \r.\\
    &\quad\qquad\qquad\quad +\sqrt{3}\Re\l[\ol{\psi_s}
     \sum_{m=\pm 1} m \l(2\psi_{2}|\psi_{m}|^2+\ol{\psi_{2}}\psi_{m}^2\r)\r]\\
    &\quad\qquad\qquad\quad  \l.+ 2\sqrt{3}\Re\l[\ol{\psi_s}
    \l(\ol{\psi_{-2}}\psi_{1}\psi_{-1}+2\psi_{-2}\Re\l[\ol{\psi_{1}}\psi_{-1}\r] \r)\r] \vphantom{\Re\l[\ol{\psi_s\psi_{2}}
     \sum_{m=\pm 1} m \l(2|\psi_{m}|^2+\psi_{m}^2\r)\r]} \r).
    \end{aligned}
\eeq
 The coefficients $c,d$ are given by \eqref{eq:cdefnn}. Moreover, for $m=1,2,4$, we introduced
\beq
\label{eq:csdefn}
    c^{(ms)} = \int_0^\it  f_4(p) g_s(p)^m  \,\d p,\qquad
  d^{(2s)} = \int_0^\it  f_2(p) g_s(p)^2  \,\d p,
\eeq
with $f_2,f_4$ as in \eqref{eq:f24defn} and
\beq
\label{eq:gsdefn}
	g_s(p)=\l|\frac{\varrho_s(p)}{\varrho(p)}\r|.
\eeq

\item The following are equivalent:
\be{itemize}
    \item $d c^{(2s)} < \frac{5}{7}  cd^{(2s)}$,
\item for all sufficiently small $\eps>0$, and for any minimizer $(\psi_{-2},\ldots,\psi_2)$ of $\curly{E}^{GL}_{\text{$d$-wave}}$,
there exists $\psi_s$ with $|\psi_s|=\eps$ such that
\beq
\label{eq:thmmixedTI}
    \curly{E}^{GL}_{\text{$(s+d)$-wave}}(\psi_s,\psi_{-2},\ldots,\psi_2) < \curly{E}^{GL}_{\text{$d$-wave}}(\psi_{-2},\ldots,\psi_2)=\min \curly{E}^{GL}_{\text{$d$-wave}}.
\eeq
\e{itemize}

\item The following are equivalent:
\be{itemize}
    \item $d^{(2s)} c^{(2s)} \leq d c^{(4s)}$,
    \item for all sufficiently small $\eps>0$, and for any minimizer $\psi_{s}$ of $\curly{E}^{GL}_{\text{$s$-wave}}$,
    there exists $(\psi_{-2},\ldots,\psi_2)$
 with $|\psi_m|<\eps$ for $m=-2,\ldots,2$ such that
\beq
\label{eq:thmmixedTI2}
    \curly{E}^{GL}_{\text{$(s+d)$-wave}}(\psi_s,\psi_{-2},\ldots,\psi_2) < \curly{E}^{GL}_{\text{$s$-wave}}(\psi_{s})=\min \curly{E}^{GL}_{\text{$s$-wave}}.
\eeq
\e{itemize}
\e{enumerate}
\e{thm}

We see that $\curly{E}_{\text{$(s+d)$-wave}}^{GL}$ yields a much richer GL theory than $\curly{E}^{GL}_{\text{$d$-wave}}$. Especially the terms which depend on the relative phases of several GL order parameters make this a rather challenging minimization problem. Accordingly, we no longer have an explicit characterization of the set of minimizers. However, using (ii) and (iii) above, we immediately obtain

\be{cor}[Non-trivial coupling of $s$- and $d$-waves]
\label{cor:mixedcoupling}
In the situation of Theorem \ref{thm:mixedTI} suppose that $d c^{(2s)} < \frac{5}{7}  cd^{(2s)}$ and $d^{(2s)} c^{(2s)} \leq d c^{(4s)}$.
Then any minimizer $(\psi_{s},\psi_{-2},\ldots,\psi_2)$ of $\curly{E}^{GL}_{\text{$(s+d)$-wave}}$ must satisfy $\psi_{s}\neq 0$ and $\psi_m\neq0$ for some $-2\leq m\leq2$.	
\e{cor}

\be{rmk}
\label{rmk:mixed}
\be{enumerate}[label=(\roman*)]
\item The existence of $V$ such that the assumption on $\ker(K_{T_c}+V)$ in Theorem \ref{thm:mixedTI} holds for appropriate parameter values follows from statement (ii) of Theorem \ref{thm:appendix}.

\item Using the same method and the two-dimensional analogues of all quantities above, one can also compute the GL3 that arises for a two-dimen\-sional isotropic $(s+d)$-wave superconductor
\beq
\begin{aligned}
&4\pi \curly{E}^{GL}_{\text{$(s+d)$-wave},2D}(\psi_s,\psi_{x^2-y^2},\psi_{xy})\\
&=	3c|\psi_{x^2-y^2}|^4+3c|\psi_{xy}|^4+2c^{(4s)}|\psi_s|^4+2c\Re[\ol{\psi_{x^2-y^2}}^2 \psi_{xy}^2]\\
&\quad + 4c |\psi_{x^2-y^2}|^2 |\psi_{xy}|^2+ 4 c^{(2s)} \Re[\ol{\psi_s}^2 (\psi_{x^2-y^2}^2+\psi_{xy}^2)]\\
&\quad + 8 c^{(2s)}|\psi_s|^2 (|\psi_{x^2-y^2}|^2+|\psi_{xy}|^2)\\
&\quad	-4\pi d \l(|\psi_{x^2-y^2}|^2+|\psi_{xy}|^2\r) -4\pi d^{(2s)} |\psi_s|^2
\end{aligned}
\eeq
Its complexity lies somewhere between the GL theories in Theorems \ref{thm:pureTI2D} and \ref{thm:mixedTI}. Setting $\psi_{xy}=0$ (that is, we forbid the $d_{xy}$ channel ad hoc), we obtain the GL2 
\beq
\begin{aligned}
&4\pi \curly{E}^{GL}_{\text{$(s+d)$-wave},2D}(\psi_s,\psi_{x^2-y^2},0)\\
&= 3c|\psi_{x^2-y^2}|^4+2c^{(4s)}|\psi_s|^4+ 4 c^{(2s)} \Re[\ol{\psi_s}^2 \psi_{x^2-y^2}^2]\\
&\quad 8 2c^{(2s)} |\psi_s|^2 |\psi_{x^2-y^2}|^2 -4\pi d |\psi_{x^2-y^2}|^2 -4\pi d^{(2s)} |\psi_s|^2
\end{aligned}
\eeq
Compare this with $\curly{E}^{GL}_{\text{$d$-wave},2D}$ from Theorem \ref{thm:pureTI2D}. While one cannot complete the square because
the coefficients differ in a way that depends on the microscopic details, notice that the only phase-dependent term is of the form 
\beq
4 c^{(2s)}\Re[\ol{\psi_{s}}^2\psi_{x^2-y^2}^2]
\eeq
with $c^{(2s)}>0$. It is then clear that for minimizers, the $d_{x^2-y^2}$- and $s$-wave order parameters must have a relative phase of $\pm i$. \e{enumerate}
\e{rmk}

\section{Part III: Radial potentials with ground states of arbitrary angular momentum}
\label{sec:bessel}
In this part, $D=3$ and $\mu>0$. Recall that
\beq
\label{eq:KTdefnrecall} 
K_T(p)=\frac{p^2-\mu}{\tanh\l(\frac{p^2-\mu}{2T}\r)},
\eeq
and the operator $K_T$ is multiplication by the function $K_T(p)$ in Fourier space. Recall the definition \eqref{eq:Vdeltadefns}  of the Dirac delta potentials 
\beqs
V_{\lam,R}(\x)=-\lam\de(|\x|-R),
\eeqs
for $\lam,R>0$.

The following theorem says that, given a non-negative integer $l_0$, we can choose parameter values for $\mu,\lam,R$ from appropriate open intervals such that the zero-energy ground state sector of $K_{T_c}+V_{\lam,R}$ lies entirely within the angular momentum sector $\curly{H}_{l_0}$. 

\be{thm}
\be{enumerate}[label=(\roman*)]
\label{thm:appendix}
\item Let $l_0$ be a non-negative integer. For every $R>0$, there exist an open interval $I\subset \set{R}_+$ and $\lam_*>0$ such that for all $\mu\in I$ and all $\lam\in (0,\lam_*)$ there exists $T_c>0$ such that
\begin{align}
\label{eq:appendix0}
        \inf\textnormal{spec}(K_{T_c}+V_{\lam,R})&=0,\\
\label{eq:appendix1}
    \ker(K_{T_c}+V_{\lam,R}) &=\spa\{\rho_{l_0}\}\otimes \curly{S}_{l_0},\\
\label{eq:appendix2}
    \inf\textnormal{spec}(K_{T}+V)&<0,\quad \forall T<T_c.
\end{align}
Explicitly, the (non-normalized) radial part is
\beq
\label{eq:rholdefn}
\rho_{l_0}(r)= r^{-1/2} \int_0^\it p \frac{\curly{J}_{l_0+\frac{1}{2}}(rp)\curly{J}_{l_0+\frac{1}{2}}(Rp)}{K_{T_c}(p)}\d p.
\eeq
\item For every $R>0$, there exists $T_*>0$ such that for all $T_c<T_*$, there exist $\mu, \lam>0$ such that 
\begin{align}
    \label{eq:appendix3}
    \inf\textnormal{spec}(K_{T_c}+V_{\lam,R})&=0,\\
    \label{eq:appendix4}
	\ker(K_{T_c}+V_{\lam,R}) &= \spa\{\rho_{0}\}\otimes \curly{S}_{0} + \spa\{\rho_{2}\}\otimes \curly{S}_{2},\\
    \label{eq:appendix5}
	\inf\textnormal{spec}(K_{T}+V)&<0,\quad \forall T<T_c.
\end{align}
with $\rho_0,\rho_2$ as in \eqref{eq:rholdefn}.
\e{enumerate}
\e{thm}

\be{rmk}
\label{rmk:appendix}
\be{enumerate}[label=(\roman*)]
\item To be completely precise, in (i) there exists $T_0$ such that the analogue of \eqref{eq:appendix0}-\eqref{eq:appendix2} holds with $T_0$ in place of $T_c$. Then $T_c=T_0$ by definition \eqref{eq:Tcdefn}.
\item The parameter $R$ can be removed by rescaling $\mu,\lam$ and $T$ appropriately.
\item In statement (i), for given $\mu\in I$, $\lam\in(0,\lam_*)$ and $R>0$, $T_c$ is given as the unique solution to the implicit relation
\beq
1=\lam\int_0^\it \frac{pR}{K_{T_c}(p)} \curly{J}_{l_0+1/2}^2(pR) \d p.
\eeq
\item The fact that statement (i) holds for open intervals of $\mu$ and $\lam$ values is to be interpreted as saying that the occurrence of degenerate ground states for $K_{T_c}+V_{\lam,R}$ is ``not un-generic''. This may be surprising at first sight, because if one replaces $K_T+V$ by the Schr\"odinger operator $-\nabla^2+V$, the Perron-Frobenius Theorem (see e.g.\ \cite{LiebLoss}) implies that the ground state is always \emph{simple}. %The fact that the kinetic operator $K_T$ allows for a larger family of ground states than $-\nabla^2$ can be intuitively understood from the observation that its symbol \eqref{eq:KTdefnrecall} vanishes on a codimension-one set (the Fermi sphere $\{p^2=\mu\}$), whereas $p^2$ the symbol of $-\nabla^2$ vanishes only at the origin. 
\item The proof critically uses that $K_T(p)$ is small (for small enough $T$) on the set $\{\p:\p^2=\mu\}$. Note that this set would be empty for $\mu< 0$. 
\item It is interesting to compare Theorem \ref{thm:appendix} with Theorem 2.2 from \cite{FrankHainzlNabokoSeiringer07} which characterizes the critical temperature in the weak-coupling limit $\lam\rightarrow 0$ through an effective Hilbert-Schmidt operator $\curly{V}_\mu$ acting only on $L^2$ of the Fermi sphere. For radial potentials, \cite{FrankHainzlNabokoSeiringer07} shows that $\ker(K_{T_c}+V)\subset\curly{H}_{l_0}$ for all sufficiently small $\lambda$ iff $l_0$ is the unique minimizer of
\beq
\label{eq:FHNS}
  l\mapsto \frac{\sqrt{\mu}}{2\pi^2} \int V(\x) |j_l(\sqrt{\mu} |\x|)|^2 \d \x
\eeq
where $j_l(z)=\sqrt{\frac{\pi}{2z}}J_l(z)$ is the spherical Bessel function of the first kind. While our proof here will be independent of \cite{FrankHainzlNabokoSeiringer07}, one can take $V=V_{\lam,R}$ in \eqref{eq:FHNS} to see that the key fact needed to prove $\ker(K_{T_c}+V)\subset\curly{H}_{l_0}$ is that there is a point at which $j_{l_0}^2>\sup_{l\neq l_0}j_{l}^2$. This is the content of Theorem \ref{thm:bessel}.
 %We will explain why it would not hold if $K_{T}+V$ was defined on $L^2(\rt)$ in Subsection \ref{ssect:limitation}. We will also discuss there how the proof breaks down if one tried to generalize (ii) in the natural way, trying to produce potentials such that $\ker(K_T+V_{\lam,R}) = \curly{H}_{l_0} \cup \curly{H}_{l_1}$ for arbitrary even $l_0\neq l_1$.
\e{enumerate}
\e{rmk}

We conclude by discussing the conceivable extensions of Theorem \ref{thm:appendix}. Statement (i) also holds if $K_{T}+V$ is defined on all of $L^2(\rt)$ instead of just on $L^2_{\mathrm{symm}}(\rt)$, so there is nothing special about even functions in (i).

Statement (ii) can not be generalized as much: (a) it will not hold when odd functions are also considered and (b) it does not generalize to arbitrary pairs $(l_0,l_0+2)$ with $l_0$ even. The reason is that, as demonstrated within the proof of Theorem \ref{thm:appendix}, for small enough $T$, (ii) is equivalent to the existence of a point where $\curly{J}_{1/2}>\curly{J}_{l+1/2}$ for all even $l\geq 1$. The generalizations to more $l$-values described above require the analogous inequalities for Bessel functions. However, these facts will not hold in the cases above, as becomes plausible when considering Figure \ref{fig:bessel}. 

\section{Proofs for part I}
The strategy of the proof follows \cite{FrankHainzlSeiringerSolovej12}. 

We introduce the family of BCS states $\Gam_\Delta$ from which the trial state generating the upper bound will be chosen. The relative entropy identity \eqref{eq:REidentity} rewrites the difference of BCS free energies as terms involving $\Gam_\Delta$. 

The \emph{main simplification} of our proof as compared to \cite{FrankHainzlSeiringerSolovej12} is then in the ``semiclassical'' Theorem \ref{thm:SC}. While \cite{FrankHainzlSeiringerSolovej12} requires elaborate semiclassical analysis for analogous results, the proof in our technically simpler translation-invariant case reduces to an ordinary Taylor expansion.

 Afterwards, we discuss how one concludes Theorem \ref{thm:mainTI} by separately proving an upper and a lower bound. In the lower bound, the degeneracy requires modifying the arguments from \cite{FrankHainzlSeiringerSolovej12} slightly.

%\be{rmk}
%\label{rmk:step3}
%The lower bound, specifically step 3, is the place where the degeneracy requires modifying the proof of the lower bound in \cite{FrankHainzlSeiringerSolovej12} slightly. The key point is that $L^4$ norms do not behave nearly as well for sums of orthonormal functions as $L^2$ norms do. 
%\e{rmk}

\subsection{Relative entropy identity}
All integrals are over $\set{R}^D$ unless specified otherwise. We introduce the family of operators
\beq
\label{eq:Gamdeldefn}
	\ft\Gam_\Del := \frac{1}{1+\exp(\ft H_\Del/T)},\qquad \ft H_\Del:=\ttmatrix{\goth{h}}{\ft\Del}{\ol{\ft\Del}}{-\goth{h}}.
\eeq
Here $\Del$ is an even function on $\set{R}^D$ and we have introduced 
\beq
\label{eq:gothhdefn}
	\goth{h}(p)=p^2-\mu,	
\eeq
the energy of a single unpaired electron of momentum $\p$. Note that the choice $\ft \Delta\equiv0$ in \eqref{eq:Gamdeldefn} indeed yields the normal state $\Gam_0$ defined in \eqref{eq:normalstate}.

Recall that $\Gam$ is a BCS state iff $0\leq \ft\Gam \leq 1$ and $\ft\Gam$ is of the form \eqref{eq:BCSstatedefn}.

\be{prop}
$\Gam_\Del$ defined by \eqref{eq:Gamdeldefn} is a BCS state and
\beqs
 \ft\Gam_\Del(\p) = \ttmatrix{\ft{\gam}_\Del(\p)}{\ft{\al}_\Del(\p)}{\ol{\ft{\al}_\Del(\p)}}{1-\ft{\gam}_\Del(\p)}
\eeqs
with
\begin{align}
 			\label{eq:gamdelta}
			\ft{\gam}_\Delta(\p)&=\frac{1}{2} \left(1-(p^2-\mu)	\frac{\tanh(E_\Del(\p)/(2T))}{E_\Del(\p)}\right), \\
			\label{eq:aldelta}
 \		\ft{\al}_\Delta(\p)&=-\frac{\ft\Del(\p)}{2} \frac{\tanh(E_\Del(\p)/(2T))}{E_\Del(\p)},\\
 E_\Del(\p)&=\sqrt{\goth{h}(p)^2+|\Del(\p)|^2}.
 \end{align}
\e{prop}

\be{proof}
It is obvious from \eqref{eq:Gamdeldefn} that $0\leq \ft\Gam_\Del(\p)\leq 1$. Since $(\ft H_\Del)^2=E_\Del^2 I_2$ and since $\tanh(x)/x$ only depends on $x^2$, it follows that
\beqs
\begin{aligned}
		\ft\Gam_\Del &=\frac{1}{1+\exp(\ft H_\Del/T)} = \frac{1}{2} \l(	1- \tanh(\ft H_\Del/(2T))\r)\\
		& = \frac{1}{2} \l(	1- \frac{\ft H_\Del}{E_\Del} \tanh(E_\Del/(2T))\r),
		\end{aligned}
\eeqs
which yields \eqref{eq:gamdelta} and \eqref{eq:aldelta}.
\e{proof}

\noindent We now give an identity which rewrites the difference $\curly{F}_T^{BCS}(\Gam)-\curly{F}_T^{BCS}(\Gam_0)$ in terms of more manageable quantities involving $\Gam_\Del$, one of them is the relative entropy. 

\be{prop}[Relative Entropy Identity, \cite{FrankHainzlSeiringerSolovej12}]
\label{prop:REidentity}
Let $\Gam$ be an admissible BCS state and $\goth{a}\in H^1_{sym}(\set{R}^D)$. Set $\ft \Del= 2\ft{ V \goth{a}}$. It holds that
\beq
\begin{aligned}
		&\curly{F}^{BCS}_{T}(\Gam)-\curly{F}^{BCS}_{T}(\Gam_0)\\
		&= -\frac{T}{2} \Tr{\log\left( 1+ e^{-\ft H_\Delta/T}\right) 
		- \log\left( 1+ e^{-\ft H_0/T}\right) }\\
		\label{eq:REidentity}
		 &\quad + \frac{T}{2}\goth{H}(\Gam,\Gam_\Del) -  \int V |\goth{a}|^2 \dx + \int V |\al-\goth{a}|^2 \dx
		\end{aligned}
\eeq
where $\goth{H}(\Gam,\Gam_\Del)$ is the \emph{relative entropy} defined by
\beq
%\begin{aligned}
		\goth{H}(\Gam,\Gam_\Del)	:=\Tr{\phi(\ft\Gam,\ft \Gam_\Del)}.
%\end{aligned}
\eeq
Here we introduced
\beqs
\phi(x,y)=x(\log(x)-\log(y))+(1-x)(\log(1-x)-\log(1-y)), \quad\forall 0\leq x,y\leq 1.
\eeqs
\e{prop}

\be{proof}
This is a computation, see \cite{FrankHainzlSeiringerSolovej12} or \cite{FrankHainzlSeiringerSolovej14}.
\e{proof}

For the sake of comparability with \cite{FrankHainzlSeiringerSolovej12}, we note that in the translation-invariant case the $L^2$-trace per unit volume of a locally trace-class operator (which they denote by $\textnormal{Tr}$) is just the integral of its Fourier transform and so 
\beqs
		\textnormal{Tr}[\Gam]= \int_{\set{R}^D} \Tra{\set{C}^2}{\ft\Gam(\p)} \dpp.
\eeqs

\subsection{``Semiclassical'' expansion}
We prove Theorem \ref{thm:SC} by a Taylor expansion, which is sufficient because of the simplifications introduced by the translation-invariance. The analogous results in \cite{FrankHainzlSeiringerSolovej12} require many more pages of challenging semiclassical analysis. 

\subsubsection{The result and the key lemma}
Recall the definition of $g_1$ in \eqref{eq:g0defn}. The following is the main consequence of the Taylor expansion

\be{thm}
\label{thm:SC}
Let $\ft \Del=2h\ft{ V\goth{a}}$ for some $\goth{a}\in \ker(K_{T_c}+V)$. Define $h>0$ by
\beq
\label{eq:hdefn}
h=\sqrt{\frac{T_c-T}{T_c}}.
\eeq
Then, as $h\rightarrow 0$,
\beq
\label{eq:expansion}
\curly{F}^{BCS}_{T}(\Gam_\Del)-\curly{F}^{BCS}_{T}(\Gam_0)=h^4 E_2+O(h^6),
\eeq
where 
\beq
\label{eq:E2defn}
E_2=\frac{1}{16T_c^2}\int \frac{g_1(\goth{h}(p)/T_c)}{\goth{h}(p)} |t(\p)|^4	\dpp- \frac{1}{8T_c}\int	\frac{1}{\cosh^2(\goth{h}(p)/(2T_c))} |t(\p)|^2	\dpp
\eeq
with $t=2\ft{V\goth{a}}$.
\e{thm}

We emphasize that this is the place where the effective gap operator appears in the analysis. The choice $\goth{a}\in\ker(K_{T_c}+V)$ ensures that there are no $O(h^2)$ terms in the expansion \eqref{eq:expansion}.

The theorem follows from the key 

\be{lm}
\label{lm:SC}
Let $\Gam_\Del$ be given by \eqref{eq:Gamdeldefn} with $\ft\Del(p)=h t(p)$ for a function $t$ satisfying 
\beq
\label{eq:tcases}
t\in L^q(\set{R}^D) \textnormal{ with } 
\be{cases}
 q=\it &\textnormal{if } D=1,\\
 4<q<\it &\textnormal{if } D=2,\\
 q=6 &\textnormal{if } D=3. 
\e{cases}
\eeq
Then, as $h\rightarrow 0$,
\be{enumerate}[label=(\roman*)]
\item 
\beq
\begin{aligned}
\label{eq:semiclassicsTI}
&-\frac{T}{2} \Tr{\log\left( 1+ e^{-\ft H_\Delta/T}\right) 
		- \log\left( 1+ e^{-\ft H_0/T}\right) }\\
		&=h^2 E_1+h^4 E_2 +O(h^6)
\end{aligned}
\eeq
where $E_2$ is defined by \eqref{eq:E2defn} and (see \eqref{eq:g0defn} for $g_0$)
\beq
\label{eq:E1defn}
\begin{aligned}
E_1=-\frac{1}{4T_c} \int g_0(\goth{h}(p)/T_c) |t(\p)|^2\dpp,
\end{aligned}
\eeq

\item 
\beq
\|\al_\Del-\check\phi\|_{H^1} = O(h^3)
\eeq
with $\phi(\p)=-h\frac{t(p)}{2T_c} g_0(\goth{h}(\p)/T_c)$.
\e{enumerate}
\e{lm}

This may be compared to Theorems 2 and 3 in \cite{FrankHainzlSeiringerSolovej12}. 

To conclude Theorem \ref{thm:SC} from the key lemma, we need a regularity result for the translation-invariant operator. 

\be{prop}[]
\label{prop:regularity1}
Let $a\in H^1(\set{R}^D)$ satisfy $(K_{T_c}+V)a=0$. Then, $\ft{a}\in L^\it(\set{R}^D)$. Let $t:=\ft{Va}$ and $\jap{p}:=(1+p^2)^{1/2}$. Then, $\jap{p}^{-1} t\in L^2(\set{R}^D)$ and $t$ satisfies \eqref{eq:tcases}. 
\e{prop}

\be{proof}
Recall Assumption \ref{ass:V} on the potential $V$. When $V\in L^{p_V}(\set{R}^D)$, then the result follows from Proposition 2 in \cite{FrankHainzlSeiringerSolovej12}. For the potentials $V_{\lam,R}$ in $D=3$, the regularity properties can be read off directly from the explicit solution of the eigenvalue problem $(K_{T_c}+V_{\lam,R})a=0$, see \eqref{eq:Fourierrep} in the proof of Lemma \ref{lm:monotone} for its Fourier representation. Indeed, since $Y_{l,m}$ and the Bessel function of the first kind $\curly{J}_{l+1/2}$ are smooth and bounded with $\curly{J}_{l+1/2}(0)=0$ and since $E<2T$, we get $\ft{a}\in L^\it$. Moreover, 
\beqs
t(\p)\propto Y_{l,m}\l(\vt,\vp\r) \frac{\curly{J}_{l+1/2}(p R)}{\sqrt{p}}
\eeqs
and since $\curly{J}_{l+1/2}$ also decays like $p^{-1/2}$ for large $p$-values, the regularity properties of $t$ follow. In $D=1,2$, one can again solve the eigenvalue problem $(K_{T_c}+V_{\lambda,R})a=0$ explicitly and obtains the claimed regularity by similar considerations. The details are left to the reader.
\e{proof}

\be{proof}[Proof of Theorem \ref{thm:SC}]
First, note that $t=2\ft{ V\goth{a}}$ has all the regularity properties needed to apply (i), thanks to Proposition \ref{prop:regularity1}.
We invoke the relative entropy identity \eqref{eq:REidentity} and use Lemma \ref{lm:SC} to find
\beq
\begin{aligned}
\label{eq:semiclassicsTIpf}	
&\curly{F}^{BCS}_{T}(\Gam_\Del)-\curly{F}^{BCS}_{T}(\Gam_0)\\
&=h^2E_1 +h^4 E_2
	 -  h^2 \int V |\goth{a}|^2\d x + \int V |\al_\Del-h \goth{a}|^2\d x+O(h^6).
\end{aligned}
\eeq
Observe that 
\beq
\label{eq:g0KT}
g_0(\goth{h}(p)/T_c) = T_c K_{T_c}^{-1}(p).
\eeq
By Plancherel and the eigenvalue equation $(K_{T_c}+V)\goth{a}=0$, \eqref{eq:semiclassicsTIpf} becomes
\beqs
\curly{F}^{BCS}_{T}(\Gam_\Del)-\curly{F}^{BCS}_{T}(\Gam_0)=h^4 E_2+ \int V |\al_\Del-h \goth{a}|^2\d \x+O(h^6).
\eeqs
Thus, it remains to show
\beq
\label{eq:term31}
		\int V(\x) |\al_\Del(\x)-h \goth{a}(\x)|^2 \dx= O(h^6).
\eeq 
To see this, recall that $V$ is form-bounded with respect to $-\nabla^2$, so it suffices to prove that $\|\alpha_\Delta - h\goth{a}\|_{H^1}=O(h^3)$. Using the eigenvalue equation and \eqref{eq:g0KT},
\beqs
\ft{\goth{a}}(\p)=-K_{T_c}^{-1}(p)\ft{V\goth{a}}(\p)=-\frac{t(\p)}{2T_c} g_0(\goth{h}(p)/T_c)
\eeqs
and so \eqref{eq:term31} follows from Lemma \ref{lm:SC} (ii).
\e{proof}

\subsubsection{Proof of Lemma \ref{lm:SC}}
\dashuline{Proof of (i)}
We have
$$
\log\left( 1+ e^{-\ft H_\Delta(\p)/T} \right) = -\ft H_\Delta(\p)/(2T) + \log\cosh(\ft H_\Delta(\p)/(2T)).
$$
Observe that $\textnormal{Tr}_{\set{C}^2}\left[\ft H_\Delta(\p)\right]=0$, that $x\mapsto\cosh x$ is an even function and that $\ft H_\Delta(\p)^2=E_\Delta(\p)^2 I_2$. We find
\beqs
\begin{aligned}
\textnormal{Tr}_{\set{C}^2}\left[\log\left( 1+ e^{-\ft H_\Delta/T} \right) \right]
&= \textnormal{Tr}_{\set{C}^2}\left[\log\cosh(\ft H_\Delta(p)/(2T)) \right]\\
&= 2 \log\cosh(E_\Delta(p)/(2T)) \,.
\end{aligned}
\eeqs
This and a similar computation for $\Delta=0$ show that
\begin{align}
\nonumber
& -\frac{T}{2} \Tr{\log\left( 1+ e^{-\ft H_\Delta/T}\right) 
		- \log\left( 1+ e^{-\ft H_0/T}\right) } \\
\label{eq:term11}
		&= -T \int \l( \log\cosh(E_\Del/(2T))-\log\cosh(\goth{h}/(2T)) \r)\dpp.
\end{align}
We denote the function in \eqref{eq:term11} by
\beqs
f(h^2):=T(h^2)\l(\log\cosh\l(\frac{E(h^2)}{2T(h^2)}\r)-\log\cosh\l(\frac{E(0)}{2T(h^2)}\r)\r),
\eeqs
where we wrote $E(h^2)$ for $E_\Del$ and $T(h^2)=T_c(1-h^2)$. Note that $E'=|t|^2/(2E)$ and recall the definition \eqref{eq:g0defn} of $g_0$ and $g_1$. By an easy computation
\begin{align*}
	f(0)&=0,\\
	f'(0)&= -g_0(\goth{h}/T_c) \frac{|t|^2}{4T_c},\\
	\frac{1}{2} f''(0)&=	\frac{g_1(\goth{h}/T_c)}{\goth{h}} \frac{|t|^4}{16 T_c^2} - \frac{1}{\cosh^2(\goth{h}/(2T_c))}\frac{|t|^2}{8 T_c}.
\end{align*}
With this, we can expand \eqref{eq:term11} as follows
\begin{align}
\label{eq:term12}
		\frac{T}{2}&\Tr{\log\ft\Gam_\Del-\log\ft\Gam_0}\\
		\nonumber
		 = &-h^2	\int g_0(\goth{h}/T_c) \frac{|t|^2}{4T_c} \dpp\\		 
		&+ h^4 \l(\frac{1}{16 T_c^2}\int \frac{g_1(\goth{h}/T_c)}{\goth{h}} |t|^4 \dpp - \frac{1}{8 T_c}\int \frac{1}{\cosh^2
		\l(\frac{\goth{h}}{2T_c}\r)} |t|^2 \dpp\r)+O(h^6).
\end{align}
It remains to check that the $O(h^6)$ term is indeed finite. Using the Lagrange remainder in Taylor's formula, it suffices to show
\beq
\label{eq:Oh6finite}
\int \sup_{0<\delta<h^2} \frac{1}{3!} |f'''(\delta)| \dpp <\it.
\eeq
We will control this quantity in terms of appropriate integrals over $t$ which are finite by our assumptions on $t$. We introduce the function
\beq
\label{eq:g2defn}
	g_2(z):=g_1'(z) +\frac{2}{z} g_1(z) = \frac{1}{2z} \frac{1}{\cosh^2(z/2)} \tanh(z/2).
\eeq
By a straightforward computation
\begin{align*}
	\frac{1}{3!}f'''(\delta) = \frac{1}{8 T(\de)^3} &\l[	\frac{|t|^6}{12 E(\de)^2} \l(3\frac{g_1(E(\de)/T(\de))}{E(\de)/T(\de)}-g_2(E(\de)/T(\de))\r)\r.\\
	&- \frac{T_c}{2T(\de)} |t|^4 g_2(E(\de)/T(\de))\\
		&\l.  +T_c^2	|t|^2	\l(\frac{1}{\cosh^2\l(\frac{E(\de)}{2T(\de)}\r)}-\l(\frac{E(\de)}{T(\de)}\r)^2 g_2(E(\de)/T(\de))\r)\r].
\end{align*}
Note that, for $h^2$ small enough, $T_c/2\leq T(\delta)\leq 2T_c$ for all $0<\delta<h^2$. Using this and the fact that $\frac{1}{\cosh^2(z)}$ and $g_2(z)$ are monotone decreasing for $z>0$, we can estimate
\begin{align}
\nonumber
		\int &\sup_{0<\delta<h^2} \frac{1}{3!}|f'''(\delta)| \dpp\\
		\leq &C_1 \int |t|^6 \sup_{0<\delta<h^2} E(\de)^{-2} \l|3\frac{g_1(E(\de)/T(\de))}{E(\de)/T(\de)}-g_2(E(\de)/T(\de))\r|\dpp \\
\label{eq:term13}		
		&+ C_2 \int |t|^4 g_2(\goth{h}/(2T_c))\dpp\\
		\nonumber
		&+	C_3 \int |t|^2 	\l( \frac{1}{\cosh^2(\goth{h}/(4T_c))}	+g_2(\goth{h}/(2T_c))\sup_{0<\delta<h^2}	E(\de)^2 \r) \dpp.
\end{align}
Here $C_1,C_2,C_3$ denote constants which depend on $D,T_c$ and may change from line to line in the following. For definiteness, assume $D=3$. The arguments for $D=1,2$ are similar. Since $g_2(z)$ is a bounded function that decays exponentially for large $z$, we can use Cauchy-Schwarz and the fact that $\goth{h}(p)\sim C\jap{p}^{2}$ for large $p$ to conclude
\beqs
C_2 \int  |t|^4 g_2(\goth{h}/(2T_c)) \dpp \leq C_2 \int \l(	|t|^6 + \jap{p}^{-2} |t|^2	\r) \dpp
\eeqs
and the right-hand side is finite by Proposition \ref{prop:regularity1}. Using that $E(\de)^2=\goth{h}^2+\de |t|^2\leq \goth{h}^2+|t|^2$ for small enough $h$, the same argument applies to the $C_3$ term in \eqref{eq:term13}.

The $C_1$ term in \eqref{eq:term13} contains a factor $E(\delta)^{-2}$ which looks troubling because, as $\de\rightarrow 0$, it is of the form $\goth{h}^{-2}$ and thus singular on the sphere $\{\p: \p^2=\mu\}$ if $\mu>0$. For the radial integration, this singularity would not be integrable (and we have not even considered the factor $|t|^6$ yet). However, the singularity is canceled by the factor $3g_1(z)/z-g_2(z)$ with $z=E(\de)/T(\de)$ in \eqref{eq:term13}. To see this, recall the definition \eqref{eq:g0defn} of $g_1$ and \eqref{eq:g2defn} of $g_2$ and observe that $g_1(z)/z$ and $g_2(z)$ are both even functions. Using the power series representation for $\frac{1}{\cosh^2}$ and $\tanh$, it is elementary to check that in the expansion of $3g_1(z)/z-g_2(z)$ the coefficients of order $z^{-2}$ and $z^0$ vanish and so the lowest non-vanishing order is $z^2$. Therefore, the singularity is removed and since $g_1(z)/z$ and $g_2$ are bounded, we get
\beqs
		\sup_{0<\delta<h^2} E(\de)^{-2} \l|3\frac{g_1(E(\de)/T(\de))}{E(\de)/T(\de)}-g_2(E(\de)/T(\de))\r|\leq C<\it.
\eeqs
Since $\int |t|^6 \dpp<\it$ by our assumption on $t$, the $C_1$ term in \eqref{eq:term13} is finite and we have proved \eqref{eq:Oh6finite}.

\dashuline{Proof of (ii)}
From \eqref{eq:aldelta} we have
\beqs
		\ft{\al}_\Del(\p) = -h\frac{t(\p)}{2T} g_0(E_\Del(\p)/T).
\eeqs
Therefore
\beq
\label{eq:term32}
\|\al_\Del-\check \phi\|_{H^1}^2=  h^2 \int \jap{p}^2 |t|^2 |f(h^2)-f(0)|^2 \dpp,
\eeq
where we introduced the function
\beq
f(h^2):= \frac{g_0(E(h^2)/T(h^2))}{2T(h^2)}.
\eeq
Recall that $g_0'=-g_1$. Using this and the fact that for $h^2$ small enough, $T_c/2\leq T(\delta)\leq 2T_c$ for all $0<\delta<h^2$, Taylor's theorem with Lagrange remainder yields
\beqs
\begin{aligned}
&|f(h^2)-f(0)|\\
&\leq C h^2\sup_{0<\delta <h^2} \l(\l| g_0(E(\de)/T(\de))\r| + |g_1(E(\de)/T(\de))|	\l(\frac{|t|^2}{E(\de)}+E(\de)\r)	\r).
\end{aligned}
\eeqs
Note that $g_0(z)$ and $g_1(z)/z$ are monotone decreasing and so
\beqs
\begin{aligned}
|f(h^2)-f(0)|&\leq C h^2 \l|g_0(\goth{h}/(2T_c))\r| + C' \sup_{0<\delta<h^2} \l| \frac{g_1(\goth{h}/(2T_c))}{\goth{h}}\r|	\l(|t|^2+E(\de)^2\r)\\
&\leq 	C h^2\l(\l|g_0(\goth{h}/(2T_c))\r| +	\l|\frac{g_1(\goth{h}/(2T_c))}{\goth{h}}\r|	\l(|t|^2+\goth{h}^2\r)\r)\\
&\leq  C h^2\l(|t|^2 \goth{h}^{-3}+\jap{\goth{h}}^{-1}\r)
\end{aligned}
\eeqs
where in the second step we used that $E(\de)= \goth{h}^2+\de|t|^2\leq \goth{h}^2+|t|^2$ for small enough $h$ and in the third step we used $g_0(z)\leq C\jap{z}^{-1}$ as well as $g_1(z)/z\leq C \jap{z}^{-3}$. Assume $D=3$ for definiteness. We can bound \eqref{eq:term32} as follows
\beqs
	h^2 \int \jap{p}^2 |t|^2 |f(h^2)-f(0)|^2 \dpp
	\leq C h^6  \int \l(|t|^6	\jap{p}^{-10}+|t|^2\jap{p}^{-2}\r)\dpp = C h^6,
\eeqs
where the last equality holds by the assumption on $t$. This proves (ii).
\qed

\subsection{Proof of Theorem \ref{thm:mainTI}}
We follow the strategy in \cite{FrankHainzlSeiringerSolovej12}. That is, we prove theorem Theorem \ref{thm:mainTI} (i) by separately proving an upper and a lower bound on the left-hand side in \eqref{eq:thmmainTI}. The upper bound follows by choosing an appropriate trial state $\Gam_\Del$ and using the semiclassical expansion of the BCS free energy in the form of Theorem \ref{thm:SC}. For the lower bound, we show that the chosen trial states $\Gam_\Del$ indeed describe \emph{any} approximate minimizer $\Gam$ to lowest order in $h$ (this is precisely statement (ii) in Theorem \ref{thm:mainTI}) and conclude by using the semiclassical expansion once again.

\subsubsection{Upper bound}
Recall the definition of $h$ in \eqref{eq:hdefn}. In this section we prove
\beq
\label{eq:upperbound}
		\min_{\Gam}\curly{F}^{BCS}_{T}(\Gam)-\curly{F}^{BCS}_{T}(\Gam_0) \leq h^4\min_{\goth{a}\in \ker(K_{T_c}+V)} \curly{E}^{GL}(\goth{a}) + O(h^6),
\eeq
where $\curly{E}^{GL}$ is given by \eqref{eq:alternative}. 

We get this by using the trial state $\ft\Gam_\Del$, defined by \eqref{eq:Gamdeldefn} with the choice 
\beq
\label{eq:Delform}
\ft\Del=2h\ft{(V\check{\goth{a}})}
\eeq
where $\goth{a}\in \ker(K_{T_c}+V)$ minimizes $\curly{E}^{GL}$ (recall that minimizers exist by Proposition \ref{prop:GL}). Then, \eqref{eq:upperbound} follows from Theorem \ref{thm:SC} and the fact that evaluating the definition \eqref{eq:E2defn} of $E_2$ for the choice
\beqs
t(\p)=\ft{\Del}(\p)/h=-2 K_{T_c}(p) \goth{a}(\p)
\eeqs
produces the definition \eqref{eq:alternative} of $\curly{E}^{GL}(\goth{a})$. 
\qed

\subsubsection{Lower bound: Part A}
Following \cite{FrankHainzlSeiringerSolovej12}, we will prove the lower bound in \eqref{eq:thmmainTI} in conjunction with statement (ii) about approximate minimizers. We consider any BCS state $\Gam$ satisfying 
\beq
\label{eq:approximatemin}
\curly{F}^{BCS}_{T}(\Gam)-\curly{F}^{BCS}_{T}(\Gam_0)\leq O(h^4).
\eeq
Note that we may restrict to such $\Gam$ when minimizing $\curly{F}_T^{BCS}$ thanks to the upper bound \eqref{eq:upperbound} and that \eqref{eq:approximatemin} still includes the approximate minimizers considered in (ii). In \textbf{Part A}, we prove Proposition \ref{prop:xibded}, which says that the off-diagonal element $\al$ of such a $\Gam$ will be close to a minimizer of  $\curly{E}^{GL}$. In \textbf{Part B}, we will use this to get $\curly{F}_T^{BCS}(\Gam)-\curly{F}_T^{BCS}(\Gam_\Del)\geq O(h^6)$ for $\Del$ of the form \eqref{eq:Delform} and hence
\beqs
\curly{F}_T^{BCS}(\Gam)-\curly{F}_T^{BCS}(\Gam_0)\geq \curly{F}_T^{BCS}(\Gam_\Del)-\curly{F}_T^{BCS}(\Gam_0)+O(h^6).
\eeqs
Since we know $\curly{F}_T^{BCS}(\Gam_\Del)-\curly{F}_T^{BCS}(\Gam_0)=h^4\curly{E}^{GL}(\psi_1,\ldots,\psi_n)+O(h^6)$ from Theorem \ref{thm:SC}, this will imply both the lower bound in \eqref{eq:thmmainTI} and statement (ii) about approximate minimizers.

In the remainder of this section, we will prove:
\be{prop}
\label{prop:xibded}
Suppose $\Gam$ satisfies \eqref{eq:approximatemin} and let $P$ denote the orthogonal projection onto $\ker(K_{T_c}+V)$ and let $P^\perp=1-P$. Then, $\|P\al\|_2=O(h)$ and $\|P^\perp\al\|_2=O(h^2)$.
\e{prop}

%We define $\psi_j:=h^{-1} \scp{\al}{a_j}$ and can thus write
%\beqs
%		P\al=h\sum_{j=1}^n \psi_j a_j.
%\eeqs
%Note that orthonormality of the $\{a_j\}_j$ implies $\|P\al\|_2^2=  h^2\sum_j |\psi_j|^2$. Since all norms on $\set{C}^n$ are equivalent, the claim $\|P\al\|_2=O(h)$ is equivalent to $|\psi|_\it=O(1)$ where $|\psi|_\it:=\max_j |\psi_j|$.

This implies statement (ii) in Theorem \ref{thm:mainTI} with $\goth{a}_0\equiv h^{-1} P\al$. The proof of Proposition \ref{prop:xibded} will use the following lemma, which bounds the relative entropy $\goth{H}(\Gam,\Gam_\Del)$ from below in terms of a weighted Hilbert-Schmidt norm. The result without the second ``bonus'' term on the right-hand side first appeared in \cite{HainzlLewinSeiringer08}, the improved version is due to \cite{FrankHainzlSeiringerSolovej12}.

\be{lm}[Lemma 1 in \cite{FrankHainzlSeiringerSolovej12}]
\label{lm:lbonentropy}
For any $0\leq \Gam\leq 1$ and $\Gam^{(H)}=(1+\exp(H))^{-1}$, it holds that
\beq
\begin{aligned}
\label{eq:lbonentropy}
 \goth{H}(\Gam,\Gam^{(H)})\geq &\Tr{\frac{\ft H}{\tanh(\ft H/2)}(\ft\Gam-\ft\Gam^{(H)})^2}\\
  &+ \frac{4}{3}\Tr{(\ft\Gam(1-\ft\Gam)-\ft\Gam^{(H)}(1-\ft\Gam^{(H)}))^2}.
\e{aligned}
\eeq
\e{lm}

\be{proof}
By the identity (5.7) in \cite{FrankHainzlSeiringerSolovej12} and Klein's inequality for $2\times 2$ matrices, \eqref{eq:lbonentropy} even holds pointwise in $\p$.
\e{proof}

Here is a quick outline of the proof of Proposition \ref{prop:xibded}: Following \cite{FrankHainzlSeiringerSolovej12}, we rewrite $\curly{F}_T^{BCS}(\Gam)-\curly{F}_T^{BCS}(\Gam_0)$ by invoking the relative entropy identity \eqref{eq:REidentity}. Then, we bound the right hand side from below by $\scp{\al}{(K_{T}+V)\al}$, which is therefore negative due to \eqref{eq:approximatemin}. Since $K_{T_c}+V\geq 0$ with a spectral gap above zero, this will allow us to conclude that the part of $\al$ lying outside of $\ker(K_{T_c}+V)$ must be small, more precisely that $\|\al-P\al\|_2=O(h^2)$. To get that $\|P\al\|_2$ itself is $O(h)$, we use the second ``bonus'' term on the right-hand side of Lemma \ref{lm:lbonentropy}.

\be{proof}[Proof of Proposition \ref{prop:xibded}]
\dashuline{Step 1:} We first apply the relative entropy identity \eqref{eq:REidentity} with the choice $\goth{a}=0$ to get
\beq
\label{eq:proppf1}
		O(h^4)\geq \curly{F}_T^{BCS}(\Gam)-\curly{F}_T^{BCS}(\Gam_0)=\frac{T}{2} \goth{H}(\Gam,\Gam_0) + \int V |\al|^2 \dx.
\eeq
Next, we use Lemma \ref{lm:lbonentropy}. To evaluate the resulting expression, note that
\beqs
\begin{aligned}
		\frac{\ft H_0}{\tanh(\ft H_0/(2T))}&=K_T I_{2\times 2},\\
		\ft\Gam(1-\ft\Gam)-\ft\Gam_0(1-\ft\Gam_0) &= \l(\ft{\gam}(1-\ft{\gam})-\ft{\gam}_0(1-\ft{\gam}_0)-|\ft{\al}|^2\r) I_{2\times 2},
\end{aligned}
\eeqs
are diagonal matrices. We obtain
\begin{align*}
	\frac{T}{2}\goth{H}(\Gam,\Gam_0)&\geq \int \l( K_T(\cdot)(\ft{\gam}-\ft{\gam}_0)^2 \dpp + K_T(\cdot) |\ft{\al}|^2\r.\\
	&\quad \l.\vphantom{\int}+ \frac{4T}{3} \l(\ft{\gam}(1-\ft{\gam})-\ft{\gam}_0(1-\ft{\gam}_0)-|\ft{\al}|^2\r)^2 \r)\dpp.
\end{align*}
We estimate the first term using $K_T(p)\geq 2T$ and find the lower bound
\beqs
\int \l( K_T(\cdot)|\ft{\al}|^2 +2T  (\ft{\gam}-\ft{\gam}_0)^2+\frac{4T}{3}\l(\ft{\gam}(1-\ft{\gam})-\ft{\gam}_0(1-\ft{\gam}_0)-|\ft{\al}|^2\r)^2 \r) \dpp.
\eeqs
By
\beqs
		(x(1-x)-y(1-y))^2\leq(x-y)^2,\quad \forall 0\leq x,y\leq 1
\eeqs
and the triangle inequality, we get the pointwise estimate
\beqs
 \l(2(\ft{\gam}-\ft{\gam}_0)^2+\frac{4}{3}\l(\ft{\gam}(1-\ft{\gam})-\ft{\gam}_0(1-\ft{\gam}_0)-|\ft{\al}|^2\r)^2\r) \geq \frac{4}{5} |\ft{\al}|^4.
\eeqs
Going back to \eqref{eq:proppf1}, we have shown that
\beq
\label{eq:pf0}
		\frac{4T}{5}\|\ft{\al}\|_4^4 + \scp{\al}{(K_T+V)\al} \leq O(h^4).
\eeq

\dashuline{Step 2:} Next, we replace $K_T$ by $K_{T_c}$ in \eqref{eq:pf0} to make use of the spectral gap of $K_{T_c}+V$. This is an easy version of what is Step 2 of Part A in \cite{FrankHainzlSeiringerSolovej12}, which is more involved because it also removes the dependence on the external fields $A,W$. For us, it suffices to observe that
\beqs
	\frac{\d}{\d T} K_T(p) = \frac{1}{2T^2} \frac{\goth{h}(p)^2}{\sinh^2(\goth{h}(p)/(2T))}
\eeqs
is uniformly bounded in $p$ for all $h$ small enough such that $T>T_c/2$. By the mean-value theorem, $\|K_T-K_{T_c}\|_\it\leq O(h^2)$. Using this on \eqref{eq:pf0}, we find
\beq
\label{eq:gapcon0}
		\frac{4T}{5}\|\ft{\al}\|_4^4 + \scp{\al}{(K_{T_c}+V)\al} \leq O(h^2) \|\al\|_2^2+ O(h^4).
\eeq
Let $\kappa>0$ denote the size of the spectral gap of $K_{T_c}+V$ above energy zero. We write $\al=P\al+P^\perp\al$. Using $(K_{T_c}+V)P\al=0$, we obtain
\beq
\label{eq:gapcon}
	\frac{4T}{5}\|\ft{\al}\|_4^4 + \kappa \|P^\perp\al\|_2^2 \leq O(h^2) \|\al\|_2^2 +O(h^4).
\eeq
For the moment we drop the first term on the left-hand side of \eqref{eq:gapcon} and use orthogonality to get
\beqs
  \|P^\perp\al\|_2^2 \leq O(h^2) (\|P\al\|_2^2+\|P^\perp\al\|_2^2)+O(h^4)
\eeqs
which yields
\beq
\label{eq:xibound}
\|P^\perp\al\|_2^2\leq O(h^2)\|P\al\|_2^2+O(h^4).
\eeq
Thus, both claims will follow, once we show $\|P \al\|_2=O(h)$.

\dashuline{Step 3:} Here the degeneracy requires a slight modification. We now drop the second term on the left-hand side of \eqref{eq:gapcon} to get
\beq
\label{eq:pf1}
	\|\ft{\al}\|_4 \leq O(h^{1/2}) \|\ft{\al}\|_2^{1/2}+O(h).
\eeq
By orthogonality and \eqref{eq:xibound},
\beq
\label{eq:pf2}
	\|\ft{\al}\|_4 \leq O(h^{1/2}) \|\ft{P\al}\|_2^{1/2}+O(h),
\eeq
On the right-hand side of \eqref{eq:pf1} however, the replacement of $\ft \al$ by $\ft{P \al}$ requires more work. By the triangle inequality for $\|\cdot\|_4$ and \eqref{eq:xibound}
\begin{align*}
  \|\ft{\al}\|_4&\geq \|\ft{P\al}\|_4 - \| \ft{P^\perp\al}\|_4 \geq \|\ft{P\al}\|_4 - \|\ft{P^\perp\al}\|_2^{1/2} \|\ft{P^\perp\al}\|_\it^{1/2} \\
	&\geq \|\ft{P\al}\|_4 - O(h^{1/2}) \|\ft{P\al}\|_2^{1/2} \|\ft{P^\perp\al}\|_\it^{1/2}.
\end{align*}
We use $\ft{P^\perp\al}=\ft{\al}-\ft{P\al}$ and $|\ft{\al}|^2\leq \ft{\gam}(1-\ft{\gam})\leq 1/4$ pointwise to find $\|\ft{P^\perp\al}\|_\it\leq \frac{1}{4} + \|\ft {P\al}\|_\it$. It is slightly more convenient to conclude the argument by choosing an orthonormal basis $\{a_j\}$ for $\ker(K_{T_c}+V)$. This allows us to write
\beq
\label{eq:decomp}
P\al = h\sum_{j=1}^n \psi_j a_j
\eeq
By Proposition \ref{prop:regularity1}, $\|\ft{a_j}\|_\it\leq C$ for all $j$ and therefore $\|\ft{P\al}\|_{\it}\leq O(h) |\psi|_{\it}$. We have shown
\beqs
	\|\ft{\al}\|_4\geq \|\ft{P\al}\|_4 - O(h^{1/2}) \|\ft{P\al}\|_2^{1/2} (1+h |\psi|_\it)^{1/2}.
\eeqs
Combining this with \eqref{eq:pf2}, we obtain
\beq
\label{eq:pf3}
	\|\ft{P\al}\|_4 \leq O(h^{1/2}) \|\ft{P\al}\|_2^{1/2} (1+h |\psi|_\it)^{1/2}+O(h).
\eeq
It remains to bound $\|\ft{P\al}\|_4$ from below in terms of $\|\ft{P\al}\|_2$. Let $R>0$. We split the integration domain into $\{p\leq R\}$ and $\{p> R\}$. Applying H\"older's inequality to the former yields
\beq
\label{eq:pf4}
	\|\ft{P\al}\|_2^2\leq C R^{D/2} \|\ft{P\al}\|_4^2 + h^2|\psi|_\it^2 \sum_{i,j} \int\limits_{\{p>R\}} |\ft{a}_i||\ft{a}_j|\dpp
\eeq
where $C>0$ denotes a constant independent of $h,R$. Note that for all $i,j$, Cauchy-Schwarz implies $|\ft{a}_i||\ft{a}_j|\in L^1(\set{R}^D)$ and so for $R_0>0$ large enough,
\beqs
\sum_{i,j} \int\limits_{\{p>R_0\}} |\ft{a}_i||\ft{a}_j|\dpp < \frac{1}{2}.
\eeqs
We recall \eqref{eq:pf3} to find
\beqs
		\|\ft{P\al}\|_2^2\leq O(h) \|\ft{P\al}\|_2 (1+h |\psi|_\it)+ \frac{1}{2} h^2 |\psi|_\it^2 +O(h^2).
\eeqs
Since the $\{a_j\}$ in \eqref{eq:decomp} are orthonormal, $h|\psi|_\it\leq \|\ft{P\al}\|_2\leq \sqrt{n} h|\psi|_\it$. This implies
\beq
\label{eq:almostresult}
		 \l(\frac{1}{2}+O(h)\r)|\psi|_\it^2 \leq O(1)|\psi|_\it+O(1).
\eeq
Let $h$ be small enough such that the $1/2+O(h)$ term exceeds $1/4$. We conclude that $|\psi|_\it\leq O(1)$. Since $\|\ft{P\al}\|_2\leq \sqrt{n} h|\psi|_\it$, it follows that $\|\ft{P\al}\|_2\leq O(h)$ as claimed.
\end{proof}

\subsubsection{Lower bound: Part B}
We use once more the relative entropy identity \eqref{eq:REidentity}. Together with Lemma \ref{lm:SC} (i) and the eigenvalue equation, we get
\beq
\begin{aligned}
\label{eq:LBTI1}
&\curly{F}^{BCS}_{T}(\Gam)-\curly{F}^{BCS}_{T}(\Gam_0)\\
= &h^4\curly{E}^{GL}(P\al)+\frac{T}{2} \goth{H}(\Gam,\Gam_\Del) + \int V |\al-P\al|^2\dx+O(h^6).
\end{aligned}
\eeq
We see that to prove the lower bound it remains to show
\beq
\label{eq:LBTI2}
		\frac{T}{2} \goth{H}(\Gam,\Gam_\Del) + \int V |\al-P\al|^2\dx = \frac{T}{2} \goth{H}(\Gam,\Gam_\Del) + \int V |P^\perp \al|^2\dx\geq O(h^6).
\eeq
By Lemma \ref{lm:lbonentropy} and the fact that $x\mapsto x/\tanh(x)$ is a monotone function that depends only on $x^2$, we have
\beq
\begin{aligned}
	\frac{T}{2} \goth{H}(\Gam,\Gam_\Del) 
	&\geq \frac{1}{2}\Tr{(\ft\Gam-\ft\Gam_\Del)\frac{\ft H_\Del}{\tanh\l(\ft H_\Del/(2T)\r)}(\ft\Gam-\ft\Gam_\Del)}\\
	&=\frac{1}{2}\Tr{(\ft\Gam-\ft\Gam_\Del)\frac{E_\Del}{\tanh\l(E_\Del/(2T)\r)}(\ft\Gam-\ft\Gam_\Del)}\\
	&\geq \frac{1}{2}\Tr{(\ft\Gam-\ft\Gam_\Del) K_T (\ft\Gam-\ft\Gam_\Del)}.
\end{aligned}
\eeq
Since $K_T\geq 0$, we have for every fixed (i.e.\ $h$-independent) $0<\eps<1$,	
\beqs
\begin{aligned}
	&\frac{1}{2}\Tr{(\ft\Gam-\ft\Gam_\Del) K_T(\ft\Gam-\ft\Gam_\Del)}\\
	&\geq\int K_T |\ft{\al}-\ft{\al}_\Del|^2\dpp\\
	&\geq \int K_T | \ft{P^\perp \al}|^2 \dpp -2\Re \int	K_T \ol{\ft{P^\perp \al}} \l(\ft{P\al}-\ft{\al}_\Del	\r)\dpp\\
	&\geq (1-\eps)\int K_T | \ft{P^\perp \al}|^2 \dpp - C_\eps\int K_T |\ft{P\al}-\ft{\al}_\Del|^2\dpp\\
	&\geq (1-\eps)\int K_T | \ft{P^\perp \al}|^2 \dpp + O(h^6).
\end{aligned}
\eeqs
In the last step, we used Lemma \ref{lm:SC} (ii) and $K_T(p)\leq C \jap{p}^2$ to get
\beq
		\int K_T |\ft{P\al}-\ft{\al}_\Del|^2\dpp = O(h^6).
\eeq
Using these estimates on \eqref{eq:LBTI2} and setting $\xi:=P^\perp \al$, we see that it remains to show that there exists an $h$-independent choice of $0<\eps<1$ such that
\beq
\label{eq:oh6}
   \scp{\xi}{((1-\eps)K_T+V)\xi} \geq  O(h^6).
\eeq
Recall from step 2 of the proof of Proposition \ref{prop:xibded} that $\|K_T-K_{T_c}\|_\it\leq O(h^2)$. Since also $\|\xi\|_2=O(h^2)$ by Proposition \ref{prop:xibded}, we get
$$
\scp{\xi}{((1-\eps)K_T+V)\xi}= \scp{\xi}{((1-\eps)K_{T_c}+V)\xi} + O(h^6).
$$
We claim that there exists a constant $c>0$ such that
\beq
\label{eq:claim}
\scp{\xi}{(K_{T_c}+V) \xi}\geq c \scp{\xi}{K_{T_c} \xi}.
\eeq
Choosing $\eps$ sufficiently small will then give $\scp{\xi}{(1-\eps)K_{T_c}+V)\xi}\geq 0$. Thus, it remains to prove \eqref{eq:claim}. Since $V_-$ is infinitesimally form-bounded with respect to $K_{T_c}$, we have for any $\de>0$
\beq
(1-\de) K_{T_c}\leq K_{T_c}-V_- +C_\de\leq K_{T_c} +V +C_\de
\eeq
or
\beq
K_{T_c}\leq C_1 (K_{T_c} +V)+C_2.
\eeq
Now, on $\ker(K_{T_c}+V)^\perp$, it also holds that $K_{T_c}+V-\kappa\geq 0$ where $\kappa>0$ denotes the gap size. Thus, for all $\lam>0$,
\beq
K_{T_c}\leq (C_1+\lam) (K_{T_c} +V)+C_2 -\lam\kappa,\quad \textnormal{on} \ker(K_{T_c}+V)^\perp
\eeq
and choosing $\lam=C_2/\kappa$, we see that \eqref{eq:claim} follows. This proves (i).

Statement (ii) was proved along the way: Any approximate minimizer satisfies \eqref{eq:approximatemin} and hence Proposition \ref{prop:xibded} implies that its off-diagonal part can be split into $\al=P\al+\xi$ with $\|\xi\|=O(h^2)$. Since $P$ is the projection onto $\ker(K_{T_c}+V)$, $P\alpha\in\ker(K_{T_c}+V)$. Moreover, $\goth{a}_0\equiv h^{-1} P\alpha$ approximately minimizes the GL energy because the proof of the lower bound shows that for all $\Gam$ satisfying \eqref{eq:approximatemin} (not just for actual minimizers), 
\beqs
\curly{F}^{BCS}_T(\Gam)-\curly{F}^{BCS}_T(\Gam_0)\geq h^4 \curly{E}^{GL}(\goth{a}_0)+O(h^6).
\eeqs
This finishes the proof of Theorem \ref{thm:mainTI}.
\qed

\subsection{Proofs of Propositions \ref{prop:formbounded}, \ref{prop:GL} and \ref{prop:new}}
\be{proof}[Proof of Proposition \ref{prop:formbounded}]
For the $L^{p_V}$ potentials, this is a standard argument combining H\"older's inequality and Sobolev's inequality. 

Consider the potentials \eqref{eq:Vdeltadefns}, i.e.\ $V(\x)=-\lam\de(|\x|-R)$ with $\lam,R>0$. Let $f\in H^1(\set{R}^D)$. 
We first consider the case $D=1$. Then
$$
\scp{f}{Vf}=-\lam |f(R)|^2.
$$
We apply the simplest Sobolev inequality
\beq
\label{eq:simplesobolev}
2\sup_{x\in\set{R}}|u(x)| \leq \int_{-\it}^\it |u'(x)| \d x, \quad \forall u\in W^{1,1}(\set{R}),
\eeq
(which follows from the fundamental theorem of calculus) with the choice $u(s)=f(s)^2$. By \eqref{eq:simplesobolev} and Cauchy-Schwarz, we get
$$
|f(R)|^2\leq \int_{-\it}^\it |f(x) f'(x)| \d x\leq \eps \|f'\|_2^2 +\frac{1}{4\eps} \|f\|_2^2
$$
for any $\eps>0$. This proves the claimed infinitesimal form-boundedness of $V$ when $D=1$.

Let now $D=2,3$. We have
\beqs
\scp{f}{Vf}= -\lam \int_{\set{S}^{D-1}} R^{D-1} |f(R\om)|^2 \d\sigma(\om),
\eeqs
where $\d\sigma$ is the usual surface measure on $\set{S}^{D-1}$. Observe that the inequality \eqref{eq:simplesobolev} implies
\beqs
2\sup_{s>0}|u(s)| \leq \int_0^\it |u'(s)| \d s, \quad \forall u\in W^{1,1}_0(\set{R}_+).
\eeqs
We use this with the choice $u(s)=s^{D-1}f(s\om)^2$, pointwise in $\om\in\set{S}^2$, and find
\beq
\label{eq:parent}
\begin{aligned}
&\int_{\set{S}^{D-1}} R^{D-1} |f(R\om)|^2 \d\sigma(\om)\\
&\leq \int_{\set{S}^{D-1}} \int_0^\it \l(\frac{D-1}{2}s^{D-2} |f(s\om)|^2+ s^{D-1} |f(s\om)\del_s f(s\om)|\r)  \d s\d\sigma(\om).
\end{aligned}
\eeq
Consider the first term in the parentheses. We split the integration domain into $s>1$ and $s\leq 1$ and estimate $s^{D-2}<s^{D-1}$ in the first region. By applying H\"older's inequality in the second region, we get
\beqs
\begin{aligned}
\int_{\set{S}^{D-1}} \int_0^\it s^{D-2} |f(s\om)|^2 \d s\d\sigma(\om) 
&< \|f\|_2^2 + \int_{\set{S}^{D-1}} \int_0^1 s^{D-2} |f(s\om)|^2 \d s\d\sigma(\om)\\
&\leq 
\|f\|_2^2 + \l(\int_0^1 s^{D-8/3}\d s\r)^{3/5}\|f\|_5^2\\
&= \|f\|_2^2 + C\|f\|_5^2
\end{aligned}
\eeqs
where $C$ is a finite constant, since $D-8/3>-1$. The $L^5$ norm is infinitesimally form-bounded with respect to $-\nabla^2$ by the usual argument via Sobolev's inequality.

We come to the second term in \eqref{eq:parent} in parentheses. By Cauchy-Schwarz, for every $\eps>0$, it is bounded by
\beqs
\eps \int_{\set{S}^{D-1}} \int_0^\it s^{D-1} |\del_s f(s\om)|^2  \d s\d\sigma(\om)+\frac{1}{4\eps} \|f\|_2^2.
\eeqs
The first term is the quadratic form corresponding to (the negative of) the radial part of the Laplacian, see \eqref{eq:radialLaplacian}. It differs from the full Laplacian by a multiple of the Laplace-Beltrami operator $-\nabla^2_{\set{S}^{D-1}}$, i.e.\ a nonnegative operator. This implies infinitesimal form-boundedness when $D=2,3$. 
\e{proof}

\be{proof}[Proof of Proposition \ref{prop:GL}]
Recall \eqref{eq:alternative} 
\beqs
\begin{aligned}
\curly{E}^{GL}(\goth{a})
=&\frac{1}{T_c} \int_{\set{R}^D} \frac{g_1((p^2-\mu)/T_c)}{(p^2-\mu)/T_c} \l|K_{T_c}(p)\r|^4 |\goth{a}(\p)|^4\dpp\\
&-
\frac{1}{2T_c} \int_{\set{R}^D}  \frac{1}{\cosh^2\l(\frac{p^2-\mu}{2T_c}\r)} \l|K_{T_c}(p)\r|^2 |\goth{a}(\p)|^2\dpp,
\end{aligned}
\eeqs
We denote the quartic term by $A(\goth{a})$ and the quadratic term by $-B(\goth{a})$. Note that $A,B>0$ whenever $\goth{a}$ is not identically zero. 

We use the basis representation of the GL energy mentioned in Remark \ref{rmk:mainTI} (i). That is, we fix a basis $\{a_j\}$ of $\ker(K_{T_c}+V)$ and write $\goth{a}(\p)=\sum_{j=1}^n\psi_j \ft a_j(\p)$ with $(\psi_1,\ldots,\psi_n)\in\set{C}^n$. Then we write $$(\psi_1,\ldots,\psi_n)=L\om, \qquad L\geq 0,\, \om\in S(\set{C}^n),$$ where  $S(\set{C}^n)$ is the unit sphere in $\set{C}^n$. It follows that
\beqs
\begin{aligned}
\inf_{(\psi_1,\ldots,\psi_n)\in\set{C}^n} \curly{E}^{GL}(\psi_1,\ldots,\psi_n)
 &= \inf_{\om\in S(\set{C}^n)} \inf_{L\geq 0} \curly{E}^{GL}(L\om)\\ 
 &= \inf_{\om\in S(\set{C}^n)} \inf_{L\geq 0} \l(L^4 A(\om)-L^2 B(\om)\r)\\
&= \inf_{\om\in S(\set{C}^n)} \frac{-B(\om)^2}{4A(\om)}
\end{aligned}
\eeqs
and since $A,B$ are continuous functions which never vanish on the compact set $S(\set{C}^n)$, the last infimum is finite and attained.
\e{proof}

\be{proof}[Proof of Proposition \ref{prop:new}]
The same argument that proves Theorem \ref{thm:mainTI} (ii) applies for $T>T_c$ and yields the same result with the sign of the $|\goth{a}|^2$ term in the GL energy \eqref{eq:alternative} flipped. Consequently, the unique minimizer of the GL energy is $\goth{a}=0$. To see coercivity of the GL energy around this minimizer, we drop the quartic term and rewrite the the quadratic term as in the proof of Proposition \ref{prop:GL} above. We get
\beqs
\curly{E}^{GL}(\psi_1,\ldots,\psi_n)\geq \eps  \lam_{min}\sum_{j=1}^n |\psi_j|^2 
\eeqs
with
$$
\lam_{min}:=\min_{\om\in S(\set{C}^n)} \frac{1}{2T_c} \int_{\set{R}^D}  \frac{1}{\cosh^2 \l(\frac{p^2-\mu}{2T_c}\r)} \l|K_{T_c}(p)\r|^2 \l|\sum_{j=1}^n\om_j\ft a_j(\p)\r|^2\dpp.
$$
Note that $\lam_{min}>0$, since it is the minimum of a positive, continuous function over a compact set.
\e{proof}

\section{Proofs for part II}

\subsection{Setting}
We use the formulation of GL theory from Remark \ref{rmk:mainTI}(i). We compute the GL coefficients $c_{ijkm}$ and $d_{ij}$ given by formulae \eqref{eq:cdefn} and \eqref{eq:ddefn}. They determine the GL energy $\curly{E}_{\text{$d$-wave}}^{GL}:\set{C}^5\rightarrow \set{R}$ via
\beqs
\curly{E}^{GL}\l(\tilde{\psi}_{-2},\ldots,\tilde{\psi}_2\r) = \sum_{i,j,k,m=-2}^2 c_{ijkm} \ol{\tilde{\psi}_i \tilde{\psi}_j} \tilde{\psi}_k \tilde{\psi}_m - \sum_{i,j=-2}^2 d_{ij} \ol{\tilde{\psi}_i} \tilde{\psi}_j
\eeqs
It remains to pick a convenient basis to compute \eqref{eq:cdefn} and \eqref{eq:ddefn}. Since the Fourier transform maps $\curly{H}_l$ to itself in a bijective fashion, see e.g.\ \cite{SteinWeiss}, we can choose
\beq
\label{eq:basischoiced}
\ft{a}_{m}(\p) = \vr(p)\, Y_m^2(\vartheta,\vp),\qquad \p\equiv (p,\vt,\vp),
\eeq
for an appropriate radial function $\vr$. We will denote the GL order parameter corresponding to $\ft{a}_m$ (in the sense of \eqref{eq:content}) by $\tilde{\psi}_m$ with $-2\leq m\leq 2$. (Note that we use the \emph{ordinary} spherical harmonics $Y_2^m$ \eqref{eq:SH} as a basis because it is more convenient to do computations, but our final result is phrased in terms the basis of \emph{real} spherical harmonics \eqref{eq:realSH}.) 

With the choice \eqref{eq:basischoiced}, equations \eqref{eq:cdefn},\eqref{eq:ddefn} for the GL coefficients read
\begin{align}
\label{eq:cTI}
		c_{ijkm} &= \int p^{-2}f_4(p)  \ol{ Y_2^i(\vartheta,\vp)Y_2^j(\vartheta,\vp)}		Y_2^k(\vartheta,\vp)Y_2^m(\vartheta,\vp)\dpp\\
		\label{eq:dTI}
		d_{ij} &= -\int p^{-2} f_2(p)\ol{Y_2^i(\vartheta,\vp)} Y_2^j(\vartheta,\vp)(p)\dpp,
\end{align}
where $i,j,k,m=-2,\ldots,2$ and we used the functions $f_2,f_4$ defined in \eqref{eq:f24defn}. Note that $f_2,f_4$ are positive (since $g_1$ defined by \eqref{eq:g0defn} satisfies $\frac{g_1(z)}{z}>0$) and radially symmetric.

%%%%%%%%%%%%%%%%%%%%%%%%%%%%%%%%%%%%%%%%%%
\subsection{Proof of Theorem \ref{thm:pureTI}}
While the radial integrals in \eqref{eq:cTI},\eqref{eq:dTI} depend on the details of the microscopic potential $V$ through $\vr$, the integration over the angular variables can be performed explicitly. Since the spherical harmonics form an orthonormal family with respect to surface measure on $\set{S}^2$, we immediately get
\beqs
	d_{ij}=d\delta_{ij}
\eeqs
where $d>0$ is the result of the radial integration in \eqref{eq:dTI}, i.e.\
\beq
\label{eq:deq}
		d= \int_0^\it  f_2(p)  \,\d p
\eeq
and this is the second relation claimed in \eqref{eq:cdefnn}. \\

Next, we consider \eqref{eq:cTI}. Firstly, note that $c_{ijkm}$ is always proportional to the result of the radial integration in \eqref{eq:cTI}, i.e.\
\beq
\label{eq:ceq}
		c= \int_0^\it  f_4(p)  \,\d p
\eeq
and this is the first relation claimed in \eqref{eq:cdefnn}.

\be{table}[t]
\be{center}
\be{tabular}{|c|c|c|c|c|}
 $i$ & $j$ & $k$&	$m$& $c_{ijkm}\cdot28 \pi$\\
 \hline
 $2$ & $2$ & $2$&	$2$& $10 c$\\
 \hline
 $2$ & $1$ & $2$&	$1$& $5c$\\
 \hline
 $1$ & $1$ & $1$&	$1$& $10c$\\
 $0$ & $2$ & $0$&	$2$& $5c$\\
 $1$ & $1$ & $0$&	$2$& $0$\\
 \hline
 $0$ & $1$ & $0$&	$1$& $5c$\\
 $-1$ & $2$ & $-1$&	$2$& $5c$\\
 $0$ & $1$ & $-1$&	$2$& $0$\\
 \hline
 $0$ & $0$ & $0$&	$0$& $15c$\\
 $1$ & $-1$ & $1$&	$-1$& $10c$\\
  $2$ & $-2$ & $2$&	$-2$& $10c$\\
  $0$ & $0$ & $2$&	$-2$& $5c$\\
  $0$ & $0$ & $1$&	$-1$& $-5c$\\
  $1$ & $-1$ & $2$&	$-2$& $-5c$\\
 \hline
 \e{tabular}
\e{center}
\caption{Non-trivial equivalence classes of Ginzburg--Landau coefficients in the pure $d$-wave case. $c$ is defined as the result of the radial integration \eqref{eq:ceq}. Notice that the case $i+j=0$ behaves rather differently. This is due to the fact that the ``pair permutation'' and ``pair sign-flip'' symmetries fall together in this case. We keep the factor $5$ to ensure better comparability with Table \ref{table:GLmixed} later on.}
\label{table:GLpure}
\e{table}

It remains to compute the angular part of the integral in \eqref{eq:cTI}. We express the product of two spherical harmonics of angular momentum $l=2$ as a linear combination of spherical harmonics of angular momentum ranging from $l=0$ to $l=4$. The general relation involves the well-tabulated Clebsch-Gordan coefficients, which we denote by $\langle l_1,l_2;m_1,m_2\vert L;M\rangle$, and can be found in textbooks on quantum mechanics (see e.g.\ \cite{CCT} p.\ 1046):
\beq
\label{eq:clebsch}
\begin{aligned}
		Y_{l_1}^{m_1}(\vartheta,\vp)Y_{l_2}^{m_2}(\vartheta,\vp)		
		 = \sum_{L=|l_1-l_2|}^{l_1+l_2} &\sqrt{\frac{(2l_1+1)(2l_2+1)}{4\pi(2L+1)}} \langle l_1,l_2;0,0\vert L;0\rangle\\
		  &\times\langle l_1,l_2;m_1,m_2\vert L;m_1+m_2\rangle Y_{L}^{m_1+m_2}(\vartheta,\vp).
\end{aligned}
\eeq
Physically, this corresponds to expressing a pair of particles, uncorrelated in the angular variable, in terms of a wave function for the composite system. Since the total angular momentum of the composite system is not determined uniquely by the product wavefunction on the left-hand side, the sum over $L$ appears on the right. However, the total $z$-component of the angular momentum is determined to be $m_1+m_2$. This ``selection rule'' will greatly restrict which $c_{ijkm}$ may be non-zero. 

Now, we can use the orthonormality of the spherical harmonics to compute the angular integrals and find
 \beq
 \label{eq:coeff3}
 \begin{aligned}
 		c_{ijkm} =  \sum_{L=0,2,4} \frac{25c}{4\pi(2L+1)} &\langle 2,2;0,0\vert L;0\rangle^2
		\langle 2,2;i,j\vert L;i+j\rangle\\ &\times \langle 2,2;k,m\vert L;k+m\rangle	\delta_{i+j,k+m},
\end{aligned}
 \eeq
where we used that the Clebsch-Gordan coefficients are real-valued and that $\langle l_1,l_2;0,0\vert L;0\rangle=0$ unless $L$ is even \cite{CCT}. Note that the selection rule from above yielded the necessary relation $i+j=k+m$ for $c_{ijkm}\neq 0$.

 There are further symmetries: Considering the original expression \eqref{eq:cTI}, we  that $c_{ijkm}=c_{jikm}=c_{ijmk}$. Since \eqref{eq:coeff3} shows $c_{ijkm}\in\set{R}$,  \eqref{eq:cTI} also implies that $c_{ijkm}=c_{kmij}$. We subsume these relations as ``pair permutation'' symmetry. Physically, they correspond to the exchange of Cooper pairs. Moreover, as can be seen from reference tables for Clebsch-Gordan coefficients, we have $c_{ijkm}=c_{(-i)(-j)km}$, to which we will refer as ``pair sign-flip'' symmetry. Physically, it is a consequence of the invariance of our system under reflection in the $xy$-plane.

 It thus suffices to look up \eqref{eq:coeff3} in a reference table for Clebsch-Gordan coefficients once for each member of a ``pair permutation''and ``pair sign-flip'' equivalence class, ignoring those tuples $(i,j,k,m)$ which do not satisfy the selection rule $i+j=k+m$. The result is presented in Table \ref{table:GLpure}. By counting the number of elements of each equivalence class, we find
\beq
\label{eq:GLpureold}
\begin{aligned}	
&\curly{E}_{\text{$d$-wave}}^{GL}\l(\tilde{\psi}_{-2},\ldots,\tilde{\psi}_2\r)\\
	  &=\frac{5c}{14\pi}\Bigg(\l(\sum_{m=-2}^2|\tilde{\psi}_m|^2-\tau\r)^2 -\tau^2+ \frac{1}{2}|\tilde{\psi}_0|^4 + 2\sum_{m=1,2} | \tilde{\psi}_m|^2| \tilde{\psi}_{-m}|^2\\ \nonumber
		 &\quad - 2 \Re\l(	\ol{\tilde{\psi}_0}^2 \tilde{\psi}_1\tilde{\psi}_{-1}\r) + 2 \Re\l(	\ol{\tilde{\psi}_0}^2 \tilde{\psi}_2 \tilde{\psi}_{-2}\r) - 4 \Re\l(	\ol{\tilde{\psi}_1\tilde{\psi}_{-1}} \tilde{\psi}_{2}\tilde{\psi}_{-2}\r)\Bigg),
\end{aligned}
\eeq
where $\tau=\frac{7\pi d}{5c}$. Notice that this expression contains a second complete square:
\beq
\begin{aligned}
	&\curly{E}_{\text{$d$-wave}}^{GL}\l(\tilde{\psi}_{-2},\ldots,\tilde{\psi}_2\r)\\
	  &= \frac{5c}{14\pi}\l(\l(\sum_{m=-2}^2|\tilde{\psi}_m|^2-\tau\r)^2-\tau^2 + \frac{1}{2}	\l|\tilde{\psi}_0^2 - 2 \tilde{\psi}_1\tilde{\psi}_{-1}+2 \tilde{\psi}_2\tilde{\psi}_{-2}\r|^2	\r)
\end{aligned}
\eeq 

To conclude Theorem \ref{thm:pureTI}, it remains to make the basis change to the real-valued spherical harmonics, i.e.\ to invert \eqref{eq:realSH}.
 On the level of the GL order parameters, this yields the $SU(5)$ transformation
\beq
\label{eq:su5trafo}
\be{aligned}
\tilde{\psi}_0 = \psi_0,\qquad &\tilde{\psi}_{-1} = \frac{-\psi_{1}+i\psi_{-1}}{\sqrt{2}},
\qquad \tilde{\psi}_{1} = \frac{\psi_{1}+i\psi_{-1}}{\sqrt{2}},\\
        &\tilde{\psi}_{-2} = \frac{\psi_{2}-i\psi_{-2}}{\sqrt{2}},
        \qquad \tilde{\psi}_{2} = \frac{\psi_{2}+i\psi_{-2}}{\sqrt{2}}.
\e{aligned}
\eeq

\subsection{Proof of Theorem \ref{thm:pureTI2D}}
The situation is as in three dimensions, only simpler. The $d_{ij}$ GL coefficients are again diagonal by orthogonality and they come with a factor $d$ defined in the same way as in Theorem \ref{thm:pureTI} but with $f_2(p)$ replaced $f_2(p)/p$ since $D=2$ (of course the definition of $\vr$ has changed as well). For the $c_{ijkm}$ coefficients, instead of considering Clebsch-Gordan coefficients, it suffices to compute
\beq
\frac{c}{\pi^2}\int_0^{2\pi} \cos(2\vp)^{k} \sin(2\vp)^{4-k} \d \vp
\eeq
for all $0\leq k\leq 4$. Here, the GL coefficient $c$ is defined in the same way as in Theorem \ref{thm:pureTI}. We omit the details.

\subsection{Proof of Theorem \ref{thm:mixedTI}}

We compute $\curly{E}^{GL}_{\text{$(s+d)$-wave}}$ by using the formulae \eqref{eq:cdefn} and \eqref{eq:ddefn} for the GL coefficients as in the previous section. We already computed most of the GL coefficients, namely all the ones that couple $d$-waves to $d$-waves.

By orthonormality of the spherical harmonics, $d_{ij}$ is still diagonal. For $i,j\neq s$, $d$ is as in \eqref{eq:deq}. Notice however that $d$ depends on $\varrho$ through $f_2$.
When $i=j=s$, we have to replace $\varrho$ by $\varrho_s$, which is conveniently described as multiplication by $g_s=\l|\frac{\vr_s}{\vr}\r|$. We conclude that
\beqs
	d_{ij}=
	\be{cases}
            d^{(2s)} \quad  &\textnormal{if $i=j=s$},\\
			d\delta_{ij}  \quad  &\textnormal{otherwise}.
	\e{cases}
\eeqs
with $d^{(2s)}$ as defined in \eqref{eq:csdefn}.\\

\be{table}[t]
\be{center}
\be{tabular}{|c|c|c|c|c|}
 $i$ & $j$ & $k$&	$m$& $c_{ijkm}\cdot28\pi$\\
 \hline
 $s$ & $2$ & $0$&	$2$& $-2\sqrt{5}c^{(1s)}$\\
 $s$ & $2$ & $s$&	$2$& $7c^{(2s)}$\\
 $s$ & $2$ & $1$&	$1$& $\sqrt{30}c^{(1s)}$\\
 \hline
 $s$ & $1$ & $0$&	$1$& $\sqrt{5}c^{(1s)}$\\
 $s$ & $1$ & $s$&	$1$& $7c^{(2s)}$\\
 $s$ & $1$ & $-1$&	$2$& $-\sqrt{30}c^{(1s)}$\\
 \hline
 $s$ & $0$ & $0$&	$0$& $2\sqrt{5}c^{(1s)}$\\
 $s$ & $s$ & $0$&	$0$& $7c^{(2s)}$\\
 $s$ & $0$ & $s$&	$0$& $7c^{(2s)}$\\
 $s$ & $s$ & $s$&	$0$& $0$\\
 $s$ & $s$ & $s$&	$s$& $7c^{(4s)}$\\
  $s$ & $0$ & $2$&	$-2$& $-2\sqrt{5}c^{(1s)}$\\
  $s$ & $s$ & $2$&	$-2$& $7c^{(2s)}$\\
  $s$ & $0$ & $1$&	$-1$& $-\sqrt{5}c^{(1s)}$\\
  $s$ & $s$ & $1$&	$-1$& $-7c^{(2s)}$\\
 \hline
 \e{tabular}
\caption{Equivalence classes of new Ginzburg--Landau coefficients in the mixed $(s+d)$-wave case. $c^{(1s)},c^{(2s)},c^{(4s)}$ are defined in \eqref{eq:csdefn}.}
\label{table:GLmixed}
\e{center}
\e{table}

We turn to the quartic GL coefficients $c_{ijkm}$. Note that the ``pair permutation'' and ``pair sign-flip'' symmetries described in the
proof of Theorem \ref{thm:pureTI} still hold. In addition to the results listed in Table \ref{table:GLpure}, we now have equivalence classes of $c_{ijkm}$ where some indices are equal to $s$. Since the corresponding $\ft{a}_{s}$ carry zero momentum in the $z$-direction, the selection rule dictates that
$c_{ijkm}$ can only be non-zero if the $s$ replaces a $0$-index. 

We thus consider all equivalence classes of GL coefficients that
can be obtained by replacing a $0$ in Table \ref{table:GLpure} by $s$.
We compute their values again via \eqref{eq:clebsch} (some follow
immediately from the fact that $Y_0^0=1/\sqrt{4\pi}$). The results are presented in Table \ref{table:GLmixed}.

Just as for $d_{ij}$, the $c^{(1s)},c^{(2s)},c^{(4s)}$ are the result of a radial integration where for each index equal to $s$, $f_4$ is multiplied by a
factor $g_s$. This yields the expressions \eqref{eq:csdefn} for $c^{(1s)},c^{(2s)},c^{(4s)}$.
Note that according to Table \ref{table:GLmixed}, $c_{sss0}=0$ and thus it is not necessary to define $c^{(3s)}$.

Armed with Table \ref{table:GLmixed}, it remains to count the number of GL coefficients in each equivalence class.
After some algebra, we obtain
\beq
\begin{aligned}
\label{eq:GLdoldbasis}
&\curly{E}_{\text{$(s+d)$-wave}}^{GL}\l(\tilde{\psi}_{s},\tilde{\psi}_{-2},\ldots,\tilde{\psi}_2\r)\\
=&\curly{E}^{GL}_{\text{$d$-wave}}(\tilde{\psi}_{-2},\ldots,\tilde{\psi}_2)+\curly{E}^{GL}_{\text{$s$-wave}}(\tilde{\psi}_{s})
+\curly{E}^{GL}_{\text{coupling}}(\tilde{\psi}_s,\tilde{\psi}_{-2},\ldots,\tilde{\psi}_2).
\end{aligned}
\eeq
where
\beqs
\begin{aligned}
&\curly{E}^{GL}_{\text{coupling}}(\tilde{\psi}_s,\tilde{\psi}_{-2},\ldots,\tilde{\psi}_2)\\
= &\frac{\sqrt{5}c^{(1s)}}{7\pi}	\l(2	\Re\l[\ol{\tilde{\psi}_{s}}\tilde{\psi}_0	\l(\sum_{m=0,\pm 1} |\tilde{\psi}_m|^2 -2\sum_{m=\pm 2} |\tilde{\psi}_m|^2\r)\r]\vphantom{\Re\l[\ol{\tilde{\psi}_{1}}^2 \tilde{\psi}_{s}\tilde{\psi}_{2}\r]}\r.\\
 &\qquad\qquad\qquad\quad - 2\sum_{m=1,2} m \Re\l[\ol{\tilde{\psi}_{s}}\ol{\tilde{\psi}_{0}}\tilde{\psi}_{m}\tilde{\psi}_{-m}\r] \\	
&\l.\qquad\qquad\qquad\quad + \sqrt{6} \sum_{\sigma=\pm 1}\l(		\Re\l[\ol{\tilde{\psi}_{\sigma}}^2 \tilde{\psi}_{s}\tilde{\psi}_{2\sigma}\r]		-2\Re\l[\ol{\tilde{\psi}_{s}}\ol{\tilde{\psi}_{\sigma}}\tilde{\psi}_{-\sigma}\tilde{\psi}_{2\sigma}\r]\r) \r)\\
 	&\quad +\frac{c^{(2s)}}{2\pi}\l(	2	|\tilde{\psi}_{s}|^2 \sum_{m=-2}^2 |\tilde{\psi}_m|^2 +\Re\l[\ol{\tilde{\psi}_{s}}^2\l(\tilde{\psi}_0^2-2\tilde{\psi}_1\tilde{\psi}_{-1}+2\tilde{\psi}_{2}\tilde{\psi}_{-2}			\r)\r] \r)
\end{aligned}
\eeqs
where $\curly{E}^{GL}_{\text{$d$-wave}}(\tilde{\psi}_{-2},\ldots,\tilde{\psi}_2)$ is given by \eqref{eq:GLpureold} and
\beqs
\curly{E}^{GL}_{\text{$s$-wave}}(\tilde{\psi}_s)=\frac{c^{(4s)}}{4\pi}\l(\l(	|\tilde{\psi}_{s}|^2 -	\tau_s\right)^2-\tau_s^2\r)
\eeqs
with $\tau_s=\frac{2 \pi d^{(2s)}}{c^{(4s)}}$. Statement (i) in Theorem \ref{thm:pureTI}, which gives the expression for
$\curly{E}_{\text{$(s+d)$-wave}}^{GL}$, now follows by transforming into the basis of real spherical harmonics via \eqref{eq:su5trafo}.\\

To prove (ii), we use the GL energy expressed in the basis of real spherical harmonics.
 Let $\eps>0$ and take $(\psi_{-2},\ldots,\psi_2)\in\curly{M}_{\text{$d$-wave}}$, the set of minimizers of
  $\curly{E}_{\text{$d$-wave}}$ described by \eqref{eq:minimizerconditionsTI}. Set $\psi_s=\eps \om$ with $|\om|=1$ and note that
  \beqs
  \begin{aligned}
    \curly{E}_{\text{$(s+d)$-wave}}^{GL}\l(\psi_{s},\psi_{-2},\ldots,\psi_2\r)
    = &\inf\curly{E}^{GL}_{\text{$d$-wave}} + \eps \Re[\ol{\om} z]\\
     &+ \eps^2 \l(\frac{\tau  c^{(2s)}}{\pi}-\frac{\tau_s c^{(4s)}}{2\pi}\r) + \frac{c^{4s}}{4\pi}\eps^4.
\end{aligned}  
  \eeqs
  for some $z\in\set{C}$, which is independent of $\eps$ and $w$. Consider first the case that $(\psi_{-2},\ldots,\psi_2)\in\curly{M}_{\text{$d$-wave}}$ is such that $z\neq0$. Then, we can choose $\om$ such that $Re[\ol{\om} z] <0$ and we obtain \eqref{eq:thmmixedTI} for sufficiently small $\eps$. Thus, suppose that $z=0$, which is e.g.\ the case for $(0,\tau/\sqrt{2},0,i\tau/\sqrt{2},0)\in\curly{M}_{\text{$d$-wave}}$. It is then clear that \eqref{eq:thmmixedTI} holds iff $\frac{\tau c^{(2s)}}{\pi}<\frac{\tau_s c^{(4s)}}{2\pi}$, or equivalently $d c^{(2s)} < \frac{5}{7}  cd^{(2s)}$. This proves (ii).\\

For statement (iii), let $\psi_s$ be a minimizer of $\curly{E}^{GL}_{\text{$s$-wave}}$, i.e.\ $|\psi_s|^2=\tau_s$. Now let
$\eps>0$ and let $(\psi_{-2},\ldots,\psi_2)$ have entries of the form $\psi_m=\eps \psi_m'$ with $|\psi_m'|<1$. We have
    \begin{align*}
    &\curly{E}_{\text{$(s+d)$-wave}}^{GL}\l(\psi_{s},\psi_{-2},\ldots,\psi_2\r)\\
    &=\min \curly{E}^{GL}_{\text{$s$-wave}}+\eps^2 \l(\l(-d+ \frac{c^{(2s)}\tau_s}{\pi}\r)\sum_m|\psi_m'|^2
    +\frac{c^{(2s)}}{2\pi} \Re\l[\ol{\psi_s^2}\sum_{m=-2}^2 \l(\psi_m'\r)^2\r]\r)\\
    &\quad+O(\eps^3)
   \end{align*}
as $\eps\rightarrow 0$. The real part is clearly minimal when we choose $\Arg(\psi_m')=\Arg(\psi_s)+\pi/2$ for all $m$ with $\psi_m'\neq 0$.
This choice yields
\begin{align*}
    &\curly{E}_{\text{$(s+d)$-wave}}^{GL}\l(\psi_{s},\psi_{-2},\ldots,\psi_2\r)\\
    &=\min \curly{E}^{GL}_{\text{$s$-wave}}+\eps^2  \sum_m|\psi_m'|^2 \l(-d + \frac{c^{(2s)}\tau_s}{\pi}\r)+O(\eps^3).
\end{align*}
When the term in parentheses is strictly negative, which is equivalent to $d^{(2s)} c^{(2s)}< d c^{(4s)}$, we see that
$\curly{E}_{\text{$(s+d)$-wave}}^{GL}<\min \curly{E}^{GL}_{\text{$s$-wave}}$ for sufficiently small $\eps$.
Vice-versa, when the term in parentheses is strictly positive, $\curly{E}_{\text{$(s+d)$-wave}}^{GL}>\min \curly{E}^{GL}_{\text{$s$-wave}}$
for all small $\eps>0$.

To conclude statement (iii), it remains to consider the case $d^{(2s)} c^{(2s)}=d c^{(4s)}$, when the $O(\eps^2)$-term vanishes.
The leading correction is now given by the $O(\eps^3)$-term
and by choosing $\psi_m=0$ for $m\neq0$, we find
\beqs
    \curly{E}_{\text{$(s+d)$-wave}}^{GL}\l(\psi_{s},0,0,\psi_0,0,0\r) = \min \curly{E}^{GL}_{\text{$s$-wave}}
    +\eps^3\frac{2\sqrt{5}c^{(s)}}{7\pi} |\psi_0'|^2\Re[\ol{\psi_s}\psi_0']+O(\eps^4).
\eeqs
Letting $\Arg(\psi_0')=\Arg(\psi_s)+\pi$ shows that $\curly{E}_{\text{$(s+d)$-wave}}^{GL}\l(\psi_{s},0,0,\psi_0,0,0\r)
<\min \curly{E}^{GL}_{\text{$s$-wave}}$ in this case as well.
 This proves statement (iv).
\qed

\section{Proofs for part III}
\label{sect:proofsIII}
The proof of Theorem \ref{thm:appendix} is based on three steps.
\be{itemize}
	\item In Lemma \ref{lm:monotone}, we solve the eigenvalue problem for $K_T+V_{\lam,R}$ explicitly in each angular momentum sector $\curly{H}_l$. The key result is the ``eigenvalue condition'' \eqref{eq:monotone} which gives a formula for the eigenvalue (or energy) $E$ in terms of the other parameters $l,\mu,T$ and $\lam$. We will see that one can solve this for $\lam$ and one obtains an integral formula which is \emph{monotone} in $E$. Therefore, instead of showing that $E$ is minimal for $l=l_0$, one can equivalently show that $\lam$ is minimal for $l=l_0$. 
	\item In Lemma \ref{lm:j}, we show how, by adapting the parameters $\mu,T$ of the ``weight function'' $p/K_T(p)$, one can conclude that $\int_0^\it  \frac{p}{K_T(p)} f(p)\d k$ is positive, if one assumes that $f$ is strictly positive on an interval. 
	\item By Theorem \ref{thm:bessel}, for any half-integer Bessel function of the first kind $\curly{J}_{l_0+1/2}$, there exists an open interval around its first maximum on which it is strictly larger than (the absolute value of) all other half-integer Bessel functions. 
\e{itemize}
The idea is then to use the eigenvalue condition \eqref{eq:monotone} to rephrase the question whether some state in $\curly{H}_{l_0}$ has lower energy than all states in $\curly{H}_l$ as the more tangible question whether the quantity
\beqs
\int_0^\it \l(\curly{J}^2_{l_0+1/2}(p)-\curly{J}^2_{l+1/2}(p)\r) \frac{p}{K_T(p)} \d p
\eeqs
is positive. By Theorem \ref{thm:bessel} there is an interval of $p$-values on which the integrand is positive and by Lemma \ref{lm:j} there are intervals of $\mu$- and $T$-values such that the entire integral is positive. 

\subsection{Solving the eigenvalue problem}
For any radial $V$, we can block diagonalize $K_T+V$ by using the orthogonal decomposition of $L^2(\set{R}^3)$ into angular momentum sectors \eqref{eq:L2decomposition}, namely $L^2(\set{R}^3)=\bigoplus_{l=0}^\it \curly{H}_l$ with $\curly{H}_l$ defined in \eqref{eq:Hldefn}. It is well-known \cite{SteinWeiss} that the Fourier transform leaves each $\curly{H}_l$ invariant. Consequently, if we have $\al\in H^1(\set{R}^3)$ satisfying the eigenvalue equation
\beq
\label{eq:evalue}
\l(K_T+V\r)\al=E\al,
\eeq
then we can decompose it as $\al=\sum_l \al_l$ with $\al_l\in \curly{H}_l$ mutually orthogonal. Taking the Fourier transform of \eqref{eq:evalue} and using the fact that $V\al_l\in \curly{H}_l$ since $V$ is radial, we get from orthogonality
\beqs
 K_T(p)\ft \al_l(\p) + \ft{V\al_l}(\p)=E\ft \al_l(\p),
\eeqs
for every $l\geq0$ and a.e.\ $\p\in\set{R}^3$. Thus, we can study each component $\al_l$ separately. When $V_{\lam,R}$ is the specific radial potential \eqref{eq:Vdeltadefns}, we can say even more.

\be{lm}
\label{lm:monotone}
 Let $V_{\lam,R}$ be as in \eqref{eq:Vdeltadefns} and let $l$ be a non-negative integer. We write $\curly{J}_{l+\frac{1}{2}}$ for the Bessel function of the first kind of order $l+1/2$. Let $E<2T$ if $\mu\geq 0$ and $E<\frac{|\mu|}{\tanh(|\mu|/(2T))}$ if $\mu<0$.
Then
\beq
\label{eq:nonempty}
\ker\l(K_T+V_{\lam,R}-E\r)\cap \curly{H}_l\neq \emptyset
\eeq
is equivalent to the ``eigenvalue condition''
\beq
\label{eq:monotone}
    1 =  \lam \int_0^\it \frac{p R}{K_T(p)-E} \curly{J}^2_{l+\frac{1}{2}}(pR)\d p.
\eeq
Moreover, if \eqref{eq:nonempty} holds, then $\ker\l(K_T+V_{\lam,R}-E\r)=\spa\{\rho_l\}\otimes \curly{S}_l$ with
\beq
\label{eq:efct}
\rho_l(r)= r^{-1/2} \int_0^\it p \frac{\curly{J}_{l+\frac{1}{2}}(rp)\curly{J}_{l+\frac{1}{2}}(Rp)}{K_T(p)-E}\d p.
\eeq
\e{lm}

 Since $|\curly{J}_{l+\frac{1}{2}}(p)|\leq C p^{-1/2}$, the numerator in \eqref{eq:monotone} and \eqref{eq:efct} poses no threat for convergence of the integral. 
 
    %\item By the Sobolev trace theorem, $\al_{l}$ is well-defined a.e.\ on the sphere where $|x|=R$ and so $V_{\lam,R}\al_{l,m}$ makes sense.
%    \item

\be{proof}
By the definition of $\curly{H}_l$, we have
\beqs
    \al_l(\x) = \sum_{m=-l}^l \al_{l,m}(r) Y_l^m(\vt,\vp),\qquad \x\equiv (r,\vt,\vp).
\eeqs
We suppose $\al_l$ satisfies $\l(K_T+V_{\lam,R}\r)\al_{l}=E\al_{l}$.  Recall that the Fourier transform not only leaves each $\curly{H}_l$ invariant, it also reduces to the Fourier-Bessel transform $\curly{F}_{l}$ on it \cite{SteinWeiss}. That is, a function of the form $f(\x)=g(r) Y_{l}^m(\vt,\vp)$ has Fourier transform given by
\beq
\label{eq:FB1}
    \ft{f} (\p) = i^{-l} \l(\curly{F}_{l} g\r)(p)\, Y_{l}^m(\vt,\vp), \qquad \p\equiv (p,\vt,\vp),
\eeq
where the Fourier-Bessel transform reads
\beq
\label{eq:FB2}
    \curly{F}_{l} g(p) = \int_0^\it
    s^{3/2} p^{-1/2} \curly{J}_{l+\frac{1}{2}}(sp)  g(s)  \d s.
\eeq
We apply the Fourier transform to the eigenvalue equation. By \eqref{eq:FB1} and orthogonality of the spherical harmonics,
\beqs
    (K_T(p)-E) \curly{F}_l \al_{l,m}(p) + \curly{F}_l(V_{\lam,R}\al_{l,m})(p) = 0
\eeqs
for all $m$ and a.e.\ $\p\in\set{R}^3$. The assumption on $E$ is such that $K_T(p)-E>0$ and therefore
\beq
\label{eq:evalue1}
    \curly{F}_l \al_{l,m}(p)=-\frac{\curly{F}_l(V_{\lam,R}\al_{l,m})(p)}{K_T(p)-E}.
\eeq
So far we only used that the potential is radial. Since $V_{\lam,R}=-\lam \de(|\cdot|-R)$,
\begin{align*}
    -\curly{F}_l(V_{\lam,R}\al_{l,m})(p)
    &=-\al_{l,m}(R) (\curly{F}_l V_{\lam,R})(p) \\
    &=\lam\al_{l,m}(R) R^{3/2}p^{-1/2} \curly{J}_{l+\frac{1}{2}}(Rp).
\end{align*}
Plugging this back into \eqref{eq:evalue1}, we find the following explicit expression for the solution to the eigenvalue problem:
\beq
\label{eq:Fourierrep}
    \curly{F}_l \al_{l,m}(p)=\lam\al_{l,m}(R) R^{3/2}p^{-1/2}  \frac{\curly{J}_{l+\frac{1}{2}}(Rp)}{K_T(p)-E}
\eeq
Now we apply $\curly{F}_l^{-1}$ which, by unitarity of the Fourier transform, has the operator kernel $r^{-1/2}k^{3/2} \curly{J}_{l+\frac{1}{2}}(rk)$ when evaluated at $r>0$. For all $r>0$, we have
\beqs
\al_{l,m}(r)=\al_{l,m}(R) \lam R^{3/2} r^{-1/2} \int_0^\it p \frac{\curly{J}_{l+\frac{1}{2}}(rp)\curly{J}_{l+\frac{1}{2}}(Rp)}{K_T(p)-E}\d p
\eeqs
Note that we may assume that for some $m$, $\al_{l,m}(R)\neq0$, since otherwise $\al_l\equiv0$. Evaluating the above expression for that particular $m$ at $r=R$ gives \eqref{eq:monotone}. We write $\al_{l,m}(R)=c_{l,m}  \lam^{-1}R^{-3/2}$ and absorb $c_{l,m}$ into the angular part $\curly{S}_l$ to get \eqref{eq:efct}. Clearly the argument works in reverse, proving the claimed equivalence.
\e{proof}

\subsection{Choosing $\mu$ and $T$}
From now on, let $\mu>0$. The following lemma concerns the quantity 
\beqs
	\int_0^\it \frac{p}{K_T(p)} f(p) \d p.
\eeqs
Suppose we know that $f>\eps$ on some interval $I$, while $f$ may be negative outside of $I$. Our goal in this section is to choose the right values of $\mu$ and $T$ such that the above integral is then also positive. 

The basic idea is to view $p/K_T(p)$ as a weight function which is centered at the point $p=\sqrt{\mu}$, where it takes a value proportional to $T^{-1}$. By making $T$ small enough, we can ensure that the neighborhood of the point $p=\sqrt{\mu}$ dominates in the above integral. By choosing $\sqrt{\mu}\in I$ and $T$ sufficiently small, the integral will pick up mostly points where $f$ is positive and will therefore yield a positive value itself. This is spelled out in the following lemma.

We will eventually apply this lemma with $f=\curly{J}_{l_0+1/2}^2-\curly{J}_{l+1/2}^2$ and positivity of the above integral will translate via \eqref{eq:monotone} to the statement that the angular momentum sector $\curly{H}_{l_0}$ has lower energy than $\curly{H}_{l}$.

\be{lm}
\label{lm:j}
Let $f:\set{R}_+\rightarrow \set{R}$ be a continuous function satisfying $|f(p)|\leq C_f(1+p^2)^{-1/2}$ for some $C_f>0$. Suppose there exists $\eps>0$ and an interval $(a,b)$ such that $f>\eps$ on $(a,b)$. Then, 
\be{enumerate}[label=(\roman*)]
\item
for every $\de>0$ small enough, there exists $T_*>0$ and an interval $I$ such that for every $\mu\in I$ and $T\in (0,T_*)$,
\beq
    \int_0^\it \frac{p}{K_T(p)}f(p)\d p>0.
\eeq
\item letting $\de:=\frac{b^2-a^2}{4}$, one can choose
\beq
\label{eq:T*defn}
I:=(a^2+\de,b^2-\de),\qquad T_*:= \frac{\de}{2} \exp\l(-{\frac{2C_f  \l(\sqrt{1+2b^2}+\frac{1}{2b}\r)}{\eps\de}}\r).
\eeq
\e{enumerate}
\e{lm}

\be{proof}
Let $\mu\in (a^2+\de,b^2-\de)$. Since $\frac{p}{K_T(p)}>0$ and $|\tanh|\leq 1$, we can estimate
\beq
    \int_0^\it \frac{p f(p)}{K_{T}(p)}\d p
    \geq -C_f \int_{[0,a)\cup (b,\it)} \frac{p}{(1+p^2)^{1/2}|p^2-\mu|}\d p +\eps \int_a^b \frac{p}{K_T(p)} \d p.
\eeq
In the first integral, we estimate pointwise
\beqs
|p^2-\mu|^{-1}\leq \de^{-1} (\chi_{\{p\leq 2b\}}+2p^{-2}\chi_{\{p>2b\}})
\eeqs
 with $\chi_A$ denoting the characteristic function of a set $A$. This gives
\beq
 -C_f \int_{[0,a)\cup (b,\it)} \frac{p}{(1+p^2)^{1/2}|p^2-\mu|}\d p 
\geq -C_f \de^{-1} \l(\sqrt{1+2b^2}+\frac{1}{2b}\r)
\eeq
In the second integral, we change variables and use $\mu\in (a^2+\de,b^2-\de)$ with $\tanh(u)/u>0$ to get
\beqs
\int_a^b \frac{p}{K_T(p)} \d p=\frac{1}{2} \int_{\frac{a^2-\mu}{2T}}^\frac{b^2-\mu}{2T} \frac{\tanh(u)}{u} \d u> \int_{-\frac{\de}{2T}}^{\frac{\de}{2T}} \frac{\tanh(u)}{u} \d u> \log\l(\frac{\de}{2T}\r),
\eeqs
where in the last step we also used that $\tanh x\geq 1/2$ for $x\geq 1$. Combining everything, we get
\beq
\int_0^\it \frac{p f(p)}{K_{T}(p)}\d p\geq -C_f \de^{-1} \l(\sqrt{1+2b^2}+\frac{1}{2b}\r)+\eps \log\l(\frac{\de}{2T}\r).
\eeq
The claim follows from some algebra.
\e{proof}

\subsection{Proof of Theorem \ref{thm:appendix}}
\label{ssect:concluding}
\be{proof}[Proof of (i)]
By rescaling the parameters $\mu,\lam$ and $T$, we may assume that $R=1$. We fix a non-negative integer $l_0$ and invoke Theorem \ref{thm:bessel} to get $\eps>0$ and an interval $(a,b)$ on which $\curly{J}_{l_0+1/2}^2-\curly{J}_{l+1/2}^2>\eps$ for all $l\neq l_0$. Then we apply Lemma \ref{lm:j} to 
\beqs
f:=\curly{J}_{l_0+1/2}^2-\curly{J}_{l+1/2}^2,
\eeqs
which satisfies
\beq
\label{eq:fbound}
|f(p)|\leq 2^{1/2}(1+p^2)^{-1/2}.
\eeq
and so $C_f=2^{1/2}$ in Lemma \ref{lm:j}. To prove \eqref{eq:fbound}, ones uses statement (ii) in Lemma \ref{lm:last} to get $\curly{J}^2_{\nu}(p)\leq M_{\nu}^2(p) \leq \frac{1}{p}$ for all $\nu$. Together with $|\curly{J}_\nu|\leq 1$ from (9.1.60) in \cite{AbramowitzStegun}, this implies $|f(p)|\leq \min\{1,p^{-1}\}$ and hence \eqref{eq:fbound}. 

Note that $T_*$ and $I$ defined in Lemma \ref{lm:j} (ii) work for all $l\neq l_0$, because they depend on $f$ only through $(a,b)$, which is uniform in $f$ by Theorem \ref{thm:bessel}, and through $C_f=2^{3/2}$. Hence, Lemma \ref{lm:j} provides $T_*>0$ and an interval $I$ such that for all $\mu\in I$, all $T<T_*$ and all $l\neq l_0$ we have
\beq
\label{eq:cc}
\int_0^\it \frac{p}{K_T(p)}\l(\curly{J}_{l_0+1/2}^2(p)-\curly{J}_{l+1/2}^2(p)\r) \d p >0
\eeq
For every non-negative integer $l$, we define the function
\beq
\label{eq:lamldefn}
\lam_{l}(T,\mu):=\l(\int_0^\it \frac{p}{K_T(p)}\curly{J}_{l+1/2}^2(p) \d p\r)^{-1}
\eeq
which is chosen such that $\lam$ satisfies the eigenvalue condition \eqref{eq:monotone} with $E=0$. We write
\beqs
E_l(T,\mu,\lam):= \inf \mathrm{spec} \l(K_T+V_{\lam,1}\r)\big\vert_{\curly{H}_l}.
\eeqs
 With these definitions, Lemma \ref{lm:monotone} says
\beq
\label{eq:Eldefn}
E_l(T,\mu,\lam_l(T,\mu))=0
\eeq
At the heart of our proof is the following monotonicity argument. For all $\mu\in I$, all $T<T_*$ and all $l\neq l_0$, we have
\beq
\label{eq:vararg}
0=E_{l}(T,\mu,\lam_{l}(T,\mu))<E_{l}(T,\mu,\lam_{l_0}(T,\mu)),
\eeq
where the inequality holds by the variational principle applied to the operator $(K_T+V_{\lam,1})\big\vert_{\curly{H}_l}$ and the observation that \eqref{eq:cc} is equivalent to $\lam_{l_0}(T,\mu)<\lam_{l}(T,\mu)$. (The inequality is strict because $\scp{\al}{V_{\lam,1}\al}=-\lam \al(R)^2$ is either strictly monotone decreasing in $\lam$ or identically zero and in the latter case the energy has to be at least $2T$.)

This would already prove \eqref{eq:appendix0} and \eqref{eq:appendix1} under the condition that one fixes $T<T_*$ and determines $\lam$ through \eqref{eq:cc}. We find it physically more appealing to fix $\lam$ small enough and determine $T$ instead. To this end, we observe that $T\mapsto \lam_{l_0}(T,\mu)$ is monotone increasing, because $T\mapsto K_T(p)$ is monotone increasing for every $p>0$. Therefore, for every $\mu\in I$, we have the monotone increasing inverse function
\begin{align*}
(0,\lam_{l_0}(T_*,\mu))&\rightarrow (0,T_*)\\
\lam&\mapsto T(\lam,\mu)
\end{align*}
satisfying $\lam_{l_0}(T(\lam,\mu),\mu)=\lam$. To remove the $\mu$-dependence from the maximal value for $\lam$, we set
\beq
\label{eq:lam*defn}
\lam_*:=\min_{\mu\in \ol{I}} \lam(T_*,\mu)
\eeq
and note that $\lam_*>0$ since the integral in \eqref{eq:lamldefn} is continuous in $\mu$ by dominated convergence. For $\lam<\lam_*$, \eqref{eq:Eldefn} and \eqref{eq:vararg} become
\beqs
E_{l_0}(T(\lam,\mu),\mu,\lam)=0,\qquad E_{l}(T(\lam,\mu),\mu,\lam)>0,\quad \forall l\neq l_0.
\eeqs
This proves that for all $\mu\in I$ and all $\lam<\lam_*$, there exists $T_0<T_*$ (namely $T_0:=T(\lam,\mu)$) such that \eqref{eq:appendix0} holds (modulo restoring the $R$ parameter). Moreover, \eqref{eq:appendix1} is a direct consequence of the explicit characterization of $\ker(K_{T_c}+V)$ in Lemma \ref{lm:monotone}. Finally, \eqref{eq:appendix2} follows via the variational principle from the observation that $T\mapsto K_T(p)$ is strictly increasing for all $p>0$ and so $T\mapsto E_{l_0}(T,\mu,\lam)$ is strictly increasing as well, as long as it stays below $2T$.
\e{proof}

\be{proof}[Proof of (ii)]
Consider the function
\beqs
	\de_T:\mu \mapsto\int_0^\it \frac{p}{K_T(p)} \l(\curly{J}^2_{1/2}(p)-\curly{J}^2_{5/2}(p)\r)\d p.
\eeqs
\dashuline{Claim:} \emph{There exists $T_{**}>0$ such that for all $0<T<T_{**}$ there exists $\mu_T>0$ such that $\de_T(\mu_T)=0$.
Moreover, $\sqrt{\mu_T}\rightarrow z_{1/2}$ as $T\rightarrow 0$, where $z_{1/2}=\min\{z>0: \curly{J}_{1/2}^2(z)=\curly{J}_{5/2}^2(z)\}$.}

The claim follows essentially from the intermediate value theorem. Before we give the details, we explain how one may conclude statement (ii) from the claim. Let $0<T<T_{**}$. By definition \eqref{eq:lamldefn}, $\de_T(\mu_T)=0$ implies $\lam_0(T,\mu_T)=\lam_2(T,\mu_T)$. By Lemma \ref{lm:monotone} and using the notation \eqref{eq:Eldefn}, 
\beq
E_0(T,\mu_T,\lam_0(T,\mu_T))=E_2(T,\mu_T,\lam_0(T,\mu_T))=0.
\eeq
This implies $\subset$ in \eqref{eq:appendix4} according to Lemma \ref{lm:monotone}. Equation \eqref{eq:appendix5} follows by the same monotonicity argument as in the proof of statement (i) above.

In order to prove \eqref{eq:appendix3} with the choices $\mu\equiv \mu_T$ and $\lam\equiv\lam_0(T,\mu_T)$ and the remaining $\supset$ in \eqref{eq:appendix4}, we shall show that there exists $T_*\in(0,T_{**}]$ such that for all $0<T<T_*$, 
\beq
\label{eq:remains}
E_l(T,\mu_T,\lam_0(T,\mu_T))>0,\quad \forall l\geq 4, l \textnormal{ is even.}
\eeq
By Theorem \ref{thm:bessel} (ii) (with $l_0=1$) and Lemma \ref{zl}, there exists an open interval containing $z_{1/2}$ such that
\beqs
\curly{J}_{5/2}^2 - \sup_{\substack{l\geq 4\\ l\ \text{even}}}\curly{J}_{l+1/2}^2>\eps'\quad \text{on this interval}. 
\eeqs
As in part (i), Lemma \ref{lm:j} provides $T_{**}>0$ and an interval $I'$ containing $z_{1/2}^2$ such that for all $\mu\in I'$, all $T<T_{**}$ and all even $l\geq 4$ we have
\beq
\label{eq:cc2}
\int_0^\it \frac{p}{K_T(p)}\l(\curly{J}_{5/2}^2(p)-\curly{J}_{l+1/2}^2(p)\r) \d p >0 \,.
\eeq
Since the second part of the claim gives $\mu_T\rightarrow z_{1/2}^2$ as $T\rightarrow 0$, we may assume, after decreasing $T_{**}$ to $T_*$ if necessary, that $\mu_T \in I'$ for all $0<T<T_{*}$. Therefore \eqref{eq:cc2} implies that $\lambda_0(T,\mu_T) = \lambda_2(T,\mu_T)<\lambda_l(T,\mu_T)$ for all $T<T_{*}$ and all even $l\geq 4$. By the same variational argument as in \eqref{eq:vararg}, this implies \eqref{eq:remains}.

We now prove the claim. The reader may find it helpful to consider Figure \ref{fig:bessel}. Since $\mu\mapsto K_T(p)$ is continuous for every $p$, $\mu\mapsto \de_T$ is also continuous by dominated convergence. 
Let $x_l$ ($l=0,2$) denote the first maximum of $\curly{J}_{l+1/2}$. It is well-known that $x_0<x_2$ \cite{PalmaiApagyi} and that $\curly{J}_{1/2}^2(x_0)> \curly{J}_{5/2}^2(x_0)$ and $\curly{J}_{5/2}^2(x_2)>\curly{J}_{1/2}^2(x_2)$ (which is also a very special case of our Theorem \ref{thm:bessel} (i)). By continuity these inequalities hold also in neighborhoods of $x_0$ and $x_2$. Therefore Lemma \ref{lm:j} provides open intervals $I_l\subset\set{R}_+$ ($l=0,2$), containing $x_l$, and a $T_{**}>0$ such that for all $T<T_{**}$, we have $\de_T>0$ on $I_0$ and $\de_T<0$ on $I_2$. By the intermediate value theorem, for any $T<T_{**}$ there is a $\mu_T\in [\sup I_0,\inf I_2]$ with $\de_T(\mu_T)=0$. This proves the first part of the claim.

We are left with showing that $\sqrt{\mu_T}\rightarrow z_{1/2}$ as $T\to 0$. Since $\mu_T\in[\sup I_0,\inf I_2]$ is bounded, it has a limit point as $T\rightarrow 0$. We argue by contradiction and assume that there is a limit point $\tilde z$ different from $z_{1/2}$. By Lemma \ref{zl}, $z_{1/2}$ is also the position of the first critical point of $\curly{J}_{l+3/2}$. By the interlacing properties of the zeros of Bessel functions and their derivatives, see e.g.\ \cite{PalmaiApagyi}, $z_{1/2}\in(x_0,x_2)$ and there is no other point $z\in(x_0,x_2)$ at which $\curly{J}_{1/2}^2(z)=\curly{J}_{5/2}^2(z)$. Therefore $\curly{J}^2_{1/2}-\curly{J}^2_{5/2}$ is either strictly positive or strictly negative at $\tilde z$ and, by continuity, also in an open interval containing $\tilde z$. Lemma \ref{lm:j} provides an open interval $\tilde I$ containing $\tilde z^2$ and a $\tilde T>0$ such that $\delta_T(\mu)$ is either strictly positive or strictly negative for all $T<\tilde T$ and $\mu\in\tilde I$. Since $\tilde z$ is a limit point of $\sqrt{\mu_T}$, there is a sequence $T_m\to 0$ with $\mu_{T_m}\to\tilde z^2$. In particular, $\mu_{T_m}\in\tilde I$ and $T_m<\tilde T$ for all sufficiently large $m$. Thus, $\delta_{T_m}(\mu_{T_m})$ is either strictly positive or strictly negative for all sufficiently large $m$. This, however, contradicts the construction of $\mu_T$, according to which $\delta_T(\mu_T)=0$ for all $T<T_{**}$. Thus, we have shown that $\sqrt{\mu_T}\rightarrow z_{1/2}$.
\e{proof}

\be{appendix}
\section{Properties of Bessel functions}
While one might expect the following fact about Bessel functions to be known, it appears to be new:

\emph{At its first maximum, a half-integer Bessel function is strictly larger than (the absolute value of) all other half-integer Bessel functions.}

The precise statement is in Theorem \ref{thm:bessel} (i) below. It extends to families of Bessel functions $\{\curly{J}_{\nu+k}\}_{k\in\set{Z}_+}$ with $\nu\in [0,1]$, in particular to the family of integer Bessel functions. We acknowledge a helpful discussion on mathoverflow.net \cite{MObessel} that led to Lemma \ref{lm:last}. 

Let $l_0$ be a non-negative integer. We recall that the Bessel function $\curly{J}_{l_0+1/2}$ (of the first kind, of order $l_0+1/2$) vanishes at the origin and then increases to its first maximum, whose location we denote as usual by $j_{l_0+1/2,1}'$. The following theorem says that at $j_{l_0+1/2,1}'$, $\curly{J}_{l_0+1/2}^2$ is strictly larger than any other $\curly{J}_{l+1/2}^2$ with $l$ a non-negative integer different from $l_0$.

\be{thm}
\label{thm:bessel}
Let $\set{Z}_+$ denote the set of non-negative integers and let $l_0\in\set{Z}_+$. Recall that $j_{l_0+1/2,1}'$ denotes the position of the first maximum of $\curly{J}_{l_0+1/2}$.
\be{enumerate}[label=(\roman*)]
\item
There exist $\eps>0$ and an open interval $I$ containing $j_{l_0+1/2,1}'$ such that
\beq
\label{eq:bessel}
    \curly{J}_{l_0+1/2}^2 - \sup_{l\in \set{Z}_+\setminus \{l_0\}}\curly{J}_{l+1/2}^2> \eps\quad \text{on } I.
\eeq

\item 
If $l_0\geq 1$, then $\curly J_{l_0-1/2}(j_{l_0+1/2,1}') = \curly J_{l_0+3/2}(j_{l_0+1/2,1}')$ and there exist $\eps'>0$ and an open interval $I'$ containing $j_{l_0+1/2,1}'$ such that
\beq
\min\{ \curly{J}_{l_0-1/2}^2,\curly{J}_{l_0+3/2}^2\} - \sup_{\substack{l\geq l_0+3 \\ l-l_0 \ \mathrm{odd}}}\curly{J}_{l+1/2}^2>\eps'\quad \text{on } I'.
\eeq
\e{enumerate}
\e{thm}

\be{rmk}
Statement (i) is the key result and implies Theorem \ref{thm:appendix} (i). Statement (ii) is used to prove Theorem \ref{thm:appendix} (ii).
%The reader might wonder if it is possible to use a similar continuity argument together with more general estimates to obtain $\ker(K_T+V_{\lam,R}) = \curly{H}_{l_0} \cup \curly{H}_{l_1}$ for arbitrary even $l_0\neq l_1$. However, this is not possible: As one can see from Figure \ref{fig:bessel} that near the first point of intersection of, say, $\curly{J}^2_{5/2}$ and $\curly{J}^2_{9/2}$, the function $\curly{J}^2_{1/2}$ greatly exceeds both of them. This means the method would not be able produce $\ker(K_T+V_{\lam,R}) = \curly{H}_{2} \cup \curly{H}_{4}$, because $\curly{H}_{0}$ would have a lower energy.
\e{rmk}

 \begin{figure}[t]
\begin{center}
\includegraphics[height=7cm]{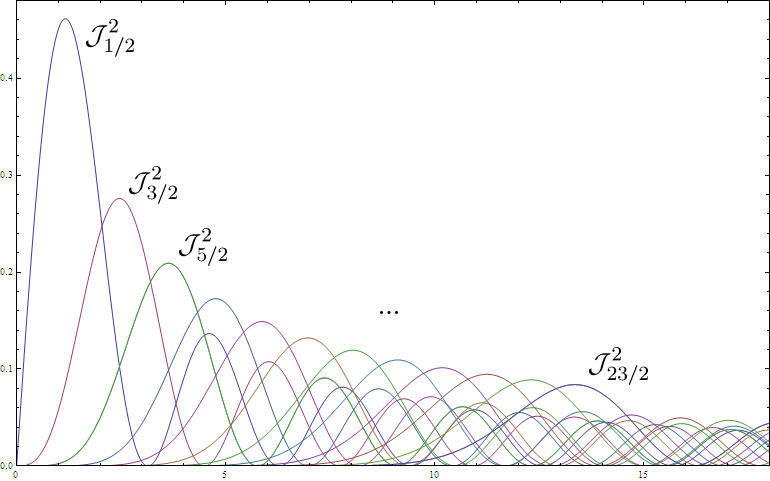}

\caption{A plot of the squared Bessel functions $\curly{J}^2_{1/2},\curly{J}^2_{3/2},\curly{J}^2_{5/2},\ldots,\curly{J}^2_{23/2}$. Observe that
 in an open interval around its maximum, each function is the largest one among all the shown ones (in particular it is the largest among all the $\curly{J}_{l+1/2}^2$ according to Lemma \ref{lm:last-2}).}
\label{fig:bessel}
\end{center}
\end{figure}

The proof of (i) in Theorem \ref{thm:bessel} is split into three Lemmata, each treating one of the following three regimes of $l$:
\beqs
\begin{aligned}
    \curly{L}_>:&=\setof{l\in\set{Z}_+}{l>l_0},\\
    \curly{L}_\lesssim:&=\setof{l\in\set{Z}_+}{l< l_0,\,j_{l+1/2,1}\geq j_{l_0+1/2,1}'},\\
    \curly{L}_\ll:&=\setof{l\in\set{Z}_+}{l< l_0,\,j_{l+1/2,1} < j_{l_0+1/2,1}'}.
\end{aligned}
\eeqs
Here, as usual, $j_{l+1/2,1}$ denotes the first positive zero of $\curly{J}_{l+1/2}$. The most cumbersome regime is $\curly{L}_\ll$. The proof there is based on a combination of some hands-on elementary estimates and bounds on the zeros of Bessel functions and their derivatives, which we could not find in the usual reference books
\cite{AbramowitzStegun},\cite{Watson}. The first regime $\curly{L}_>$ is the easiest

\be{lm}
\label{lm:last-2}
There exist $\eps_1>0$ and an open interval $I_1$ containing $j_{l_0+1/2,1}'$ such that
\beq
    \curly{J}_{l_0+1/2}^2 - \sup_{l>l_0}\curly{J}_{l+1/2}^2 \geq \eps_1\quad \text{on } I_1
   \eeq
\e{lm}

\be{proof}
According to \cite{Landau00}, the function
\beqs
    \nu\mapsto \max_y|\curly{J}_\nu(y)|
\eeqs
is strictly decreasing. Therefore
\beqs
\eps_1:=\frac{1}{2}\l(\curly{J}_{l_0+1/2}^2(j_{l_0+1/2,1}') - \max_y \curly{J}_{l_0+3/2}^2(y)\r)
\eeqs
is strictly positive. By continuity, there exists an open interval $I_1$ containing $j_{l_0+1/2,1}'$ such that for all $x\in I_1$, 
\beqs
|\curly{J}_{l_0+1/2}^2(j_{l_0+1/2,1}')-\curly{J}_{l_0+1/2}^2(x)|< \eps_1.
\eeqs
For $x\in I_1$, we have
\begin{align*}
\curly{J}_{l_0+1/2}^2(x)- \sup_{l>l_0} \curly{J}_{l+1/2}^2(x)
&> -\eps_1 + \curly{J}_{l_0+1/2}^2(j_{l_0+1/2,1}')- \sup_{l>l_0} \max_y \curly{J}_{l+1/2}^2(y)\\
&\geq -\eps_1 + \curly{J}_{l_0+1/2}^2(j_{l_0+1/2,1}') - \max_y \curly{J}_{l_0+3/2}^2(y)\\
&=\eps_1.\qedhere
\end{align*} 
\e{proof}

\be{lm}
\label{lm:last-1}
There exist $\eps_2>0$ and an open interval $I_2$ containing $j_{l_0+1/2,1}'$ such that
\beqs
    \curly{J}_{l_0+1/2}^2 - \sup_{l\in \curly{L}_{\lesssim}}\curly{J}_{l+1/2}^2 \geq \eps_2\quad \text{on } I_2.
\eeqs
\e{lm}

\be{proof}
Since the supremum of finitely many continuous functions is itself continuous, it suffices to prove $\curly{J}_{l_0+1/2}^2(j_{l_0+1/2,1}')> \curly{J}_{l+1/2}^2(j_{l_0+1/2,1}')$ for every $l\in \curly{L}_{\lesssim}$.
We define the sequence $\{a_l\}_{l\in \curly{L}_{\lesssim}}$ by
\beq
\label{eq:a_ldefn}
    \curly{J}_{l+1/2}(j_{l_0+1/2,1}')= a_l \curly{J}_{l_0+1/2}(j_{l_0+1/2,1}').
\eeq
With this definition, the recurrence relation for Bessel functions from (9.1.27) in \cite{AbramowitzStegun} appears in the form of a second-order difference equation
\beq
\label{eq:secondorderdiff}
    a_{l-1} = 2\frac{l+1/2}{x_0} a_l-a_{l+1}
\eeq
with initial conditions $a_{l_0}=1$ and $a_{l_0-1}=(l_0+1/2)/j_{l_0+1/2,1}'$. It is well-known that the latter quantity is strictly less than one, see eq.\ (3) on p.\ 486 of \cite{Watson}. Moreover, $a_l\geq 0$ for all $l\in \curly{L}_{\lesssim}$, because $j_{l+1/2,1}\geq j_{l_0+1/2,1}'$ and all Bessel functions are positive before they first become zero. An easy induction lets us conclude from \eqref{eq:secondorderdiff} that $a_l<a_{l+1}<1$ for all $l\in \curly{L}_{\lesssim}$. In particular, $a_l\leq a_{l_0-1}=(l_0+1/2)/j_{l_0+1/2,1}'<1$. Recalling the definition \eqref{eq:a_ldefn} of $a_l$, this proves the claim.
\e{proof}

We finally come to the  regime $\curly{L}_{\ll}$. As a tool, we will use the ``modulus'' function defined by
\beqs
M_\nu:= \sqrt{\curly{J}_\nu^2+\curly{Y}_\nu^2},
\eeqs
where $\curly{Y}_\nu$ is the Bessel function of the second kind. The first two statements of the following Lemma are known facts about the modulus function. Statement (iii) is the key result to derive (iv). 

\be{lm}
\label{lm:last}
\be{enumerate}[label=(\roman*)]
\item The map $\nu \mapsto M_\nu(x)$ is strictly increasing for all $x>0$.
\item For all $x>\nu$,
\beqs
    M_\nu^2(x) < \frac{2}{\pi} \frac{1}{\sqrt{x^2-\nu^2}}.
\eeqs
\item
If $l_0\geq 11$, there exists $l_1<l_0$ such that we have both, 
\be{enumerate}
	\item[(iii.a)] $\curly{J}^2_{l_0+1/2}(j_{l_0+1/2,1}')>M_{l_1+1/2}^2(j_{l_0+1/2,1}')$
	\item[(iii.b)]  $j_{l_1+1/2,1}>j_{l_0+1/2,1}'$
\e{enumerate} 
\item There exist $\eps_3>0$ and an open interval $I_3$ containing $j_{l_0+1/2,1}'$ such that
\beqs
    \curly{J}_{l_0+1/2}^2 - \sup_{l\in \curly{L}_{\ll}}\curly{J}_{l+1/2}^2 \geq \eps_3\quad \text{on } I_3.
\eeqs
\e{enumerate}
\e{lm}

The intuition why such $l_1$ as in (iii) should exist is based on a heuristic argument of which we learned through \cite{MObessel}, involving asymptotic formulae for the relevant expression. To turn this into a rigorous proof, we need to replace the asymptotics by bounds that hold for all $l_0$ (or at least for all $l_0\geq 11$). \cite{GatteschiLaforgia} contains results which are sufficient for our purposes when combined with a number of elementary estimates.

\be{proof}
Statement (i) is a direct consequence of Nicholson' formula, see p.\ 444 in \cite{Watson}, and the fact that $K_0>0$.  
Statement (ii) is formula (1) on p.\ 447 of \cite{Watson}.

We come to statement (iii). For convenience, we write $m=l+1/2$, so in particular $m_0=l_0+1/2$. We also abbreviate $x_0 = j_{l_0+1/2,1}'$. The basic idea (inspired by asymptotics) is to choose
\beqs
	m_1= m_0-c m_0^{1/3}
\eeqs
with $c$ small enough to have (iii.a) hold but large enough to have (iii.b) hold. By (i), (iii.a) is implied by
\beq
\label{eq:modulus2}
     \frac{2}{\pi}  \frac{1}{\sqrt{x_0^2-m_1^2}} < \curly{J}_{m_0}^2(x_0).
\eeq
By \cite{GatteschiLaforgia}, we have the lower bound
\beq
\label{eq:x0LB}
x_0 >m_0 \exp\l(2^{-1/3}a_1'm_0^{-2/3} -1.06 m_0^{-4/3}\r)
\eeq
for all $m_0\geq 11.5$. Here, $a_1'$ is the absolute value of the first zero of the derivative of the Airy function, with a numerical value of about $1.018793$. From $m_0\geq 11.5$, we can conclude that the argument of the exponential in \eqref{eq:x0LB} is greater than $0.6 m_0^{-2/3}$. Thus, by the elementary estimate $e^y \geq 1+y$, \eqref{eq:x0LB} implies the more manageable lower bound
\beqs
x_0>m_0+0.6 m_0^{1/3}
\eeqs
Setting $m_1=m_0-c m_0^{1/3}$ with $c$ to be determined and using the above bound on $x_0$, as well as $m_0\geq 11.5$, we see that \eqref{eq:modulus2} is implied by
\beq
\label{eq:modulus3}
       \frac{2}{\pi}  \frac{1}{\sqrt{1.26+2c-0.19c^2}} <  \l(  m_0^{1/3} \max_x |\curly{J}_{m_0}(x)| \r)^{2}
\eeq
According to \cite{Landau00}, $\nu\mapsto \nu^{1/3} \max_x |\curly{J}_{\nu}(x)|$ is an increasing function and so we can estimate the right-hand side in \eqref{eq:modulus3} from below by $\nu^{2/3} \max_x \curly{J}_{\nu}(x)^2$ for any $1/2\leq \nu\leq m_0$. Unfortunately, the numerical value one obtains for the ``worst case'' $\nu=1/2$ is not good enough to also get (iii.b). Instead, we assume that $c\leq 1$ and use $m_0\geq 11.5$ to get $m_0-c m_0^{-1/3}\geq 8.5$ and so
\beqs
 \l(  m_0^{1/3} \max_x |\curly{J}_{m_0}(x)| \r)^{2} > \l(  (8.5)^{1/3} \max_x |\curly{J}_{8.5}(x)| \r)^{2}>0.42
\eeqs
where the last inequality can be read off from a plot, for example. Therefore, \eqref{eq:modulus3} holds if we can find $c\leq 1$ that satisfies
\beq
\label{eq:margin}
 \frac{2}{\pi}  \frac{1}{\sqrt{1.26+2c-0.19c^2}}<0.42
\eeq
and it is easily seen that this holds for $c\in [0.5,1]$. 

Now, we want to ensure that $c$ is also small enough to have (iii.b) hold, i.e.\ $j_{m_1}>x_0$. To this end, we invoke two more facts: 
\be{itemize}
\item 
the upper bound 
\beq
\label{eq:x0UB}
 x_0 < m_0 + 0.89 m_0^{1/3}.
\eeq
This is a consequence of the bound
\beqs
    x_0 < m_0 \exp\l(2^{-1/3}a_1' m_0^{-2/3}\r)
\eeqs
from \cite{GatteschiLaforgia}, where again $a_1'\approx 1.018793$, by noting that $m_0\geq 11.5$ implies that the argument of the exponential, call it $y$, satisfies $y<1.59$. On $[0,1.59]$, we can estimate $\exp(y)<1+1.09 y$, as one can verify e.g.\ by plotting and this yields \eqref{eq:x0UB}.

\item the lower bound
\beq
\label{eq:jmLB}
    j_{m_1}> m_1 +1.85 m_1^{1/3}
\eeq
which we obtained from the optimal lower bound proved in \cite{QuWong99} by rounding down. This is better than the bound one can derive from a corresponding result of \cite{GatteschiLaforgia} as we did above. 
\e{itemize}

From \eqref{eq:x0UB} and \eqref{eq:jmLB}, we see that $j_{m_1}>x_0$ will follow from
\beq
\label{eq:modulus4}
	(m_0-c m_0^{1/3})+1.85 (m_0-c m_0^{1/3})^{1/3}	>	m_0+0.89 m_0^{1/3}, 
\eeq
Since $c\leq 1$ and $m_0\geq 11.5$, we have $1-c m_0^{-2/3}>0.8$ and so \eqref{eq:modulus4} is implied by
\beqs
	c < (0.8)^3 * 1.85 - 0.89 =  0.827.
\eeqs
So any choice of $c\in[0.5,0.8]$ will ensure that (iii.a) and (iii.b) hold.

We prove statement (iv). By continuity, it suffices to prove $\curly{J}_{l_0+1/2}^2(x_0)>\curly{J}_{l}^2(x_0)$ for all $l\in\curly{L}_{\ll}$ (which we recall means $l<l_0$ with $j_{l+1/2}\leq x_0$). Assume first that $l_0\geq 11$. Choosing $l_1$ as in statement (iii), (iii.a) states
\beq
\label{eq:pwisebessel2}
\curly{J}_{l_0+1/2}^2(x_0)>M_{l_1+1/2}^2(x_0)
\eeq
and (iii.b) implies that $l_1\in\curly{L}_{\lesssim}$. By the monotonicity of $\nu\mapsto j_\nu$, it holds that $l\in \curly{L}_{\ll}$ implies $l<l_1$. Thus, the definition of $M_\nu$ and statement (i) imply
\beq
\label{eq:chain}
	\curly{J}_{l+1/2}^2\leq M_{l+1/2}^2 \leq M_{l_1+1/2}^2.
\eeq
Together with \eqref{eq:pwisebessel2}, this implies (iv) for $l_0\geq 11$.
Since for $l_0=0,1$ there are no $l\ll l_0$, we may assume $l_0\geq 2$. 
For $2\leq l_0\leq 10$, one can then check by hand that \eqref{eq:pwisebessel2} holds with the choice $l_1=l_0-2$. 
Since  $l_0-1\in\curly{L}_{\lesssim}$, we get that $l\in\curly{L}_\ll$ implies $l\leq l_1$ and so \eqref{eq:chain} applies for all such $l$. 
\e{proof}

\begin{lm}\label{zl}
 For any positive integer $l$, 
\beq
\label{eq:zldefn}
\begin{aligned}
\min\{z>0: \curly{J}_{l-1/2}^2(z)=\curly{J}_{l+3/2}^2(z)\}
= j_{l+1/2,1}'
\end{aligned}
\eeq
and $\curly{J}_{l-1/2},\curly{J}_{l+1/2},\curly{J}_{l+3/2}$ are positive on $(0,j_{l+1/2,1}']$.
\end{lm}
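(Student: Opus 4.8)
The plan is to recognise the equation $\curly{J}_{l-1/2}^2 = \curly{J}_{l+3/2}^2$ as a statement about the critical points of $\curly{J}_{l+1/2}$, via the two classical three-term recurrences for Bessel functions (see e.g.\ \cite{Watson}, \cite{AbramowitzStegun}). Writing $\nu = l+1/2$, these read
\begin{align*}
\curly{J}_{l-1/2}(z) + \curly{J}_{l+3/2}(z) &= \frac{2l+1}{z}\,\curly{J}_{l+1/2}(z),\\
\curly{J}_{l-1/2}(z) - \curly{J}_{l+3/2}(z) &= 2\,\curly{J}_{l+1/2}'(z).
\end{align*}
Multiplying them gives, for $z>0$, the identity $\curly{J}_{l-1/2}^2(z) - \curly{J}_{l+3/2}^2(z) = \frac{2l+1}{z}\,\frac{\d}{\d z}\bigl(\curly{J}_{l+1/2}^2(z)\bigr)$, so $\curly{J}_{l-1/2}^2(z) = \curly{J}_{l+3/2}^2(z)$ holds precisely when $\curly{J}_{l+1/2}(z)\,\curly{J}_{l+1/2}'(z) = 0$, i.e.\ when $z$ is a positive zero of $\curly{J}_{l+1/2}$ or of $\curly{J}_{l+1/2}'$.

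With the first step in hand, I would next locate the smallest such $z$. The positive zeros of $\curly{J}_{l+1/2}'$ and those of $\curly{J}_{l+1/2}$ interlace, with the smallest overall being a zero of the derivative: $0 < j_{l+1/2,1}' < j_{l+1/2,1}$ (since $l\ge 1$, $\curly{J}_{l+1/2}$ rises from $0$ to its first maximum at $j_{l+1/2,1}'$ and then descends to its first zero at $j_{l+1/2,1}$). Hence the least positive solution of $\curly{J}_{l-1/2}^2 = \curly{J}_{l+3/2}^2$ is $j_{l+1/2,1}'$, which is exactly \eqref{eq:zldefn}.

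It remains to establish positivity on $(0, j_{l+1/2,1}']$, which I would do factor by factor. Since $j_{l+1/2,1}' < j_{l+1/2,1}$ and $\curly{J}_{l+1/2} > 0$ on $(0, j_{l+1/2,1})$, it is positive on $(0, j_{l+1/2,1}']$. On the open interval $(0, j_{l+1/2,1}')$ the function $\curly{J}_{l+1/2}$ is strictly increasing, so $\curly{J}_{l+1/2}' > 0$ there, and the recurrence $\curly{J}_{l-1/2} = \curly{J}_{l+1/2}' + \frac{l+1/2}{z}\curly{J}_{l+1/2}$ forces $\curly{J}_{l-1/2} > 0$ on $(0, j_{l+1/2,1}')$; at the endpoint $\curly{J}_{l+1/2}'(j_{l+1/2,1}') = 0$, whence $\curly{J}_{l-1/2}(j_{l+1/2,1}') = \curly{J}_{l+3/2}(j_{l+1/2,1}') = \frac{l+1/2}{j_{l+1/2,1}'}\curly{J}_{l+1/2}(j_{l+1/2,1}') > 0$. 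Finally, by monotonicity of the first positive zero in the order, $j_{l+3/2,1} > j_{l+1/2,1} > j_{l+1/2,1}'$, so $\curly{J}_{l+3/2} > 0$ on $(0, j_{l+3/2,1}) \supset (0, j_{l+1/2,1}']$.

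I do not anticipate a genuine obstacle: the whole argument is driven by the factorization in the first step, and the only care needed is to get the signs in the recurrences right and to invoke the standard facts $j_{\nu,1}' < j_{\nu,1}$ and $j_{\nu,1} < j_{\nu+1,1}$ on the ordering of zeros of Bessel functions and their derivatives.
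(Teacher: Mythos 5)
Your proof is correct, and it uses the same basic engine as the paper — the three‑term recurrences for Bessel functions and the position of the first maximum of $\curly{J}_{l+1/2}$ — but you package the key step more economically. The paper first plugs $z=j_{l+1/2,1}'$ into $\curly{J}_{\nu-1}-\curly{J}_{\nu+1}=2\curly{J}_{\nu}'$ to conclude that the minimum $z_l$ is at most $j_{l+1/2,1}'$, then establishes positivity of $\curly{J}_{l-1/2},\curly{J}_{l+1/2},\curly{J}_{l+3/2}$ on $(0,j_{l+1/2,1}']$ (via interlacing, citing \cite{PalmaiApagyi}), and only after that is it entitled to take square roots and deduce $\curly{J}_{l-1/2}(z_l)=\curly{J}_{l+3/2}(z_l)$ and hence $\curly{J}_{l+1/2}'(z_l)=0$, giving $z_l\geq j_{l+1/2,1}'$. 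By multiplying the two recurrences to get the identity $\curly{J}_{l-1/2}^2-\curly{J}_{l+3/2}^2=\tfrac{2l+1}{z}\,\tfrac{\d}{\d z}\bigl(\curly{J}_{l+1/2}^2\bigr)$, you characterize \emph{all} positive solutions at once as zeros of $\curly{J}_{l+1/2}\curly{J}_{l+1/2}'$ without needing to worry about signs first, and then read off the minimum from $j_{l+1/2,1}'<j_{l+1/2,1}$; the positivity statement becomes a clean independent step at the end. Both arguments are sound; yours decouples the location of the minimum from the positivity claim and also yields slightly more information (a classification of every solution, not just the first). One small remark: the paper's claim that $j_{l+1/2,1}'$ lies to the left of the first zero of $\curly{J}_{l-1/2}$ is a nonobvious interlacing fact, which you sidestep by instead expressing $\curly{J}_{l-1/2}$ via $\curly{J}_{l+1/2}$ and $\curly{J}_{l+1/2}'$ — a minor but genuine simplification.
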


\begin{proof}
We recall the recurrence relation from (9.1.27) in \cite{AbramowitzStegun}, which says that for all $\nu,z>0$,
\beqs
\curly{J}_{\nu-1}(z)-\curly{J}_{\nu+1}(z)=2\curly{J}_{\nu}'(z).
\eeqs
Applying this with $\nu=l+1/2$, $z=j_{l+1/2,1}'$ we obtain $\curly{J}_{l-1/2}(j_{l+1/2,1}')=\curly{J}_{l+3/2}(j_{l+1/2,1}')$ and hence $\leq$ in \eqref{eq:zldefn}. Notice that by the interlacing properties of zeros and extrema of Bessel functions, see e.g.\ \cite{PalmaiApagyi}, $j_{l+1/2,1}'$ is to the left of the first positive zeros of $\curly{J}_{l-1/2},\curly{J}_{l+1/2},\curly{J}_{l+3/2}$. Since Bessel functions are positive before they reach their first positive zero, we conclude that $\curly{J}_{l-1/2},\curly{J}_{l+1/2},\curly{J}_{l+3/2}$ are positive on $(0,j_{l+1/2,1}']$. In particular, $\curly{J}_{l-1/2},\curly{J}_{l+3/2}$ are positive at the left side of \eqref{eq:zldefn}, call it $z_l$, and so we can take square roots to get $\curly{J}_{l-1/2}(z_l)=\curly{J}_{l+3/2}(z_l)$. By the recurrence relation from above, $\curly{J}_{l+3/2}'(z_l)=0$ implying $z_l\geq j_{l+1/2,1}'$, as claimed.
\end{proof}

It remains to give the

\be{proof}[Proof of Theorem \ref{thm:bessel}]
Statement (i) is a direct consequence of Lemmata \ref{lm:last-2} to \ref{lm:last}. 

For statement (ii) we first observe that for any positive integer $l$,
\begin{equation}
\label{eq:besselproof1}
\curly{J}_{l-1/2}^2>\curly{J}_{l+3/2}^2,\quad \text{on } (0,j_{l+1/2,1}').
\end{equation}
In fact, by standard asymptotics, this inequality holds near zero and, according to Lemma \ref{zl}, $j_{l+1/2,1}'$ is the first point of intersection of $\curly{J}_{l-1/2}^2$ and $\curly{J}_{l+3/2}^2$. Therefore the inequality holds on all of $(0,j_{l+1/2,1}')$, as claimed.

We now use the fact that $j_{l+1/2,1}'$ is increasing in $l$ \cite{PalmaiApagyi}. Choose $I'$ to be an open interval containing $j_{l_0+1/2,1}'$ whose closure is contained in $(0,j_{l_0+5/2,1}')$. Then by \eqref{eq:besselproof1} (with $l=l_0+2$) and continuity there is an $\epsilon'>0$ such that
$$
\curly{J}_{l_0+3/2}^2\geq \curly{J}_{l_0+7/2}^2 + \epsilon' \quad \text{on } I'.
$$
Applying \eqref{eq:besselproof1} successively with $l=l_0+4,l_0+6,\ldots$, we conclude that
$$
\curly{J}_{l_0+3/2}^2\geq \sup_{\substack{l\geq l_0+3\\l-l_0 \ \text{odd}}}\curly{J}_{l+1/2}^2 + \epsilon' \quad \text{on } I',
$$
which is one part of the claim. Finally, we want to prove the same inequality with $\curly{J}_{l_0-1/2}^2$ on the left side (with possibly smaller $\epsilon'$ and $I'$). Clearly, \eqref{eq:besselproof1} implies that this is true on $I'\cap (0,j_{l_0+1/2,1}']$. Now use continuity to find $\delta>0$ such that $\curly{J}_{l_0-1/2}^2\geq \curly{J}_{l_0+3/2}^2-\epsilon'/2$ on $[j_{l_0+1/2,1}',j_{l_0+1/2,1}'+\delta]$. Thus,
$$
\curly{J}_{l_0-1/2}^2\geq \curly{J}_{l_0+7/2}^2 + \epsilon'' \quad \text{on } I''
$$
with $\epsilon''=\epsilon'/2$ and $I''=I'\cap (0,j_{l_0+1/2,1}'+\delta)$. As before, \eqref{eq:besselproof1} now implies the inequality in part (ii). This completes the proof.
\e{proof}
\e{appendix}

%\section{Open Problems}
%We close with a short list of open problems that suggest themselves for further study given our results. 
%
%\be{enumerate}
%\item Extending Theorem \ref{thm:mainTI} to systems with weak external fields that vary on the macroscopic scale, the situation originally considered in \cite{FrankHainzlSeiringerSolovej12}. Again the $\|\cdot\|_4$ appearing in the proof appears to be the main difficulty.
%
%\item Extending Theorem \ref{thm:mainTI} to antisymmetric $\al$ in order to describe e.g.\ superfluid helium-3 which has a $p$-wave order parameter. 
%See the discussion in Section \ref{sect:pwave} for the first steps in this direction.
%
%\item Computing the GL energies for certain cylindrical interactions (which occur for actual unconventional superconductors) and consequently describing their minimizers.

%\e{enumerate}

\section*{Acknowledgements}
The authors would like to thank Egor Babaev, Christian Hainzl, Edwin Langmann and Robert Seiringer for helpful discussions. R.L.F.\ was supported by the U.S. National Science Foundation through grants PHY-1347399 and DMS-1363432


\begin{thebibliography}{10}

\bibitem{AbramowitzStegun}
M.~Abramowitz and I.A. Stegun, \emph{{Handbook of mathematical functions with
  formulas, graphs, and mathematical tables }}, 1972.

\bibitem{Askerzade03}
Askerzade, I. N., \emph{Ginzburg–Landau Theory for Two-Band Isotropic $s$-Wave Superconductors}, International Journal of Modern Physics B, \textbf{17}, 3001 (2003)

\bibitem{Aubert13}
G.~Aubert, \emph{{An alternative to Wigner d-matrices for rotating real
  spherical harmonics}}, AIP Advances \textbf{3} (2013), no.~6, --.

\bibitem{BS05}
E.~Babaev and M.~Speight, \emph{Semi-meissner state and neither type-i
  nor type-ii superconductivity in multicomponent superconductors}, Phys. Rev.
  B \textbf{72} (2005), 180502.

\bibitem{BachLiebSolovej94}
Volker Bach, Elliott~H. Lieb, and Jan~Philip Solovej, \emph{{Generalized
  Hartree-Fock theory and the Hubbard model}}, J. Stat. Phys. \textbf{76}
  (1994), no.~1-2, 3--89 .

\bibitem{BCS57}
J.~Bardeen, L.~N. Cooper, and J.~R. Schrieffer, \emph{{Theory of
  Superconductivity}}, Phys. Rev. \textbf{108} (1957), 1175--1204.

\bibitem{Berlinskyetal95}
A.~J. Berlinsky, A.~L. Fetter, M.~Franz, C.~Kallin, and P.~I. Soininen,
  \emph{{Ginzburg-Landau Theory of Vortices in $\mathit{d}$-Wave
  Superconductors}}, Phys. Rev. Lett. \textbf{75} (1995), 2200--2203.

\bibitem{BraunlichHainzlSeiringer13}
G.~Br\"aunlich, C.~Hainzl, and R.~Seiringer,
  \emph{{Translation-invariant quasi-free states for fermionic systems and the
  BCS approximation}}, Rev. math. phys. \textbf{26} (2014),
  no.~07, 1450012.

\bibitem{CBS}
J.~Carlstr\"om, E.~Babaev, and M.~Speight, \emph{Type-1.5
  superconductivity in multiband systems: Effects of interband couplings},
  Phys. Rev. B \textbf{83} (2011), 174509.

\bibitem{CCT}
C.~Cohen-Tannoudji, B.~Diu, and F.~Laloe, \emph{{Quantum Mechanics, Volume 2}},
  Wiley, 1991.

\bibitem{CorreggiRougerie14}
M.~Correggi and N.~Rougerie, \emph{{On the Ginzburg–Landau Functional in the
  Surface Superconductivity Regime}}, Comm.\ math.\ phys.\
  \textbf{332} (2014), no.~3, 1297--1343 .

\bibitem{deGennes}
P.G. de~Gennes, \emph{{Superconductivity of Metals and Alloys}}, Westview
  Press, 1966.

\bibitem{Du99}
Q.~Du, \emph{{Studies of a Ginzburg--Landau Model for d-Wave Superconductors}},
  SIAM Journal on Applied Mathematics \textbf{59} (1999), no.~4, 1225--1250.

\bibitem{FournaisHelffer}
S.~Fournais and B.~Helffer, \emph{Spectral methods in surface
  superconductivity}, Progress in Nonlinear Differential Equations and their
  Applications, 77, Birkh\"auser Boston, 2010.

\bibitem{FrankHainzlNabokoSeiringer07}
R.~L.~Frank, C.~Hainzl, S.~Naboko, and R.~Seiringer,
  \emph{The critical temperature for the BCS equation at weak coupling}, The
  Journal of Geometric Analysis \textbf{17} (2007), no.~4, 559--567.

\bibitem{FrankHainzlSeiringerSolovej12}
R.~L.~Frank, C.~Hainzl, R.~Seiringer, and J.~P.~Solovej,
  \emph{{Microscopic derivation of Ginzburg-Landau theory}}, J. Amer. Math.
  Soc. \textbf{25} (2012), 667--713.

\bibitem{FrankHainzlSeiringerSolovej14}
R.~L.~Frank, C.~Hainzl, R.~Seiringer, and J.~P.~Solovej, \emph{{The external field dependence of the BCS critical
  temperature}}, arXiv:1410.2352 (2014).

\bibitem{GatteschiLaforgia}
L.~Gatteschi and A.~Laforgia, \emph{{Nuove disuguaglianze per il primo zero ed
  il primo massimo della funzione di Bessel $J_{\nu}(x)$}}, Rend. Sem. Mat.
  Univ. Politec. Torino \textbf{34} (1975-76), 411--424.

\bibitem{GL50}
V.L. Ginzburg and L.D. Landau, \emph{On the theory of superconductivity}, Zh.
  Eksp. Teor. Fiz. \textbf{20} (1950), 1064--1082.

\bibitem{Gorkov59}
L.P. Gor'kov, \emph{{Microscopic derivation of the Ginzburg-Landau equations in
  the theory of superconductivity}}, Zh. Eksp. Teor. Fiz. \textbf{36} (1959),
  1918--1923, \emph{English translation} Soviet Phys.\ JETP \textbf{9},
  1364-1367 (1959).

\bibitem{GustafsonSigal}
S.~Gustafson, I.~M. Sigal, and T.~Tzaneteas, \emph{{Statics and dynamics of
  magnetic vortices and of {N}ielsen-{O}lesen ({N}ambu) strings}}, J. Math.
  Phys. \textbf{51} (2010), no.~1, 015217, 16.

\bibitem{HainzlHamzaSeiringerSolovej08}
C.~Hainzl, E.~Hamza, R.~Seiringer, and J.P.~Solovej,
  \emph{{The BCS Functional for General Pair Interactions}}, Comm. math. phys. \textbf{281} (2008), no.~2, 349--367 .

\bibitem{HainzlLewinSeiringer08}
C.~Hainzl, M.~Lewin, and R.~Seiringer, \emph{A nonlinear model
  for relativistic electrons at positive temperature}, Rev. math. phys. \textbf{20} (2008), no.~10, 1283--1307.

\bibitem{HainzlSeiringer08PR}
C.~Hainzl and R.~Seiringer, \emph{{Critical temperature and energy
  gap for the BCS equation}}, Phys. Rev. B \textbf{77} (2008), 184517.

\bibitem{HainzlSeiringer08LMP}
C.~Hainzl and R.~Seiringer, \emph{{The BCS Critical Temperature for Potentials with Negative
  Scattering Length}}, Lett. math. phys. \textbf{84} (2008),
  no.~2-3, 99--107 .

\bibitem{Joynt90}
Robert Joynt, \emph{{Upward curvature of ${\mathit{H}}_{\mathit{c}2}$ in
  high-${\mathit{T}}_{\mathit{c}}$ superconductors: Possible evidence for
  \textit{s} - \textit{d} pairing}}, Phys. Rev. B \textbf{41} (1990),
  4271--4277.

\bibitem{KimPhillips12}
M.~Kim and D.~Phillips, \emph{{Fourfold Symmetric Solutions to the
  Ginzburg Landau Equation for d-Wave Superconductors}}, Comm. math. phys. \textbf{310} (2012), no.~2, 299--328 .

\bibitem{Kirtleyetal95}
J.~R. {Kirtley}, C.~C. {Tsuei}, et~al., \emph{{Symmetry of the order parameter
  in the high-T$_{c}$ superconductor YBa$_{2}$Cu$_{3}$O$_{7- δ}$}}, Nature
  \textbf{373} (1995), 225--228.

\bibitem{Kirtleyetal06}
J.~R. {Kirtley}, C.~C. {Tsuei}, et~al., \emph{{Angle-resolved phase-sensitive determination of the in-plane
  gap symmetry in YBa$_{2}$Cu$_{3}$O$_{7-δ}$}}, Nature Physics \textbf{2}
  (2006), 190--194.

\bibitem{Kunsetal}
K.A. Kuns, A.M. Rey, A.V. Gorshkov, \emph{{$d$-wave superfluidity in optical lattices of ultracold polar molecules}}, Phys. Rev. A \textbf{84}(6)
  (20011), 063639.


\bibitem{Landau00}
L.~J. Landau, \emph{{Bessel functions: Monotonicity and Bounds}}, Journal of
  the London Mathematical Society \textbf{61} (2000), 197--215.

\bibitem{Langmann92}
E.~Langmann, \emph{Fermi-surface harmonics in the theory of the upper critical field}, Phys.\ Rev.\ B, \textbf{46} 14 (1992), 9104--9115

\bibitem{Leggett79}
A.J. Leggett, \emph{Diatomic molecules and cooper pairs}, Modern trends in the
  theory of condensed matter, J. Phys. (1980).
  
  \bibitem{Leggett}
A.J. Leggett, \emph{Quantum Liquids: Bose Condensation and Cooper Pairing in Condensed-Matter Systems}, Oxford Graduate Texts (2006).

\bibitem{LiebLoss}
E.H. Lieb and M.~Loss, \emph{{Analysis}}, 2 ed., AMS, 2001.

\bibitem{MObessel}
Mathoverflow, \emph{{Is a Bessel function larger than all other Bessel functions when
  evaluated at its first maximum?}}, http://mathoverflow.net/a/171413/50891.

\bibitem{PalmaiApagyi}
Tamas Palmai and Barnabas Apagyi, \emph{Interlacing of positive real zeros of
  bessel functions}, J.\ Math.\ Anal.\ Appl.
  \textbf{375} (2011), no.~1, 320 -- 322.

\bibitem{QuWong99}
C.K. Qu and R.~Wong, \emph{{``Best possible'' upper and lower bounds for the
  zeros of the Bessel function $J_\nu(x)$ }}, Trans. Amer. Math. Soc.
  \textbf{351} (1999), 2833--2859.

\bibitem{XuRenTing95}
Yong Ren, Ji-Hai Xu, and C.~S. Ting, \emph{{Ginzburg-Landau Equations and
  Vortex Structure of a ${d}_{{x}^{2}{-y}^{2}}$ Superconductor}}, Phys. Rev.
  Lett. \textbf{74} (1995), 3680--3683.

\bibitem{SandierSerfaty}
E.~Sandier and S.~Serfaty, \emph{{Vortices in the magnetic
  {G}inzburg-{L}andau model}}, Progress in Nonlinear Differential Equations and
  their Applications, 70, Birkh\"auser Boston, 2007.

\bibitem{Senthil08}
T.~Senthil, \emph{Critical fermi surfaces and non-fermi liquid metals}, Phys.
  Rev. B \textbf{78} (2008), 035103.

\bibitem{SB11}
Mihail Silaev and Egor Babaev, \emph{Microscopic theory of type-1.5
  superconductivity in multiband systems}, Phys. Rev. B \textbf{84} (2011),
  094515.

\bibitem{SilaevBabaev12}
Mihail Silaev and Egor Babaev, \emph{Microscopic derivation of two-component ginzburg-landau model
  and conditions of its applicability in two-band systems}, Phys. Rev. B
  \textbf{85} (2012), 134514.

\bibitem{SoininenKallinBerlinsky94}
P.~I. Soininen, C.~Kallin, and A.~J. Berlinsky, \emph{{Structure of a vortex
  line in a ${\mathit{d}}_{{\mathit{x}}^{2}\mathrm{-}{\mathit{y}}^{2}}$
  superconductor}}, Phys. Rev. B \textbf{50} (1994), 13883--13886.

\bibitem{SteinWeiss}
E.M. Stein and G.L. Weiss, \emph{{Introduction to Fourier Analysis on Euclidean
  Spaces}}, Princeton Univ. Press, 1971.

\bibitem{KirtleyTsuei00}
C.~C. Tsuei and J.~R. Kirtley, \emph{Pairing symmetry in cuprate
  superconductors}, Rev. Mod. Phys. \textbf{72} (2000), 969--1016.

\bibitem{Tsueietal97}
C.~C. {Tsuei}, J.~R. {Kirtley}, et~al., \emph{{Pure $d_{x^2-y^2}$
  order-parameter symmetry in the tetragonal superconductor
  TI$_{2}$Ba$_{2}$CuO$_{6+\delta}$}}, Nature \textbf{387} (1997), 481--483.

\bibitem{Volovik93}
G.E. Volovik, \emph{Superconductivity with lines of gap nodes: density of
  states in the vortex}, JETP Lett. \textbf{58} (1993), 469--473.

\bibitem{WangWangLi99}
Q.~H.~Wang, Z.~D. Wang, and Q.~Li, \emph{{Subdominant pairing channels in
  unconventional superconductors: Ginzburg-Landau theory}}, Phys. Rev. B
  \textbf{60} (1999), 15364--15370.

\bibitem{Watson}
G.N. Watson, \emph{{A Treatise on the Theory of Bessel Functions}}, C.U.P., 1995.

\bibitem{Wollmanetal93}
D.~A. Wollman et al., \emph{Experimental determination of the superconducting pairing
  state in ybco from the phase coherence of ybco-pb dc squids}, Phys. Rev.
  Lett. \textbf{71} (1993), 2134--2137.

\bibitem{XuRenTing96}
Ji-Hai Xu, Yong Ren, and Chin-Sen Ting, \emph{{Structures of single vortex and
  vortex lattice in a $d$-wave superconductor}}, Phys. Rev. B \textbf{53}
  (1996), R2991--R2994.

\bibitem{ZhuKimTingHu98}
Jian-Xin Zhu, Wonkee Kim, C.~S. Ting, and Chia-Ren Hu, \emph{{Time-dependent
  Ginzburg-Landau equations for mixed $d-$ and \textit{s} -wave
  superconductors}}, Phys. Rev. B \textbf{58} (1998), 15020--15034.

\end{thebibliography}
\end{document}